\PassOptionsToPackage{dvipsnames}{xcolor}
\documentclass[longbibliography,nofootinbib,aps,prx,superscriptaddress,twocolumn]{revtex4-2}

\usepackage{amsmath,amsfonts,amssymb,amsthm,dcolumn,bm,qcircuit,appendix,mathtools,thmtools,thm-restate,mathrsfs,pgfplots,soul,lipsum,graphicx,braket,enumitem,ragged2e}
\usetikzlibrary{calc}
\usepackage{booktabs}
\usepackage{titlesec}
\usepackage{tikz,tkz-euclide,tikz-3dplot}
\usetikzlibrary{positioning}

\usetikzlibrary{matrix,fit,backgrounds,3d,arrows, decorations.pathreplacing,shapes.geometric,3d, calc}

\def\BibTeX{{\rm B\kern-.05em{\sc i\kern-.025em b}\kern-.08em
    T\kern-.1667em\lower.7ex\hbox{E}\kern-.125emX}}
    
\usepackage[ruled,vlined]{algorithm2e}
\usepackage{algorithmic}

\newtheorem{proposition}{Proposition}
\newtheorem{theorem}{Theorem}
\newtheorem{lemma}{Lemma}

\newtheorem{remark}{Remark}
\newtheorem{definition}{Definition}
\newtheorem{method}{Algorithm}

\DeclareMathOperator{\Tr}{Tr}

\newcommand{\Ibb}{\mathbb{I}}

\newcommand{\Rbb}{\mathbb{R}}

\newcommand{\Zbb}{\mathbb{Z}}
\newcommand{\Scal}{\mathcal{S}}
\newcommand{\norm}[1]{\left\lVert#1\right\rVert}

\titleformat{\paragraph}[runin]
{\normalfont\bfseries}
{}
{0em}
{}
\usepackage{xcolor}
\usepackage{hyperref}

\hypersetup{
	colorlinks = true,
	linkcolor = [rgb]{0.70,0.13,0.13},
	citecolor = [rgb]{0.13,0.55,0.13},
	urlcolor = [rgb]{0.25, 0.41, 0.88}}

\begin{document}
\title{Quantum Topological Data Analysis Beyond Betti Numbers: \\ Complexity Hardness $\&$ An Algorithm for Torsion Witness }

\author{Nhat A. Nghiem}
\email{{nhatanh.nghiemvu@stonybrook.edu}}
\email{ {anhnvn@ueh.edu.vn}} 
\affiliation{C. N. Yang Institute for Theoretical Physics, State University of New York at Stony Brook, Stony Brook, NY 11794-3840, USA}
\affiliation{Institute of Applied Mathematics, University of Economics Ho Chi Minh City}

\author{Dominic W. Berry}
\affiliation{School of Mathematical and Physical Sciences, Macquarie University, Sydney, NSW 2109, Australia}

\author{Trung V. Phan}
\thanks{Corresponding author}
\email{tphan@natsci.claremont.edu}
\affiliation{Department of Natural Sciences, Scripps and Pitzer Colleges, \\ Claremont Colleges Consortium, Claremont, CA 91711, United States}

\begin{abstract}
Recent advances have discovered an interesting interplay between quantum computing and topological data analysis (TDA). Most existing work on quantum TDA has focused on Betti numbers, which characterize the intrinsic connectivity and ``holes'' in a dataset. However, homology contains additional information beyond Betti numbers, called \textit{torsion}, where a nontrivial cycle becomes trivial after being repeated a certain number of times, revealing global constraints on how cycles can combine and wrap around one another. Aside from demonstrated application in biomolecular studies, torsion also appears in a few physical settings, such as homological quantum rotor codes, discrete charges, and gauge sectors in gauge theory. In this work, we investigate the torsion structure from both classical and quantum computing perspectives. Given the graph $G$ as input and let $K = \text{Cl}(G)$ be the clique complex, we first prove that for a given $r$, determining if the $r$-th homology group $H_r(K,\Zbb)$ contains $p$-torsion, or more formally, an invariant factor with order divisible by a fixed prime $p$, is $\rm NP$-hard. As a corollary, if a homological rotor code is specified by a graph $G$, then deciding whether the code possess a finite-dimensional logical sector with a certain order is $\rm NP$-hard. We also discuss further consequences in a few computationally related problems, such as Bockstein homomorphism, Smith normal form, lattice saturation and cohomology. Second, letting $P$ be a finite set of prime numbers, we develop a quantum algorithm, which functions as a one-sided torsion witness. For a given $r$, the algorithm outputs two possible outcomes: WITNESS or INCONCLUSIVE. If it is the former, then either  $H_r(K,\Zbb)$ or $H_{r-1}(K,\Zbb)$ has a $p$-torsion for some $p \in P$; otherwise, it is inconclusive about the presence or absence of a $p$-torsion. We then discuss the regime in which a near-quadratic quantum speedup is possible, compared to the corresponding classical algorithm under the same input model. Our results, especially the $\rm NP$-hardness of determining $p$-torsions, has complemented recent hardness results on the estimation of Betti numbers, adding another complexity-theoretic result to the existing quantum TDA literature. Together, these results have shown that integral homology, as a whole, is computationally challenging.


\end{abstract}

\maketitle

\section{Introduction}

{\color{black}Building upon quantum theory, one of the greatest scientific achievements of the twentieth century, quantum computation has emerged as a powerful new computing paradigm.} Considerable effort has been devoted to exploring its potential, leading to many landmark algorithms and results \cite{shor1999polynomial, grover1996fast, deutsch1985quantum, deutsch1992rapid, harrow2009quantum, aharonov2003adiabatic, lloyd1996universal, lloyd2013quantum, lloyd2020quantum, berry2007efficient, berry2012black, berry2014high, berry2015hamiltonian, berry2015simulating, berry2017quantum, childs2017quantum, mitarai2018quantum}. More recently, there has been a surge of interest in applying quantum computation to topological data analysis (TDA), {\color{black}which is} a relatively young but rapidly growing field that uses tools from algebraic topology to study large-scale, complex data. Although TDA offers powerful insights, its methods can be computationally expensive, motivating the search for faster alternatives. Quantum computing, with its ability to manipulate high-dimensional states efficiently, has been viewed as a promising route to overcoming some of these computational challenges.

The interplay between quantum computing and TDA has already produced several interesting results. Lloyd, Garnerone, and Zanardi \cite{lloyd2016quantum} introduced the first quantum algorithm (often called the LGZ algorithm) for estimating Betti numbers, one of the central invariants in TDA. Subsequent works \cite{berry2024analyzing, ubaru2021quantum, mcardle2022streamlined} improved the LGZ framework in various ways. Hayakawa et al.\ \cite{hayakawa2022quantum} proposed a quantum method for estimating persistent Betti numbers, a more refined invariant. The work of \cite{mcardle2022streamlined} proposes a more efficient encoding strategy of simplexes into quantum state, which reduces the qubits required. Berry et al.\ \cite{berry2024analyzing} introduce an optimized version of \cite{lloyd2016quantum}, which yields a more efficient quantum algorithm for estimating Betti numbers. Particularly, they also construct a specific example of complex where the Betti numbers are large, and thus quantum exponential speedup can be possible. In parallel, others have examined the complexity-theoretic aspects of Betti number computation \cite{schmidhuber2022complexity, crichigno2024clique}, revealing that estimating Betti numbers is $\rm NP$-hard and $\rm QMA_1$-hard, thus ruling out the generically efficient solution.

\begin{figure*}
    \centering
    \includegraphics[width=0.8\linewidth]{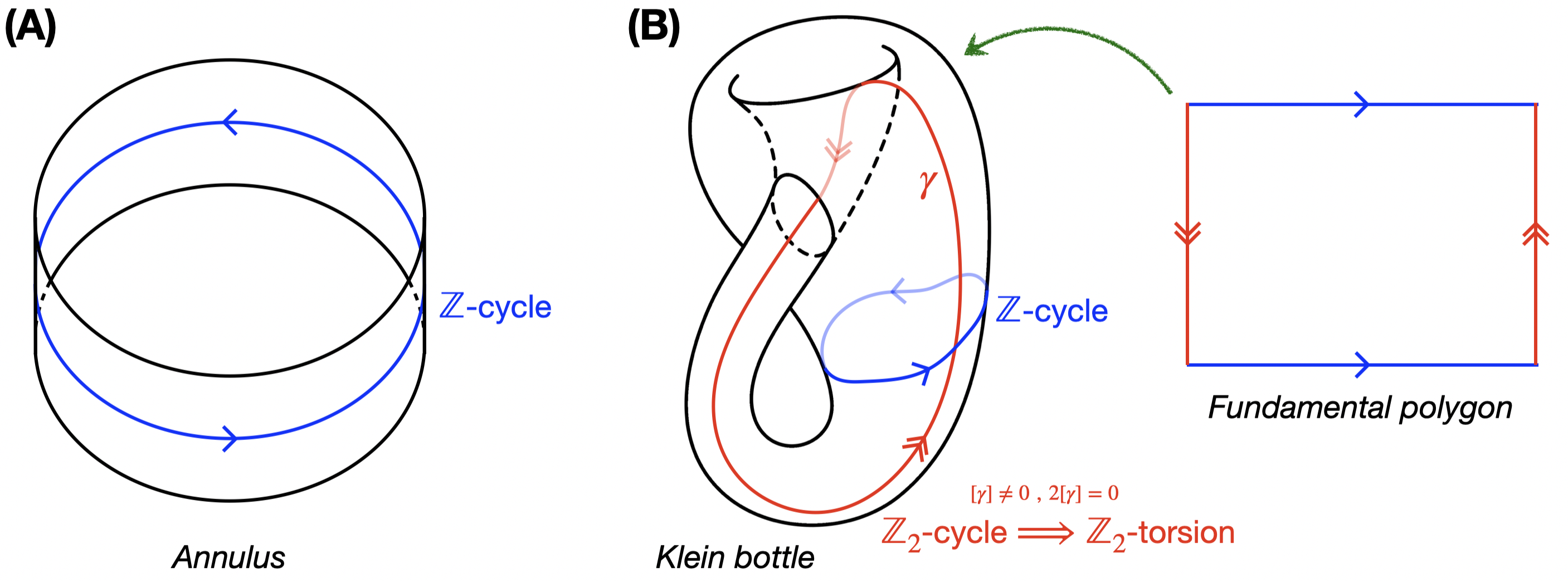}
    \caption{\textbf{Torsion-free and torsion-bearing manifolds.} Torsion is detected in the homology classes of closed cycles. \textbf{(A)} An annulus with a noncontractible $\mathbb{Z}$-cycle. \textbf{(B)} A Klein bottle with a $\mathbb{Z}$-cycle and a $\mathbb{Z}_2$-cycle $\gamma$. Although $\gamma$ is homologically nontrivial, traversing it twice yields a cycle homologous to zero. The fundamental polygon on the right shows the edge identifications responsible for this $\mathbb{Z}_2$-torsion.}
    \label{fig01}
\end{figure*}

While most of these works focus on Betti numbers, they capture only {\color{black}a} part of the {\color{black}available} topological information. For a given simplicial complex, the Betti numbers measure the rank of the free part of the homology group. However, homology also contains another component: torsion\footnote{In this work, torsion refers to homological torsion, not differential geometric torsion \cite{nakahara2018geometry}.}\cite{hatcher2002algebraic}. {\color{black}Intuitively, Betti numbers count independent cycles for which no finite number of copies can form a boundary, whereas torsion detects cycles that are nontrivial individually but become boundaries when finitely many copies are combined.} {\color{black}For example, the annulus and the Klein bottle have the same Betti numbers, but the Klein bottle has $\mathbb{Z}_2$-torsion in its first homology group, whereas the annulus is torsion-free \cite{hatcher2002algebraic} (see Fig.~\ref{fig01}).}

{\color{black}Torsion can provide application-relevant information about finite symmetries, periodic identifications, orientation-reversing structures, and constrained higher-order interactions. Its presence and order can help characterize molecular and crystallographic configuration spaces, neural and higher-order networks, and genomic and RNA organization. Several recent studies have demonstrated this potential, such as $\mathbb{Z}_3$-torsion appears in a directed flag tournaplex of the \textit{C.\ elegans} directed synaptic network \cite{govc2020computing}, while weighted torsion has been used to capture recombination and population structure in viral genomes \cite{barrett2023arithmetic} and higher-order loop interactions in RNA secondary structures \cite{bura2021weighted}.}

Together, the free part and the torsion part give a more complete characterization of a topological space. {\color{black}Nevertheless, most existing quantum TDA algorithms work over real field $\Rbb$, and therefore do not recover the full integral torsion structure.} Readers interested in a more formal discussion of this issue are referred to Appendix~\ref{sec: roleofcoefficients}, which explains how the choice of coefficients can obscure torsion.

Motivated by this gap, our work explores the interplay between quantum computation, classical computation and torsion structure. Specifically, given a simplicial complex $K$, we are interested in seeking what sort of properties that the torsion of the homology group $H_r(K,\Zbb)$ of $K$ might have. Our complexity-hardness proof and algorithm for torsion detection draw on several mathematical tools. The key theoretical foundations are classical results from algebra, including the structure theorem for finitely generated Abelian groups and the universal coefficient theorem, both of which imply that torsion is reflected in the dimensions of certain homology groups. Thus, our strategy reduces to estimating the dimensions of these groups—a task for which we propose a quantum procedure. Our method builds on recent advances in state preparation, block-encoding, and the quantum singular value transformation (QSVT) framework. We show that, with suitable input, a quantum computer can detect torsion quadratically faster than known classical algorithms. In particular, we are able to prove certain complexity-theoretic hardness result related to the torsion structure of $H_r(K,\Zbb)$, for which our proof relies on the well-known Kunneth formula for reduced homology. 

We note that the work of \cite{nghiem2025towards} was the first to provide a quantum algorithm to detect torsion in the input model where an explicit set of simplices is given instead of the oracle that verifies the existence of simplices. The algorithm there is based on the same classical rank sketching method, which is similar to us. However, our quantum algorithm is different in many ways, for example, we use quantum arithmetic more often, meanwhile, the work \cite{nghiem2025towards} relies on the block-encoding of the boundary operator and preparation of certain states based on \cite{zhang2022quantum}. More importantly, the analysis given in \cite{nghiem2025towards} is not completed, as the hardware resource required in that algorithm is proportional to the number of simplices. 


\section{An Overview of Quantum Topological Data Analysis and Our Contribution}
\label{sec: surveyqtda}
\subsection{An overview of existing progress}
Quantum algorithms for topological data analysis began with the work in \cite{lloyd2016quantum}, where the authors considered the problem of estimating the so-called Betti numbers of a given simplicial complex. Their algorithm contains the following key components as well as related assumptions.
\paragraph{Encoding simplexes.} Let the simplicial complex of interest $K$ have $N$ data points, denoted $v_0,v_1,v_2,...,v_N$. Then a $r$-simplex $\sigma_r$ $\in K$ is encoded into a $N$-qubit string of Hamming weight $r+1$ $\ket{\sigma_r}$ as follows: if $v_j \in \sigma_r$ then the value of $j$-th bit is 1. 

\paragraph{Verification oracle $O_r$.} For any $r$ ($1 \leq r \leq N$), there is an oracle $O_r$ that acts as follows:
    \begin{align}
    O_r \ket{0} \ket{\sigma_r} = \begin{cases}
        \ket{1} \ket{\sigma_r} \text{\ if $\sigma_r \in $ K  }\\
        \ket{0} \ket{\sigma_r} \text{ otherwise} 
    \end{cases}
\end{align}
In \cite{lloyd2016quantum}, the authors suggest to use build this oracle from quantum random access memory which holds the pairwise distance between data points. A more realistic construction of this oracle, based on graph knowledge of $K$ plus Toffili gates, is given in \cite{berry2024analyzing}. With the above inputs, LGZ algorithm combines several well-known ones, including Grover's search algorithm (multi-solution version), quantum simulation, and quantum phase estimation, to construct a quantum procedure that estimates the $r$-th normalized Betti number $\frac{\beta_r}{|\Scal_r|}$ (where we remind that $\Scal_r$ is the set of $r$-simplexes in $K$), up to some additive accuracy. 

The LGZ algorithm has ignited many subsequent attempts in many directions. The work in \cite{ubaru2021quantum} proposed to replace Grover's search subroutine within the LGZ algorithm by a method called rejection sampling. In addition, they also proposed replacing the quantum phase estimation subroutine with the stochastic rank estimation method, which was first introduced in \cite{ubaru2016fast,ubaru2017fast}. These two replacements resulted in a certain improvement over the LGZ algorithm, and as claimed in \cite{ubaru2021quantum}, it is very suitable for the near-term era. Recent work \cite{berry2024analyzing} provides an in-depth analysis of quantum advantage in estimating Betti numbers. In particular, they introduced new ways to replace Grover's search algorithm (with Dicke state preparation) and amplitude estimation (using Kaiser window). They also provided specific type of graphs/complexes for which exponential quantum speedup is possible. 

In addition to the two above, the work in \cite{hayakawa2022quantum, mcardle2022streamlined} also features another progress. Instead of estimating Betti numbers (as in \cite{lloyd2016quantum,berry2024analyzing, ubaru2021quantum}), the authors in \cite{hayakawa2022quantum, mcardle2022streamlined} focus on persistent Betti numbers and develop the corresponding quantum algorithms. To elaborate, the Betti numbers capture the number of holes in a given complex. Persistent Betti numbers, on the other hand, capture the number of holes that survive over different \textit{filtration}. To get more insight about filtration, one can imagine that a simplicial complex is built on simplexes, which are again built on data points and pairwise connectivity. Filtration roughly means denser and denser connectivity. 

Aside from the algorithmic aspect, quantum TDA is also attracting attention from the complexity-theoretic aspect. In particular, the work in \cite{schmidhuber2022complexity} has shown that, provided pairwise connectivity, computing Betti numbers is $\# \rm P$-hard, while estimating them is $\rm NP$-hard. A similar result obtained in \cite{crichigno2024clique}, showing that estimating Betti numbers $\rm QMA_1$-hard, while the counting version $\rm \#BQP$-hard. These results have implied a strong barrier on achieving quantum exponential advantage in TDA. However, the work \cite{gyurik2022towards} has shown that a closely related problem to the estimating normalized Betti numbers is classically hard, which leaves a certain hope for quantum TDA. 

\subsection{Motivation $\&$ our contribution}
\label{sec: ourcontribution}

\paragraph{Some motivations.} The aforementioned works have featured a significant amount of interest in the interplay between quantum computing and TDA. Here, we make a relevant, yet fundamentally different, progress toward this exciting direction. Instead of seeking the Betti numbers of a given complex, we are curious about the \textit{torsion structure} of such a complex. To elaborate on this concept, we recall that earlier we mentioned that a $r$-chain is formed by taking a (formal) linear combination of $r$-simplexes. The collection of these chains forms the $r$-chain group/space $C_r$. If the coefficients of the linear combination are drawn from real field $\Rbb$, then $C_r$ behaves as a linear vector space, for which the Betti numbers $\beta_r$ can be estimated by probing the spectrum of corresponding Laplacian operators. In particular, we remark that all the works mentioned in the previous section are based on homology \textit{over $\Rbb$} (see also Appendix \ref{sec: reviewofalgebraictopology}). If instead of $\Rbb$, the coefficients are drawn from the integer $\Zbb$, which is a ring, then it leads to a distinct separation. 

First, over $\Zbb$, $C_r$ is no longer a vector space. Rather, it is an Abelian group (or $\Zbb$-module). Second, because of this, $\partial_r$ is no longer a linear mapping between vector spaces, but a homomorphism between groups. Third, as a result, $H_r(K)$ is no longer a space but an Abelian group. For a topological space or chain complex \(X\), the integral homology group admits the decomposition
\begin{align}
\begin{split}
H_r(X;\mathbb Z)
&\cong
\mathbb Z^{\beta_r(X)}
\bigoplus
\operatorname{Tor}H_r(K,\mathbb Z), \\
\operatorname{Tor}H_r(K,\mathbb Z)
&\cong
\bigoplus_{i=1}\mathbb Z_{r_i}.
\end{split}
\end{align}
where $r_i \mid r_{i+1}$ refers to the divisibility relation. The first part $\Zbb^{\beta_r} $ is called \textit{free part}, while the second part $ \left( \bigoplus_i Z_{r_i} \right)$ is \textit{torsion part}. While the free rank \(\beta_r(K)\) is detected by homology in a field of characteristic zero, e.g., $\Rbb$, the finite cyclic factors \(\mathbb Z_{r_i}\) are lost after tensoring with \(\mathbb R\) or \(\mathbb C\). Indeed, there is a well-known property that (see, e.g., Appendix \ref{sec: reviewalgebra} for more details):

$$
H_r(K;\mathbb Z)\otimes \mathbb R
\cong
\mathbb R^{\beta_r(K)},
\qquad
\mathbb Z_{r_i}\otimes\mathbb R=0.
$$
Torsion therefore contains discrete topological information that is fundamentally different from that encoded by Betti numbers. As examples of topological space with torsion, there are real projective space, Klein bottle, Moore space. At the same time, sphere, torus, complex projective space, feature torsion-free space.  We refer the interested readers to the Appendix \ref{sec: reviewalgebra} for a more formal review of abstract algebra as well as a more detailed torsion structure of the aforementioned space. 

As mentioned in the introduction, integral torsion have been applied to the real-world contexts and yielded certain success. Beside these real-world applications, the role of torsion in physics is equally appealing to us.  A particularly direct example arises in homological quantum rotor codes \cite{vuillot2024homological}. In these constructions, the logical degrees of freedom are determined by the integral homology of an underlying chain complex. If the relevant homology group has the form
$$
H_r(K;\mathbb Z)
\cong
\mathbb Z^{\beta_r}
\oplus
\mathbb Z_{r_1}\oplus\cdots\oplus\mathbb Z_{r_T},
$$
the free factors correspond to rotor-like logical degrees of freedom, whereas a torsion factor \(\mathbb Z_{r_i}\) gives rise to a finite-dimensional logical subsystem of dimension \(r_i\). In particular,
$$
\mathbb Z_p\subseteq \operatorname{Tor}H_r( K ;\mathbb Z)
$$
for a prime \(p\) is associated with a finite logical sector carrying a \(p\)-level discrete degree of freedom. More generally, if some invariant factor satisfies \(p\mid r_i\), then the corresponding torsion sector contains an element of order \(p\). Thus the mathematical problem of deciding whether \(H_k(X;\mathbb Z)\) contains \(p\)-torsion can acquire a direct quantum-information interpretation as determining whether the associated homological system supports a nontrivial finite logical sector of order divisible by \(p\).

Torsion also naturally describes discrete charges and gauge sectors in gauge theory and string compactifications \cite{mayrhofer2015discrete}. Let \([\Sigma]\in H_r(K;\mathbb Z)\) be a torsion cycle of order \(p\), so that
$$
p[\Sigma]=0,
\qquad
[\Sigma]\neq 0.
$$
Equivalently, there exists a \((k+1)\)-chain \(B\) satisfying
$$
\partial B=p\Sigma,
$$
even though \(\Sigma\) itself is not a boundary. A physical object, such as a brane, wrapped on \(\Sigma\) can consequently carry a charge that is conserved only modulo \(p\). Schematically,
$$
[\Sigma]\in \mathbb Z_p
\qquad\Longrightarrow\qquad
Q\in\mathbb Z_p,
$$
which is the characteristic structure of a discrete \(\mathbb Z_p\) charge or gauge sector. In compactifications of gauge and string theories, torsion subgroups of the homology or cohomology of the compactification manifold can therefore encode discrete gauge symmetries and finite charge sectors that are invisible to ordinary real-valued homology.

Another closely related example occurs in Abelian Chern--Simons theory on a closed three-manifold \(M\) \cite{guadagnini2014path}. Writing
$$
H_1(M;\mathbb Z)
\cong
\mathbb Z^{b_1(M)}\oplus T,
\qquad
T=\operatorname{Tor}H_1(M;\mathbb Z),
$$
the free and torsion parts lead to qualitatively different sectors of the theory. In particular, the torsion subgroup carries the canonical torsion linking pairing
$$
\lambda:T\times T\longrightarrow \mathbb Q/\mathbb Z,
$$
which enters the Chern--Simons action and the associated topological invariants. For example, when
$$
T\cong \mathbb Z_p,
$$
the theory contains a finite set of topologically distinct torsion sectors labelled by elements of \(\mathbb Z_p\), and their mutual topological information is encoded through the linking form. Thus detecting \(p\)-torsion amounts, in this setting, to determining whether a corresponding finite topological sector exists.

These examples are complementary to the familiar relation between homology and supersymmetric quantum mechanics. For a supersymmetric system whose supercharge realizes a differential \(Q\), zero-energy states satisfy
$$
Q|\psi\rangle=Q^\dagger|\psi\rangle=0,
$$
and, by the Hodge correspondence, the number of zero-energy states in degree \(k\) is
$$
\dim H^k(X;\mathbb C)=\beta_k(X).
$$
Because
$$
\operatorname{Tor}H_k(X;\mathbb Z)\otimes\mathbb C=0,
$$
such a complex-valued formulation detects only the free part of integral homology and is blind to torsion. Torsion therefore captures a different kind of physically relevant topology: rather than counting continuous or free homological degrees of freedom, it characterizes finite logical subsystems, discrete charges, and finite topological sectors.

\paragraph{Our contribution and implication.} Aside from the real-world applications mentioned in the introduction, these physical manifestations provide additional motivation for studying torsion from a computational perspective. Define $p$-torsion as a summand within $ \bigoplus_i Z_{r_i}$ whose order $r_i$ is divisible by $p$. More mathematically precise, we say that $(H_r(K;\mathbb Z))$ contains $p$-torsion if, in its decomposition $H_r(K;\mathbb Z)\cong \mathbb Z^{\beta_r}  \bigoplus_i \mathbb Z_{r_i}$, there exists at least one torsion summand $\mathbb Z_{r_i}$ such that $p\mid r_i$. In this work, we are particularly interested in the following general aspect: 
\begin{center}
    \textbf{For all $r=0,1,2,...,n-1$, what can be the structure of torsion part of $H_r(K, \Zbb)$, or equivalently, what properties can those factors $\{r_i\}$ have ? For example, does $H_r(K, \Zbb)$ it contain $p$-torsion ?}
\end{center}
In physical realizations in which these cyclic factors label finite logical, gauge, or topological sectors, the same mathematical question determines whether a corresponding discrete sector is present. Consequently, establishing the computational hardness of \(p\)-torsion detection is not only a statement about the complexity of integral homology, but can also be viewed as evidence that identifying discrete topological structure in such physical systems may itself constitute a nontrivial computational task.


To this end, we have sufficient information to state our main results. We specifically tackle the question above from both complexity-theoretic perspective and algorithmic perspective. The results are summarized in the following theorems. 
\begin{theorem}[Complexity-theoretic hardness of $p$-torsion]
\label{thm: complexityhardness}
    Given a graph $G$ as input and let $K = \text{Cl}(G)$ be the clique complex built from $G$, for a fixed prime $p$, determining whether $H_r(K,\Zbb)$ contains $p$-torsion  is $\rm NP$-hard.
\end{theorem}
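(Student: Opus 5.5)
We reduce from an $\rm NP$-hard problem on clique complexes whose answer is already encoded in (reduced) homology, and then use the K\"unneth formula for reduced homology of joins to move that information into the torsion part. The natural source is the hardness result of \cite{schmidhuber2022complexity} (see also \cite{crichigno2024clique}): given a graph $G'$ and an integer $r'$, deciding whether $\tilde H_{r'}(\text{Cl}(G'))$ is nontrivial is $\rm NP$-hard. That reduction comes from clique-dense constructions built as joins of small pieces, and the resulting homology classes are free. Let $G'$ be such an instance. Fix a small graph $M_p$ whose clique complex $\text{Cl}(M_p)$ triangulates a mod-$p$ Moore space. Concretely, take a flag triangulation of the 2-complex obtained by attaching a disk to $S^1$ along a degree-$p$ map, so that $\tilde H_1(\text{Cl}(M_p);\Zbb)\cong \Zbb_p$ and all other reduced homology vanishes. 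After barycentric subdivision this complex is flag, and its size depends only on $p$, so it is a constant.

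The key construction is the graph join $G = G' * M_p$. Because the clique complex of a join of graphs is the simplicial join of the clique complexes, we have $K=\text{Cl}(G)=\text{Cl}(G')*\text{Cl}(M_p)$. The K\"unneth formula for joins gives
\begin{align}
\tilde H_{n+1}(A*B)\cong \bigoplus_{i+j=n}\tilde H_i(A)\otimes \tilde H_j(B)\;\oplus\;\bigoplus_{i+j=n-1}\operatorname{Tor}\big(\tilde H_i(A),\tilde H_j(B)\big).
\end{align}
Take $B=\text{Cl}(M_p)$, whose only nonzero reduced group is $\tilde H_1=\Zbb_p$. Then
\begin{align}
\tilde H_{n+1}(K)\cong \big(\tilde H_{n-1}(\text{Cl}(G'))\otimes\Zbb_p\big)\oplus \operatorname{Tor}\big(\tilde H_{n-2}(\text{Cl}(G')),\Zbb_p\big).
\end{align}
Both summands are $p$-groups. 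For $H_r(K)$ with $r=r'+2$ (here $n-1=r'$), the first summand is $\tilde H_{r'}(\text{Cl}(G'))\otimes \Zbb_p$. This is nonzero whenever $\tilde H_{r'}(\text{Cl}(G'))$ has a free summand or any $p$-divisible torsion. If the hard instances have free homology, $p$-torsion in $H_{r'+2}(K)$ holds if and only if $\tilde H_{r'}(\text{Cl}(G'))\neq 0$, provided the Tor term from $\tilde H_{r'-1}(\text{Cl}(G'))$ vanishes.

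The main obstacle is controlling these side terms: the Tor term, and the possibility that the source homology is nonzero yet consists entirely of torsion prime to $p$. To handle this, I would use the specific structure of the $\rm NP$-hard family in \cite{schmidhuber2022complexity}. There the complexes are joins of copies of discrete point sets or spheres, coming from the independent-set/clique encoding, so by repeated application of K\"unneth their integral homology is free and concentrated in a single degree. For such complexes the Tor term is zero, and tensoring with $\Zbb_p$ is faithful. I would need to verify this freeness carefully, or alternatively reprove the source hardness with a construction whose homology is transparently free, such as joins of $0$-spheres and point sets. Finally, I would check that the reduction is polynomial: $|V(G)|=|V(G')|+O_p(1)$, and the join adds only polynomially many edges. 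I would also note that reduced and unreduced homology agree for $r\ge 1$, which completes the reduction.
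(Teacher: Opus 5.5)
Your architecture is the same as the paper's. You join the source clique complex $L=\text{Cl}(G')$ with a flag triangulation of the Moore space $M(\Zbb/p,1)$, using the fact that the graph join realizes the simplicial join. You then apply the K\"unneth formula for joins to get $\widetilde H_{r'+2}(K)\cong\big(\widetilde H_{r'}(L)\otimes\Zbb_p\big)\oplus\operatorname{Tor}\big(\widetilde H_{r'-1}(L),\Zbb_p\big)$. Your degree bookkeeping is correct, and so is your remark that this group is killed by $p$.

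\textbf{The gap.} The step the whole reduction rests on is left open: that the NP-hard source instances have torsion-free integral homology in degrees $r'$ and $r'-1$. The justification you sketch cannot be right. Suppose the hard instances really were joins of discrete point sets and spheres. Then iterated K\"unneth shows their reduced homology is free, concentrated in a single degree, with rank a product of trivially computable factors. Deciding $\beta_{r'}>0$ would then be easy, so such a family cannot be NP-hard unless $\mathrm{P}=\mathrm{NP}$. The same objection rules out the concrete fallback you name, reproving hardness with joins of $0$-spheres and point sets. Joining with spheres only shifts degrees (and may make instances clique-dense); it cannot be the source of hardness, which must live in the non-join core. As written, nothing excludes $p$-torsion in $\widetilde H_{r'}(L)$ or $\widetilde H_{r'-1}(L)$. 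Such torsion would produce $p$-torsion in $H_{r'+2}(K;\Zbb)$ even when $\beta_{r'}(L)=0$, i.e.\ false positives.

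\textbf{The missing ingredient.} The paper supplies it by noting that the hardness of deciding $\beta_r(\text{Cl}(G))>0$ from \cite{schmidhuber2022complexity} already holds for co-chordal $G$. Then $\text{Cl}(G)=\text{Ind}(\bar G)$ with $\bar G$ chordal. By \cite{adamaszek2015note}, the independence complex of a chordal graph is contractible or homotopy equivalent to a wedge of spheres, so every reduced integral homology group of $L$ is free. This kills the Tor term in all degrees. It also gives $\widetilde H_{r'}(L)\otimes\Zbb_p\cong\Zbb_p^{\beta_{r'}(L)}$, so $H_{r'+2}(K;\Zbb)$ has $p$-torsion iff $\beta_{r'}(L)>0$.

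A useful way to see exactly what is needed: by the universal coefficient theorem, your displayed group is precisely $\widetilde H_{r'}(L;\mathbb{F}_p)$. So the reduction shows hardness of $p$-torsion detection given NP-hardness of deciding whether mod-$p$ homology of a clique complex is nonzero. Torsion-freeness of the co-chordal family is what transfers the rational-Betti hardness to that mod-$p$ statement.
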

This complexity-theoretic hardness result directly implies that in the context of homological quantum rotor codes, given a homological rotor code specified by a graph $G$, deciding whether the code possess a finite-dimensional logical sector with an order divisible by a fixed prime $p$ is $\rm NP$-hard. This provides another connection between a physical problem and computational complexity-theoretic theory, thus broadening the interplay between quantum computing and TDA as a whole. We recall that before this, the result of \cite{crichigno2024clique} also provides a connection between a classical computational problem and the theory of quantum computational complexity.

Our second result is the quantum algorithm, which can work as a one-sided torsion witness, that can dissect if there is a homology group within $\{H_r(K,\Zbb) \}_{r=0}^{n-1}$ that contains certain property: 
\begin{theorem}[One-sided torsion witness]
\label{thm: torsionwitness}
    Let $P$ be a set of prime numbers and $p_{\max} := \max \{ p\}_{p\in P}$. Provided the membership-oracle that can verify the inclusion of simplexes in $K$, there is a quantum algorithm that, with a certain success probability, outputs WITNESS or INCONCLUSIVE. If it is WITNESS, then either $H_r(K,\Zbb)$ or $H_{r-1}(K,\Zbb)$ contains a $p$-torsion for some $p \in P$. Otherwise if it is INCONCLUSIVE, then there is no assertion about the absence or presence of torsion in $H_r(K,\Zbb), H_{r-1}(K,\Zbb)$. The complexity of this algorithm is 
    $$ \mathcal{\tilde{O}}\Big( |P|p^2_{\max}  \sqrt{\binom{n}{r+1}} \text{poly}(n) \Big)$$
    where $\mathcal{\tilde{O}}(.)$ hides the (poly)logarithmic factors. 
\end{theorem}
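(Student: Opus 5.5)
The approach is to reduce torsion detection to a comparison of ranks of the boundary map $\partial_r$ over $\mathbb{Q}$ and over $\mathbb{F}_p$, and then estimate each rank with a quantum stochastic-rank procedure that uses the membership oracle $O_r$.

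\textbf{Algebraic witness.} The key identity is
\[
\dim_{\mathbb{F}_p} H_r(K;\mathbb{F}_p) = \beta_r + t_p(H_r) + t_p(H_{r-1}),
\]
where $t_p(H)$ counts invariant factors divisible by $p$. It follows from the universal coefficient theorem, $H_r(K;\mathbb{F}_p)\cong H_r(K;\mathbb{Z})\otimes\mathbb{F}_p \oplus \operatorname{Tor}(H_{r-1}(K;\mathbb{Z}),\mathbb{F}_p)$, together with the structure theorem. Write $d_r=\binom{n}{r+1}$ for the number of candidate $r$-simplices. Then $\dim H_r = |\Scal_r| - \operatorname{rank}\partial_r - \operatorname{rank}\partial_{r+1}$. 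The difference of the mod-$p$ and rational dimensions therefore reduces to rank drops:
\[
\Delta_p = \big(\operatorname{rank}_{\mathbb{Q}}\partial_r-\operatorname{rank}_{\mathbb{F}_p}\partial_r\big)+\big(\operatorname{rank}_{\mathbb{Q}}\partial_{r+1}-\operatorname{rank}_{\mathbb{F}_p}\partial_{r+1}\big)\ge 0.
\]
By Smith normal form, $\operatorname{rank}_{\mathbb{Q}}\partial_r-\operatorname{rank}_{\mathbb{F}_p}\partial_r$ is exactly the number of invariant factors of $\partial_r$ divisible by $p$, which equals $t_p(H_{r-1})$. This is why the conclusion involves $H_r$ or $H_{r-1}$. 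For simplicity I would test only the rank drop of a single map, $\partial_r$ or $\partial_{r+1}$. A certified strict drop for $\partial_{r+1}$ certifies $p$-torsion in $H_r$, and a drop for $\partial_r$ certifies it in $H_{r-1}$.

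\textbf{One-sidedness.} Over $\mathbb{F}_p$, a rank lower bound can be certified, but an upper bound cannot without exact computation. So I would use a sketch. Pick random matrices $S,T$ over a field, so that $\operatorname{rank}(S\partial T)=\operatorname{rank}\partial$ with good probability (Schwartz--Zippel), and never exceeds it. Comparing a sketched $\mathbb{F}_p$ rank with a rational rank lower bound can only err toward missing a drop, provided the rational rank is computed exactly and the mod-$p$ rank is obtained from the same sketch reduced mod $p$. I would arrange this by computing the sketched matrix $M=S\partial T$ over $\mathbb{Z}$ with small integer entries, and then comparing $\operatorname{rank}_{\mathbb{Q}}M$ against $\operatorname{rank}_{\mathbb{F}_p}(M \bmod p)$. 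Any strict drop in $M$ lifts to a genuine invariant factor of $\partial$ divisible by $p$, because reduction can only lower rank, and a drop in the sketch implies a drop in $\partial$ when the rational sketch is rank-preserving. This is the soundness direction. If the rational sketch loses rank, we merely fail to witness, so the output is INCONCLUSIVE, never a false WITNESS.

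\textbf{Quantum part.} Each entry of $M$ has the form $\sum_{\sigma} S_{i\sigma}\,\partial_{\sigma\tau}\,T_{\tau j}$, a sum over the $d_r$ candidate simplices, with membership checked by $O_r$. I would compute these sums with quantum counting or amplitude estimation, implementing $\partial$ entries by quantum arithmetic on Hamming-weight strings in $\mathrm{poly}(n)$ gates. This costs $\tilde O(\sqrt{d_r})$ per entry, and the estimates must be exact enough to round, which is where the $\mathrm{poly}(n)$ and $p$-dependent factors enter. Computing residues mod $p$ adds about $p$ phases per entry. The $p^2$ factor then plausibly comes from entry magnitude and precision. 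Repeating over $p\in P$ gives the factor $|P|$. Sketch dimensions are taken polynomial in $n$, so that a polynomial-size rank bound suffices in the regime considered. The final rank comparisons on $M$ are classical.

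\textbf{Main obstacle.} The hardest step is getting exact integer sums from amplitude estimation within $\sqrt{d_r}$ cost, since exact counting generally needs $\sqrt{d_r\cdot\text{count}}$ queries. I would first bound the entries by choosing $\pm1$ sketch entries supported on few rows, so the sums are small, and then use exact or rounded counting with success boosted by a median. A failed estimate must never produce a false WITNESS. To guarantee this I would verify the reported drop with an independent repetition, so soundness holds with high probability.
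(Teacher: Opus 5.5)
Your witness is a different route from the paper's. You compare $\operatorname{rank}_{\mathbb{Q}}$ with $\operatorname{rank}_{\mathbb{F}_p}$ of a single boundary map. The paper only checks whether $\operatorname{rank}_{\mathbb{F}_p}\partial_r+\operatorname{rank}_{\mathbb{F}_p}\partial_{r+1}$ changes across the primes in $P$. Your Smith-normal-form argument is correct and gives a sharper test: there is no cancellation between primes, and it tells you whether the torsion is in $H_r$ or $H_{r-1}$.

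\textbf{Soundness is argued in the wrong direction.} A sketch can only lower rank, so $\operatorname{rank}_{\mathbb{F}_p}(M\bmod p)\le\operatorname{rank}_{\mathbb{F}_p}\partial$. A drop $\operatorname{rank}_{\mathbb{F}_p}(M\bmod p)<\operatorname{rank}_{\mathbb{Q}}M$ therefore transfers to $\partial$ only if the \emph{mod-$p$} sketch preserves rank. Rational rank preservation gives completeness, not soundness. A drop in $M$ need not lift to an invariant factor of $\partial$, because $S$ and $T$ can carry their own factors of $p$. Take $\partial=T=I_2$, $S=\begin{pmatrix}1&1\\1&-1\end{pmatrix}$ and $p=2$: the rational sketch is exact, yet $M\bmod 2$ has rank $1$.

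Your choice of $\pm1$ entries makes this systematic. Modulo $2$, $S$ becomes the all-ones matrix, so $\operatorname{rank}_{\mathbb{F}_2}(M\bmod 2)\le 1$. Every torsion-free complex with $\operatorname{rank}\partial\ge 2$ then yields a false WITNESS at $p=2$. Sketches supported on few candidate simplices will typically miss most of $\Scal_r$, so they do not preserve rank over either field.

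The missing ingredient is the paper's rank-sketching guarantee. Draw $U,V$ uniformly (or $\epsilon$-biased) from $\mathbb{F}_p$ with $t\ge\alpha+\lceil\log_p(2/\Delta)\rceil$ (or the $\epsilon$-biased analogue). Then $\operatorname{rank}_{\mathbb{F}_p}(U\partial V)=\operatorname{rank}_{\mathbb{F}_p}\partial$ with probability at least $1-\Delta$. Soundness then holds with high probability, as the theorem says, not deterministically.

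\textbf{The complexity bound is not established.} Recovering the integer $M_{ij}$ exactly by amplitude estimation costs more as the nonzero terms grow in number and size. A rank-preserving dense sketch has about $r|\Scal_r|$ such terms, each of size up to $p^2$. The net value of the sum does not matter, so $\pm1$ cancellations do not help, and making the sketch sparse destroys rank preservation.

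The paper handles this as follows:
\begin{itemize}
\item It amplifies to the uniform superposition over $\Scal_r$ at cost $\sqrt{\binom{n}{r+1}/|\Scal_r|}\,\mathrm{poly}(n)$.
\item It loads $(-1)^kV_{\sigma_r,j}U_{i,\sigma_r\setminus v_k}/p^2$ into an amplitude.
\item It estimates the overlap $M_{ij}/(p^2r|\Scal_r|)$ to precision $1/(2p^2r|\Scal_r|)$ and rounds.
\end{itemize}
This costs $p^2r\sqrt{\binom{n}{r+1}|\Scal_r|}\,\mathrm{poly}(n)$ per entry, which is where $p_{\max}^2$ comes from. The stated $p_{\max}^2\sqrt{\binom{n}{r+1}}\,\mathrm{poly}(n)$ form holds only when $|\Scal_r|$, $t$ and $|P|$ are polynomial in $n$. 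Your plan never makes that assumption, and without it a per-entry cost of $\tilde{O}(\sqrt{\binom{n}{r+1}})$ with exact rounding does not follow.
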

Under the same input model, we will show that the corresponding classical running time is 
$$ \mathcal{O}\left( |P|  \binom{n}{r+1} \text{poly}(n) \log^3 \frac{1}{\Delta}  \right)$$
Thus, a nearly quadratic quantum speedup in $\binom{n}{r+1}$ is obtained.


\section{ Complexity-hardness }
\label{sec: complexityhardness}

\subsection{NP-hardness of detecting $p$-torsion}
The proof of hardness of computing torsion factors relies on the following reduction. First, we recall that it was established in \cite{schmidhuber2022complexity} that estimating Betti numbers is $\rm NP$-hard. Our strategy is to construct a complex $K'$ such that its certain torsion factor corresponds to some Betti number of the given complex $K$. Therefore, the $\rm NP$-hardness of the latter implies the hardness of the former. Before proceeding, we note that some of the subsequent graph-theoretic notions and terminologies are summarized in the Appendix \ref{sec: graphtheory}, intended for those readers who are not familiar with. An overview of abstract algebra and related recipes can be found in \ref{sec: reviewalgebra}. Throughout the following, we use $\widetilde{H}$ to denotes the reduced homology.

\paragraph{$\rm NP$-hardness of 2-torsion. }
Let $K$ be the complex of interest, which is built from the underlying graph $G$, i.e., $K = \text{Cl } (G)$. The proof strategy comprises the following ideas:
\begin{itemize}
    \item For a generic graph $G$, it was proved in \cite{schmidhuber2022complexity} that determining whether $\beta_r(K) \equiv \beta_r( \text{Cl } (G) ) > 0 $ is $\rm NP$-hard. This holds even when $G$ is a co-chordal graph. Because $G$ is co-chordal, its complement $\bar{G}$ is chordal, and thus by definition, we have:
    \begin{align}
        \text{Cl } (G) = \text{Ind } (\bar{G})
    \end{align}
    \item In \cite{adamaszek2015note}, it was proved that the independence complex of any chordal graph is homotopy equivalent to either a point or a wedge of spheres. Therefore, for the chordal graph $\bar{G}$, its independence complex is exactly $K \equiv \text{Cl } (G)$, and we have:
    \begin{align}
       K = \text{Cl }(G) \cong S^{d_1} \vee S^{d_2} \vee \cdots \vee S^{d_m}
    \end{align}
    where $d_1,d_2,..., d_m \in \mathbb{Z}$ denotes the dimension of the corresponding spheres. Since the right-hand side is torsion-free, it implies that $\text{Cl } (G)$ is also torsion-free, and thus:
    \begin{align}
        \widetilde{H}_r\big(  K, \Zbb  \big) \cong \Zbb^{\beta_r( K)}
    \end{align}
    \item Next, we consider $\Rbb \mathbb{P}^2$ and take any flag/clique complex triangulation $P$ of $\Rbb \mathbb{P}^2$ \footnote{This is guaranteed to exist, see e.g., \cite{davis2012geometry, hatcher2002algebraic}. More concretely, can first start with any triangulation (which might not be a flag complex) and takes the barycentric subdivision of such triangulation. The resulting complex is a flag complex and homeomorphic to the original complex, thus preserving the homology. }. Let $R$ denotes the graph underlying this triangulation, then almost by definition, we have $P= \text{Cl} (R)$. Because $P$ is a complex corresponding to the triangulation of $\Rbb \mathbb{P}^2$, it is well-known that:
    \begin{align}
        \widetilde{H}_i(P, \Zbb) \cong \begin{cases}
            \Zbb/2\Zbb , i = 1\\
            0, i \neq 1
        \end{cases}
    \end{align}
    \item Next, we form the following graph join:
    \begin{align}
        G'  = G \vee R
    \end{align}
    which is done by keeping all the vertices, edges of $G,R$, while connecting every vertex of  $G$ to every vertex of $R$. As a result, the complex 
    \begin{align}
        \text{Cl }(G') = \text{Cl } (G) * \text{Cl }(R) = K * P
    \end{align}
    which then means that all the simplices in $K,P$ are hold, while every $0$-simplex of $K$ is now connected to every $0$-simplex of $P$ and include further those simplices formed by these connections. 
    \item The next recipe is the consequence of the so-called (reduced) Kunneth homology theorem (which can be found in standard text \cite{hatcher2002algebraic, davis2012geometry}):
    \begin{align}
        \widetilde{H}_{r+2}  ( K * P, \Zbb) \cong \widetilde{H}_r(K,\Zbb) \otimes_\Zbb \Zbb/2\Zbb
        \label{eqn: kunneth}
    \end{align}
    where $\otimes_\Zbb$ is the tensor product of $\Zbb$-module (see Appendix \ref{sec: reviewalgebra}). A more detailed description of the Kunneth homology theorem, its reduced form and derivation of this formula shall be given in the Appendix \ref{sec: kunneth}. 
    \item Given that $ \widetilde{H}_r\big(  K, \Zbb  \big) \cong \Zbb^{\beta_r( K)} $, the formula above implies:
    \begin{align}
         \widetilde{H}_{r+2}  ( K * P, \Zbb) \cong  (\Zbb/2\Zbb)^{\beta_r(K)}
    \end{align}
\end{itemize}
Therefore, if $\beta_r(K) > 0$, then $ \widetilde{H}_{r+2}  ( K * P, \Zbb) $  contains torsion of order $2$, or $2$-torsion. Equivalently, if there is an algorithm that can detect and return the $2$-torsion in polynomial time, then by the polynomial reduction above, this algorithm can also reveal whether $\beta_r(K) > 0$ in polynomial time.  Given that this problem is $\rm NP$-hard, then as a natural consequence, detecting $2$-torsion is also $\rm NP$-hard.

\paragraph{Generalization to $p$-torsion for prime $p$.} This can be done by simply replacing $\Rbb\mathbb{P}^2$ to Moore space $M(\Zbb/p,1)$, as the integral homology of this space is:
\begin{align}
\begin{cases}
    \widetilde{H}_i\big( M(\Zbb/p,1), \Zbb   \big) \cong \Zbb/p\Zbb, i=1 \\
    \widetilde{H}_i\big( M(\Zbb/p,1), \Zbb   \big) = 0 , i\neq 1
\end{cases}
\end{align}
With this change, the whole construction can proceed similar to before, except that $P$ now is the flag triangulation of the Moore space. Because $\Zbb/2\Zbb$ is now replaced by $\Zbb/p\Zbb$, so we end up having:
\begin{align}
      \widetilde{H}_{r+2}  ( K * P, \Zbb) \cong  (\Zbb/p\Zbb)^{\beta_r(K)}
\end{align}
This leads to the same conclusion that detecting $p$-torsion is also $\rm NP$-hard.

\subsection{Some further consequences and extensions }
Earlier, we pointed out how torsion appears in a few physical contexts like homological rotor code and gauge theory. We also discussed some implication of the hardness of detecting torsion. Here, we expand the scope of our hardness result to a few related problems. 

A natural consequence of the above reduction is that the hardness result extends well beyond the direct detection of torsion and yields hardness results for several equivalent algebraic and topological formulations. 

\paragraph{Bockstein homomorphism.} First, $p$-torsion can be characterized through the \emph{Bockstein homomorphism} associated with the short exact sequence
\[
0 \longrightarrow \mathbb{Z}
\xrightarrow{\times p}
\mathbb{Z}
\longrightarrow
\mathbb{F}_p
\longrightarrow 0.
\]
The induced connecting map
\[
\beta_p :
H_{r+1}(K,\mathbb{F}_p)
\longrightarrow
H_r(K, \mathbb{Z})
\]
has image equal to the subgroup of $H_r(K, \mathbb{Z})$ annihilated by $p$; hence $H_r(K;\mathbb{Z})$ contains a torsion summand whose order is divisible by $p$ if and only if $\beta_p$ is nonzero. Our reduction therefore also establishes hardness of deciding whether the relevant Bockstein homomorphism is nontrivial. More concretely, we have:
\begin{center}
    \textit{It is $\rm NP$-hard to decide whether the Bockstein homomorphism is nonzero.  }
\end{center}

\paragraph{Smith normal form.} The same phenomenon admits an equivalent formulation in terms of the \emph{Smith normal form} of the boundary matrix $\partial_{r+1}$. If its nonzero invariant factors are $d_1,\ldots,d_r$, then $p$-torsion occurs precisely when
\[
p \mid d_i
\]
for at least one $i$. Equivalently,
\[
\#\{i : p \mid d_i\}
=
\operatorname{rank}_{\mathbb{Q}} \partial_{r+1}
-
\operatorname{rank}_{\mathbb{F}_p} \partial_{r+1},
\]
so detecting a characteristic-dependent rank drop of a succinctly represented boundary matrix is likewise hard. Equivalently, 
\begin{center}
    \textit{It is $\rm NP$-hard to decide whether reducing an integer matrix modulo $p$ increases its rank. }
\end{center}

\paragraph{Lattice saturation.} From the lattice viewpoint, writing
\[
Z_r = \ker \partial_r,
\qquad
B_r = \operatorname{im} \partial_{r+1},
\]
the existence of $p$-torsion in
\[
H_r(K , \mathbb{Z}) = Z_r/B_r
\]
is equivalent to the failure of $B_r$ to be $p$-saturated in $Z_r$, i.e., to the existence of
\[
z \in Z_r \setminus B_r
\]
such that
\[
pz \in B_r.
\]
Thus, the reduction also yields hardness of testing $p$-saturation of simplicial boundary lattices:
\begin{center}
    \textit{Testing $p$-saturation of a succinct simplicial boundary lattice is $\rm NP$-hard.}
\end{center}

\paragraph{Extension to cohomology.}
By the universal coefficient theorem for cohomology,
\[
\operatorname{Tor} H^{r+1}(K , \mathbb{Z})
\cong
\operatorname{Tor} H_r(K , \mathbb{Z}),
\]
so the same construction immediately transfers the hardness result to the detection of $p$-torsion in integral cohomology. We have the following corollary
\begin{center}
    \textit{$p$-torsion detection in integral cohomology is $\rm NP$-hard.}
\end{center}


\section{Quantum algorithm for detecting torsion}
\label{sec: quantumalgorithm}

\subsection{Mathematical insight}
Let $H_r(Q)$ to denote the $r$-th homology group $H_r$ defined over $Q$, which can be a field (like real field $\Rbb$, complex field $\mathbb{C}$) or a ring (like $\Zbb$). The first recipe we use is the well-known result within algebra, called \textit{universal coefficient theorem}, which shows how the change in the choice of coefficients induces the change in the algebraic structure of the homology group:
\begin{align}
    H_r(Q) \cong H_r(\Zbb) \otimes_{\Zbb} Q  \oplus \rm Tor \left( H_{r-1} (\Zbb), Q  \right)
\end{align}
where the last term $\rm Tor \left( H_{r-1} (\Zbb), \Rbb   \right) $ is the \textit{torsion product functor} (see further Appendix \ref{sec: roleofcoefficients} for more details). If we choose $Q = \mathbb{F}_p \equiv \Zbb/p\Zbb$ for prime $p$, then we can turn $\mathbb{F}_p$ into a finite field. As it is a field, the chain group $C_r$ becomes a vector space, and thus the homology group $H_r(\mathbb{F}_p)$ becomes a vector space. In the Appendix \ref{sec: roleofcoefficients} and \ref{sec: keyinsight}, using the above universal coefficient theorem and related properties of torsion product functor, we will prove the following:
\begin{align}
\rm dim \left( H_r(\mathbb{F}_p) \right) = \beta_r + t_r(p) + t_{r-1}(p) 
\end{align}
where $t_r(p), t_{r-1}(p)$ is the total amount of the cyclic summands (of $r$-th homology group/space and $(r-1)$-th homology group/space, respectively) of order divisible by a prime $p$. At the same time, if we choose $Q= \Rbb$ as usual, then $H_r(\Rbb) \cong \Zbb^{\beta_r}$ and thus $\dim H_r(\Rbb) = \beta_r$, which is exactly the $r$-th Betti number. We remark that in the finite field case, the dimension of homology groups are generally not equal to the dimension of the kernel of Laplacian. Rather, it is computed by the following formula:
\begin{align}
   \dim H_r(\mathbb{F}_p)  &= |S_r^K| - \text{rank }_{\mathbb{F}_p} \partial_{r}- \text{rank }_{\mathbb{F}_p} \partial_{r+1}
   \label{5}
\end{align}
Therefore, we need to compute, or more precisely, to estimate the rank of $\partial_r,\partial_{r+1}$ in the finite field $\mathbb{F}_p$. In this case, the computation needs to be done modulo $p$. 

In order to dissect torsion, we compute the dimension of $H_r(\mathbb{F}_p)$ over different values of $p$. Since $\beta_r$ is unchanged, then the change in the $ \dim H_r(\mathbb{F}_p) $ implies that either $t_r(p),t_{r-1}(p)$ changes. Things can be simpler by noting that as $|\Scal_r|$ is unchanged over different finite field $\mathbb{F}_p$'s, it suffices to examine whether the sum of the rank of $\partial_r, \partial_{r+1}$ changes. The change thus reveals at least either of $t_r(p),t_{r-1}(p)$ changes for different $p$ and thus being nontrivial. By repeating for all $r=0,1,2,...,n-1$, we can determine if there is at least one homology group among all $\{H_r(K,\Zbb)\}_{r=0}^{n-1}$ contain torsion. We note one thing that if $\dim H_r(\mathbb{F}_p)$ changes, then it imply the change in either $t_r(p),t_{r-1}(p)$ and thus either of them are nontrivial, which means that either $H_r(K,\Zbb), H_{r-1}(K,\Zbb)$ contains nontrivial torsion. However, we remark that the converse is not true. Even when $\dim H_r(\mathbb{F}_p)$ does not change, it does not mean that $H_r(K,\Zbb), H_{r-1}(K,\Zbb)$ are trivial. For example, it can be that both $t_r(p),t_{r-1}(p)$ changes at different p's but their summation remains the same. Therefore, our quantum algorithm below, which is built on this insight, is functioning as a one-sided torsion witness. 

\subsection{Key recipes}
Below, we summarize the key recipes and tools that we would need for our subsequent outline of the quantum algorithm. More details can be found in the corresponding appendices. 
\paragraph{Input access model.} Similar to previous works, we assume the input oracle that can verify the existence, or inclusion of the simplexes:
 \begin{align}
    O_r \ket{0} \ket{\sigma_r} = \begin{cases}
        \ket{1} \ket{\sigma_r} & \text{\ if $\sigma_r \in $ K  }\\
        \ket{0} \ket{\sigma_r} & \text{ otherwise} 
    \end{cases}
\end{align}

\paragraph{Block-encoding.} A unitary $U$ is said to be a $(\alpha, a, \epsilon)$-encoding of a matrix $A$ if 
$$\left[\bra{0}^{\otimes a} \otimes \Ibb\right] U \left[\ket{0}^{\otimes a}\otimes \Ibb\right] = \frac{\Tilde{A}}{\alpha}$$ 
and $\|\Tilde{A}-A \|_{o} \leq \epsilon$ ($\|.\|_o$ denotes operator norm). Equivalently, in the matrix form
$$
U = \begin{pmatrix}
    \frac{\tilde{A}}{\alpha} & \cdot \\
    \cdot & \cdot
\end{pmatrix}.
$$

\paragraph{Block-encoding of the boundary operator $\partial_r$.} The $(n, a, 0)$-block-encoding of $\partial_r$ can be constructed with a quantum circuit using $\mathcal{O}(n )$ 1- and 2-qubit gates and $\mathcal{O}(1)$ calls to $O_r^K$ (where $a=\mathcal{O}(1)$). This is a result of \cite{berry2024analyzing}. From this block-encoding, we can use the method of \cite{camps2020approximate} to obtain the $(n,a+3,0)$-block-encoding of $\partial_r \otimes \Ibb_3$ using $3$ more ancilla qubits.

\paragraph{Preparing the Dicke state $\frac{1}{\sqrt{\binom{n}{r+1}}} \sum_{|\sigma_r| = r+1  } \ket{\sigma_r} \ket{0}$.} This has also been constructed in \cite{berry2024analyzing} using a circuit of gate complexity $\mathcal{O}(r n)$. Ref.~\cite{khattar2025verifiable} also gives a method for preparing Dicke state with the same gate complexity but using much less ancilla qubits. 

\paragraph{Preparing the state $\frac{1}{\sqrt{|\Scal_r|}} \sum_{\sigma_r \in K } \ket{\sigma_r} \ket{1}$}. This is a consequence of the amplitude amplification. Since we have the oracle $O_r$ acting as $ O_r^K \ket{\sigma_r} = \ket{0/1}$ depending on whether $\sigma_r \in K$, the method in \cite{gilyen2019quantum} can be used to amplify the initial state $\frac{1}{\sqrt{\binom{n}{r+1}}} \sum_{|\sigma_r| = r+1  } \ket{\sigma_r} \ket{0} $ to (approximately) the desired state. The complexity of this amplitude amplification is $\mathcal{O}\big( \sqrt{\frac{\binom{n}{r+1}}{|\Scal_r|}} rn  \big)$. However, first we need to use the amplitude estimation to estimate the ratio $\frac{|\Scal_r|}{\binom{n}{r+1}}$, which has the complexity $\mathcal{O}(\frac{1}{\delta'} \log \frac{1}{\eta})$ for a $\delta'$-approximation additive error and $\eta$ is the failure probability. 

For subsequent use, we define $\{ l(\sigma_r) \}$ to be the natural numbers $\in \Zbb$ corresponding to the set of $n$-bits string with Hamming weight $r+1$. $l(\sigma_r)$ has value falling between $0$ and $\binom{n}{r+1}-1$. 
\paragraph{Preparing the state $\frac{1}{\sqrt{|\Scal_r|}} \sum_{\sigma_r \in K, l(\sigma_r) } \ket{\sigma_r} \ket{1}\ket{l(\sigma_r)}$.}  This can be found in Appendix 7.1 of \cite{gunn2019review}. Specifically, they have shown that by leveraging the combinatorial-number-system rank, precomputed Pascal-triangle table and binary search, one can build a quantum circuit, denoted by $U_l$ of complexity $\mathcal{O}(rn)$ that can convert a string of Hamming weight $r+1$ to a natural number, e.g. $\ket{\sigma_r}\ket{0}^{\otimes \log \binom{n}{r+1}} \longrightarrow \ket{\sigma_r}\ket{l(\sigma_r)}$. Therefore, using this circuit on the superposition
$$ \frac{1}{\sqrt{|\Scal_r|}} \sum_{\sigma_r \in K, l(\sigma_r) } \ket{\sigma_r} \ket{1}\ket{0}^{\log \binom{n}{r+1}}  $$
then we can obtain the desired state.

\paragraph{$\epsilon$-biased distribution.} A distribution $\mathscr{D}$ is called $\epsilon$-biased distribution if elements $u_1,u_2,...,u_n \in \mathbb{F}_p$ drawn from $\mathscr{D}$ satisfy the following:
    \begin{align}
    \Big|  \text{Pr}_{u \sim \mathscr{D}} [ u^T y = 0 ]  - \frac{1}{p} \Big| \leq (1- \frac{1}{p})\epsilon
\end{align}
where $u= (u_1,u_2,...,u_n)^T$ and $y \neq \textbf{0}$ with its entries $\in \mathbb{F}_p$. Equivalently:
\begin{align}
    \text{Pr}_{u \sim \mathscr{D}} [ u^T y = 0] \leq \frac{1}{p} + (1- \frac{1}{p})\epsilon
\end{align}

\paragraph{Classical randomized algorithm for estimating rank (Appendix \ref{sec: randomizedrankestimation}).} As mentioned above, we need to find the rank of $\partial_r,\partial_{r+1}$ in $\mathbb{F}_p$. A classical deterministic algorithm would just diagonalize this matrix (e.g., Gaussian elimination modulo $p$) and find the rank in a straightforward manner. However, most existing quantum primitives are based on linear algebra over $\mathbb{C}$, and thus not suitable to handle the finite-field estimation of the rank. Instead, we will quantize the following classical randomized algorithm for rank estimation introduced in \cite{kaltofen1991wiedemann, eberly2017black}. 
\begin{itemize}
    \item Let $t \in \Zbb_+$ and $V,U$ be matrices of size $|\Scal_r| \times t, t \times |\Scal_{r-1}|$ with entries $\in \mathbb{F}_p$ drawn from the $\epsilon$-biased distribution. 
    \item Form the matrix $ M \equiv U\partial_r V$ (modulo $p$). The entry $ M_{ij}$ of this matrix is: 
    \begin{align}
       M_{ij}=  U_i^T \partial_r V_j  (\text{modulo } p)
    \end{align}
    where $U_i$ is the $i$-th column of $U$ and $V_j$ is the $j$-th column of $V$. Let $\text{rank}(\partial_r), \text{rank}(M)$ denote the rank of $\partial_r$ and $M \equiv U \partial_r V$, respectively.
    \item If 
    $$ t \geq \frac{\log \big( 2 (p^{\text{rank}(\partial_r)}-1) \frac{1}{\Delta} \big) }{ \log \big(  \frac{p}{1+(p-1)\epsilon} \big) }  $$
    then the following holds:
    \begin{align}
        \text{Prob} \left[  \text{rank}(M) =  \text{rank}(\partial_r) \right] \geq 1- \Delta
    \end{align}
\end{itemize}
Our subsequent quantum algorithm will estimate the entries of matrix $M$ using quantum techniques and then use a classical computer to diagonalize $M$ (modulo $p$) to find the rank. A more detailed clarification and proof of the correctness of the algorithm is given in Appendix \ref{sec: randomizedrankestimation}. Here we note that this method is most efficient when the matrix of interest has low rank, because the value of $t$, as can be seen above, can be shown to be $> \rm rank(\partial_r)$. If there is a promise about the nontrivial upper bound, say $\alpha$ ($< |\Scal_r|$, because if $\alpha = |\Scal_r|$ then the rank of $\partial_r < |\Scal_r|$ which is trivial), of rank $\partial_r$. Then we can choose $t$ to be:
$$ t \geq \frac{\log \big( 2 (p^{\alpha}-1) \frac{1}{\Delta} \big) }{ \log \big(  \frac{p}{1+(p-1)\epsilon} \big) } $$
If $\alpha \ll |\Scal_r|$ (or equivalently, the rank of $\partial_r$ is $\ll |\Scal_r|$) then this algorithm achieves its best scaling as in this case $t$ can be negligible compared to the dimension. In the worst case where there is no promise on the nontrivial upper bound $\alpha$, then the value of $t$ can be high and this probabilistic algorithm may not offer speedup compared to standard Gaussian elimination. Regardless, this procedure allows us to leverage quantum computing methods to execute, whereas the standard Gaussian elimination (modulo $p$) does not.

\paragraph{Another formula for $M_{ij}$.} There is an important property about $M_{ij}$ that would be useful for our subsequent construction:
\begin{align}
M_{ij} = U_i^T \partial_r V_j = \sum_{\sigma_r \in \Scal_r} \sum_{k=0}^r (-1)^k U_{i, \sigma_r \backslash v_k } V_{\sigma_r, j}
\end{align}
(modulo $p$) which comes from the following expansion: $ V_j = \sum_{\sigma_r \in \Scal_r} V_{\sigma_r,j} \ket{\sigma_r}, U_i = \sum_{\sigma_{r-1} \in \Scal_{r-1}} U_{\sigma_r,i} \ket{\sigma_{r-1}} $, and the property of boundary operator:
\begin{align}
    \partial_r \ket{\sigma_r} = \sum_{k=0, v_k \in \sigma_r}^r (-1)^k \ket{\sigma_r \backslash v_k}
\end{align}

As the next recipe, we need to construct the unitary, called $U_\sigma$, that achieves the following transformation:
\begin{align}
    U_\sigma \ket{\sigma_r}\ket{k} \ket{0}^{\otimes n} = \ket{\sigma_r}\ket{k} \ket{\sigma_r \backslash v_k}
\end{align}
This is given in the Appendix C of \cite{hayakawa2022quantum}, which has gate complexity $\mathcal{O}(n^2)$.

\paragraph{Preparing the entries of $\epsilon$-biased distribution.} As can be seen above, one of the key steps is to estimate the entry $M_{ij}$. As such, we need a means to prepare the state containing the columns $U_i,V_j$'s. The entries of these vectors are drawn from $\epsilon$-biased distribution. To this end, we state the following lemma and defer the proof to the Appendix \ref{sec: samplingfrombiaseddistribution}:
\begin{lemma}[Preparing states with entries drawn from $\epsilon$-biased distribution]
\label{lemma: preparingstateepsilondistribution}
    Let $m = \mathcal{O}\big( \log_p \frac{N}{\epsilon}\big)$ for some $N \in \Zbb_+$ and $\epsilon$ being a constant. Then there is a reversible quantum circuit $U_e$ of complexity 
    $$ \mathcal{O}\big( \log (N) m^2 \log^2 p  \big)$$
    that acts as follows:
    \begin{align}
        U_e \frac{1}{\sqrt{N}} \sum_{l=0}^{N-1} \ket{i} \ket{00...00}  =\frac{1}{\sqrt{N}} \sum_{l=0}^{N-1} \ket{i)} \ket{v_i}
    \end{align}
    where $v_0,v_1,...,v_{N-1}$ are $\lceil \log p\rceil$-bit string drawn from $\epsilon$-biased distribution.
\end{lemma}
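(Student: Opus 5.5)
We would prove Lemma~\ref{lemma: preparingstateepsilondistribution} by quantizing a standard explicit construction of a small-bias sample space over $\mathbb{F}_p$. The natural choice is the ``powering'' construction of Alon--Goldreich--H{\aa}stad--Peralta, generalized to $\mathbb{F}_p$. Fix an irreducible polynomial of degree $m$ over $\mathbb{F}_p$, so that we can work in $\mathbb{F}_{p^m}$. A seed consists of two field elements $x,y\in\mathbb{F}_{p^m}$, sampled uniformly once and classically. The $i$-th coordinate is then defined as
\begin{align}
v_i := \Tr_{\mathbb{F}_{p^m}/\mathbb{F}_p}\!\left( x^{i}\, y\right),
\end{align}
or equivalently as the inner product of the coordinate vectors of $x^i$ and $y$.

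For correctness we would bound the bias directly. Take a nonzero $a\in\mathbb{F}_p^N$. Then $\sum_i a_i v_i = \Tr\big(y\, f_a(x)\big)$, where $f_a(z)=\sum_i a_i z^i$ is a nonzero polynomial of degree less than $N$. If $f_a(x)\neq 0$, then as $y$ ranges uniformly over $\mathbb{F}_{p^m}$ the trace is exactly uniform on $\mathbb{F}_p$, so it equals zero with probability $1/p$. The event $f_a(x)=0$ has probability at most $(N-1)/p^m$. This gives
\begin{align}
\Big|\Pr[a^Tv=0]-\tfrac1p\Big|\le \Big(1-\tfrac1p\Big)\frac{N}{p^m}.
\end{align}
Choosing $m=\lceil\log_p(N/\epsilon)\rceil$ makes the bias at most $(1-1/p)\epsilon$, which matches the paper's definition and the stated $m=\mathcal{O}(\log_p\frac{N}{\epsilon})$.

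The circuit $U_e$ has $x,y$ hard-coded as classical constants, so it only needs to act on the index register $\ket{i}$. It computes $x^i$ by repeated squaring, controlled on the $\lceil\log N\rceil$ bits of $i$. The squares $x^{2^j}$ are precomputed classically, so the work is $\mathcal{O}(\log N)$ controlled multiplications by constant field elements, accumulated reversibly. Each multiplication in $\mathbb{F}_{p^m}$ is schoolbook polynomial multiplication followed by reduction modulo the fixed irreducible polynomial. That is $\mathcal{O}(m^2)$ operations in $\mathbb{F}_p$, each costing $\mathcal{O}(\log^2 p)$ gates with reversible modular arithmetic. This gives $\mathcal{O}(m^2\log^2 p)$ gates per multiplication and $\mathcal{O}(\log N\, m^2\log^2 p)$ in total. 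We would then compute the trace, or the inner product with the constant $y$, at lower cost, copy the resulting $\lceil\log p\rceil$-bit value into the output register, and uncompute the workspace in reverse. Because the circuit is classical and reversible, applying it to the uniform superposition over $i$ produces $\frac{1}{\sqrt N}\sum_i\ket{i}\ket{v_i}$ by linearity.

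The main obstacle is the garbage management in the reversible exponentiation. Multiplying an accumulator in place by a constant $c$ is reversible because $c\neq 0$ and $c^{-1}$ is classically known. We would therefore implement each step as an out-of-place multiply followed by an uncompute of the old register using $c^{-1}$. This keeps ancilla use at $\mathcal{O}(m\log p)$ and preserves the stated gate count. A secondary point to check is finding an irreducible polynomial of degree $m$ classically. This is a one-time preprocessing cost, polynomial in $m$ and $\log p$, and not counted in the circuit complexity.
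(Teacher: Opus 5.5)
Your proposal is correct and matches the paper's proof essentially step for step. Your seed $(x,y)$ is the paper's $(\alpha,\beta)$ with $v_i=\Tr(y\,x^i)$, the bias argument via the at most $N-1$ roots of $f_a$ is the same, and the circuit is the same sequence of controlled multiplications by classically precomputed $x^{2^j}$ at $\mathcal{O}(m^2\log^2 p)$ gates each, followed by the trace, copy-out and uncomputation.

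Your write-up is slightly tighter in two places: you get the $(1-\frac{1}{p})\epsilon$ form directly rather than by rescaling, and you make the in-place multiplication via $c^{-1}$ explicit. The one caveat is that this in-place step needs $x\neq 0$. You can ensure this by drawing $x$ from $\mathbb{F}_{p^m}^{*}$, which gives the paper's bound $(N-1)/(p^m-1)$.
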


\subsection{Quantum Algorithm}
Our key strategy is to use a quantum computer to estimate the entries $M_{ij}$ of $M$, which is a matrix of size $t \times t$. Then we use a classical computer to diagonalize this matrix to find its rank, which is a good estimator of the real rank of $\partial_r$, with high probability. Our quantum algorithm formally proceeds as follows: 
\begin{widetext}
\begin{method}[One-sided quantum torsion witness]
    Let $K = \text{Cl} (G)$ be the clique complex built from the graph $G$ over $n$ vertices. Provided the oracle $\{O_r\}_{r=1}^n$ that can verify the existence of $r$'s-simplexes in $K$. Let $P$ be the set of prime numbers.
    \begin{enumerate}
       \item Prepare the state: 
       \begin{align}
            \frac{1}{\sqrt{|\Scal_r|}} \sum_{\sigma_r \in \Scal_r ,l } \ket{\sigma_r} \ket{0}^{\otimes \log (r+1)} \ket1 
       \end{align}
       which is possible by first preparing $  \frac{1}{\sqrt{|\Scal_r|}} \sum_{\sigma_r \in \Scal_r ,l } \ket{\sigma_r}  \ket1  $ then append ancilla qubits $\ket{0}^{\otimes \log (r+1)}$, then swap the qubit $\ket{1}$ to the last position.
       \item Prepare the state:
        \begin{align}
        \frac{1}{\sqrt{r |\Scal_r|}} \sum_{\sigma_r \in \Scal_r ,l } \sum_{k=0}^r \ket{\sigma_r}\ket{k}   \ket1 
    \end{align}
       by applying $H^{\otimes \log (r+1)} $ to the qubits $\ket{0}^{\otimes \log{(r+1)}}$.
 \item Append another ancilla qubits $\ket{0}^{\otimes n}$, followed by applying the unitary $U_\sigma$:
    \begin{align}
        \frac{1}{\sqrt{r |\Scal_r|}} \sum_{\sigma_r \in \Scal_r ,l } \sum_{k=0}^r \ket{\sigma_r}  \ket{k} \ket{\sigma_r \backslash v_k} \ket{1}
    \end{align}
    \item Append ancilla qubits then using $U_l$  with appropriate qubits to obtain:
    \begin{align}
        \frac{1}{\sqrt{r |\Scal_r|}} \sum_{\sigma_r \in \Scal_r ,l } \sum_{k=0}^r \ket{\sigma_r} \ket{k} \ket{\sigma_r \backslash v_k} \ket{1} \ket{l(\sigma_r)} \ket{l (\sigma_r \backslash v_k) }
    \end{align}
    \item Use $U_e$ (with appropriate ancilla qubits; see Lemma \ref{lemma: preparingstateepsilondistribution}) to obtain the state:
    \begin{align}
        \frac{1}{\sqrt{r |\Scal_r|}} \sum_{\sigma_r \in \Scal_r ,l } \sum_{k=0}^r \ket{\sigma_r} \ket{k} \ket{\sigma_r \backslash v_k} \ket{1} \ket{l(\sigma_r)} \ket{l (\sigma_r \backslash v_k) }  \ket{ v_{l(\sigma_r)}} \ket{v_{l (\sigma_r \backslash v_k) } }
    \end{align}
    where we remind that $ v_{l(\sigma_r)} , v_{l (\sigma_r \backslash v_k) }$ are drawn from $\epsilon$-biased distribution, having values between $0$ and $p-1$. 
    Treating $v_{l(\sigma_r)}$ as $V_{\sigma_r,j}$ and $ v_{l(\sigma_r\backslash v_k)}$ as $U_{i, \sigma_r \backslash v_k}$, then the above state is equivalent to:
    \begin{align}
          \frac{1}{\sqrt{r |\Scal_r|}} \sum_{\sigma_r \in \Scal_r ,l } \sum_{k=0}^r \ket{\sigma_r} \ket{k}  \ket{\sigma_r \backslash v_k}\ket1 \ket{l(\sigma_r)} \ket{l (\sigma_r \backslash v_k) } \ket{ V_{\sigma_r,j}} \ket{ U_{i, \sigma_r \backslash v_k}}
    \end{align}
    
    \item Use the quantum arithmetic circuits \cite{cuccaro2004new, vedral1996quantum, rines2018high, ruiz2017quantum} (with appropriate qubits registers) to obtain the following:
    \begin{align}
          \frac{1}{\sqrt{r |\Scal_r|}} \sum_{\sigma_r \in \Scal_r ,l } \sum_{k=0}^r \ket{\sigma_r} \ket{k}  \ket{\sigma_r \backslash v_k}\ket1 \ket{l(\sigma_r)} \ket{l (\sigma_r \backslash v_k) } \ket{ V_{\sigma_r,j}} \ket{ U_{i, \sigma_r \backslash v_k}} \ket{ (-1)^k  V_{\sigma_r,j}U_{i, \sigma_r \backslash v_k} (\text{mod } (p) )}
    \end{align}
     \item Append an extra ancilla $\ket{0}$ and conditionally transform the ancilla based on the register $\ket{  (-1)^k  V_{\sigma_r,j}U_{i, \sigma_r \backslash v_k}}$ to obtain the state \footnote{In practice, one can either opt to compute the trigonometric function $\arccos(.)$ and rotate the ancilla with  $\ket{  (-1)^k  V_{\sigma_r,j}U_{i, \sigma_r \backslash v_k}}$ as the controlled bit; or we can perform the inequality testing \cite{cuccaro2004new} to obtain the desired transformation. The second approach is more convenient and efficient than the first one, requiring a circuit of complexity $\mathcal{O}(\log^2 p)$ because $V_{\sigma_r,j}, U_{i, \sigma_r \backslash v_k}$ are  $\log p$-bit strings, plus that we do not have to compute the trigonometric function. }:
     \begin{align}
          \frac{1}{\sqrt{r |\Scal_r|}} \sum_{\sigma_r \in \Scal_r ,l } \sum_{k=0}^r \ket{\sigma_r} \ket{k}  \ket{\sigma_r \backslash v_k}\ket1 \ket{l(\sigma_r)} \ket{l (\sigma_r \backslash v_k) } \ket{ V_{\sigma_r,j}} \ket{ U_{i, \sigma_r \backslash v_k}} \ket{ (-1)^k  V_{\sigma_r,j}U_{i, \sigma_r \backslash v_k}} \Big( \frac{(-1)^k  V_{\sigma_r,j}U_{i, \sigma_r \backslash v_k}}{p^2}  \ket{0} +  \\ \sqrt{1- \big( \frac{(-1)^k  V_{\sigma_r,j}U_{i, \sigma_r \backslash v_k}}{p^2}\big)^2} \ket{1}  \Big)
    \end{align}
  where we have hided the modulo $p$. 
   \item Performing $U_l^\dagger, U_e^\dagger, U_\sigma^\dagger$ to uncompute everything and discard the corresponding qubits register, we obtain the simplifed state:
    \begin{align}
       \ket{\phi_1} :=   \frac{1}{\sqrt{r |\Scal_r|}} \sum_{\sigma_r \in \Scal_r ,l } \sum_{k=0}^r \ket{\sigma_r} \ket{k}  \ket1   \Big( \frac{(-1)^k  V_{\sigma_r,j}U_{i, \sigma_r \backslash v_k}}{p^2}  \ket{0} +  \sqrt{1- \big(\frac{(-1)^k  V_{\sigma_r,j}U_{i, \sigma_r \backslash v_k}}{p^2} \big)^2} \ket{1}  \Big)
    \end{align}
    \item Next prepare the following state:
    \begin{align}
        \ket{\phi_2} :=  \frac{1}{\sqrt{r |\Scal_r|}} \sum_{\sigma_r \in \Scal_r ,l } \sum_{k=0}^r \ket{\sigma_r} \ket{k}  \ket1   \ket{0}
    \end{align}
   \item Estimate the overlaps $\braket{\phi_1, \phi_2}$, which is 
   $$ \frac{1}{p^2r|\Scal_r|}\sum_{\sigma_r \in \Scal_r ,l } \sum_{k=0}^r (-1)^k  V_{\sigma_r,j}U_{i, \sigma_r \backslash v_k} = \frac{M_{ij}}{p^2r|\Scal_r|} $$ 
With an appropriate precision (will be analyzed below), the estimation above then translates into the estimation of $M_{ij}$ with accuracy $\frac{1}{2}$. We can then round the estimated value to the nearest integer, which is the correct value of $M_{ij}$.
\item Repeat the whole procedure for $i,j=1,2,...,t$ with $
    t \geq \frac{\log \big( 2 (p^{\alpha}-1) \frac{1}{\Delta} \big) }{ \log \big(  \frac{p}{1+(p-1)\epsilon} \big) } $ ($\alpha$ is the upper bound on the rank of $\partial_r$)
and build the matrix $M$, followed by taking modulo $p$ and classically diagonalizing $M$ to (approximately) find its rank. With high probability $1-\Delta$, this rank is the true rank of $\partial_r$ over $\mathbb{F}_p$.
\item Repeat the same algorithm above to find the rank of $\partial_{r+1}$ over $\mathbb{F}_p$. Then find the summation of the rank of $\partial_r,\partial_{r+1}$. 
\item Repeat the algorithm above for all $p \in P$, to find the summation of the rank of $\partial_r,\partial_{r+1}$ over different $\mathbb{F}_p$'s. 
\item If summation of the rank of $\partial_r,\partial_{r+1}$ changes for some $p$, then the algorithm outputs WITNESS, which indicates that either $H_r(K,\Zbb)$ or $H_{r-1}(K,\Zbb)$ contains $p$-torsion for some $p$. Otherwise, output INCONCLUSIVE, which indicates nothing about the presence or absence of $p$-torsion. 
    \end{enumerate}
\end{method}
\end{widetext}
Below, we discuss several aspects of the algorithm above, including its complexity plus justifying its efficiency regime. 
\subsection{Complexity}

Recall that we begin with the state $\frac{1}{\sqrt{\binom{n}{r+1}}} \sum_{ |\sigma_r|= r+1} \ket{\sigma_r} \ket{0}$, then we use amplitude amplification to amplify this state closely to the state 
$ \frac{1}{\sqrt{|\Scal_r| }} \sum_{ |\sigma_r|= r+1} \ket{\sigma_r} \ket{1}  $. Before the amplitude amplification, we need to use amplitude estimation to estimate the ratio $ \frac{|\Scal_r|}{\binom{n}{r+1]}}$, or more precisely, $ \sqrt{ \frac{|\Scal_r|}{\binom{n}{r+1]}}}$. Let $\delta'$ denote the additive error in the estimation. The complexity of estimating $\frac{|\Scal_r|}{\binom{n}{r+1}}$ to an additive error $\delta'$, with a failure probability $\eta$ is $\mathcal{O}\left(  \frac{1}{\delta'} rn \log( \frac{1}{\eta})    \right)$.
However, this estimation only occurs once. 

The complexity of preparing the state $ \frac{1}{\sqrt{|\Scal_r|}} \sum_{\sigma_r \in \Scal_r ,l } \ket{\sigma_r} \ket{0}^{\log (r+1)}\ket1 $ is the sum of the complexity of performing amplitude estimation plus amplitude amplification, which is (ignoring the failure probability as the amplitude estimation only occurs once):
$$ \mathcal{O}\big( \frac{1}{\delta'}  rn  +   \sqrt{\frac{\binom{n}{r+1}}{|\Scal_r|}} rn   \big) $$


In the Step 5 of algorithm above, we need to use Lemma \ref{lemma: preparingstateepsilondistribution}. The value of $N$ in this Lemma is exactly the value $\binom{n}{r+1}$ in our context, so the complexity of Lemma \ref{lemma: preparingstateepsilondistribution} is 
\begin{align}
    \mathcal{O}\left(  \log^2(p) \log^3 \binom{n}{r+1}  \right) =  \mathcal{O}\left(  \log^2(p) r^3 \log^3 (n) \right) 
\end{align}

The unitary $U_\sigma$ in Step 3 has complexity $\mathcal{O}(n^2)$. The $U_l$ in Step 4 has complexity $\mathcal{O}\big( rn  \big)$. $U_e$ in Step 5 has complexity $ \mathcal{O}\left(  \log^2(p) r^3 \log^3 (n) \right)  $. Step 6 and 7 uses arithmetic circuits \cite{cuccaro2004new, vedral1996quantum}, which operates on $\log p$-bit string and thus having $\mathcal{O}(\log^2 p)$ gate complexity. Step 8 uncomputes all the prior steps, so the total complexity in preparing $\ket{\phi_1}$ is 
\begin{align}
    \mathcal{O}\big( \frac{1}{\delta'}  rn + \log^2(p) r^3 \log^3 (n)  +n^2+  \sqrt{\frac{\binom{n}{r+1}}{|\Scal_r|}} rn   \big)
\end{align}
The cost of preparing $\ket{\phi_2}$ is simply the cost of preparing the state $ \frac{1}{\sqrt{|\Scal_r|}} \sum_{\sigma_r \in \Scal_r ,l } \ket{\sigma_r} \ket{0}^{\log (r+1)}\ket1  $, which is:
\begin{align}
   \mathcal{O}\big( \frac{1}{\delta'}  rn  +   \sqrt{\frac{\binom{n}{r+1}}{|\Scal_r|}} rn  \big) 
\end{align}
So the total complexity in estimating $\braket{\phi_1,\phi_2}$ with an (non-accumulative) accuracy $\delta$ is:
\begin{widetext}
\begin{align}
    \mathcal{O}\left( \frac{1}{\delta'}  rn  + \Big(\log^2(p) r^3 \log^3 (n)+ \sqrt{\frac{\binom{n}{r+1}}{|\Scal_r|}} rn \big)  + n^2\Big) \frac{1}{\delta}   \right) 
\end{align}
\end{widetext}
By choosing $\delta' = \delta$, the complexity above can be simplified as:
\begin{align}
       \mathcal{O}\left( \Big( \log^2(p) r^3 \log^3 (n)+ n^2 +  \sqrt{\frac{\binom{n}{r+1}}{|\Scal_r|}}  rn \Big) \frac{1}{\delta}   \right) 
\end{align}

At the end, we also have to use classical computer to diagonalize the matrix $M$ which is of size $t \times t$ so it takes further $\mathcal{O}(t^3)$ classical time, so the complexity is:
\begin{widetext}
    \begin{align}
           \mathcal{O}\left( t^3 + \Big(  \log^2(p) r^3 \log^3 (n)+  n^2 + \sqrt{\frac{\binom{n}{r+1}}{|\Scal_r|}}  rn \Big) t^2 \frac{1}{\delta}   \right) 
    \end{align}
\end{widetext}
We further recall that with 
\begin{align}
    t \geq \frac{\log \big( 2 (p^{\alpha}-1) \frac{1}{\Delta} \big) }{ \log \big(  \frac{p}{1+(p-1)\epsilon} \big) } 
\end{align}
where $\alpha$ is the upper bound on the rank of $\partial_r,\partial_{r+1}$, then the algorithm success with probability $1-\Delta$.  Eventually we need to repeat the whole algorithm for all $p\in P$, so the complexity is:
\begin{widetext}
\begin{align}
           \mathcal{O}\left( |P| t^3 + \sum_{p=1}^{|P|} \left[ \Big(   \log^2(p) r^3 \log^3 (n)+ n^2  \sqrt{\frac{\binom{n}{r+1}}{|\Scal_r|}}  n^2  \Big) t^2   \frac{1}{\delta} \right]  \right) 
    \end{align}
\end{widetext}

We point out a subtlety that, in the algorithm above, we use amplitude estimation to estimate the overlaps $\braket{\phi_1,\phi_2}$. If we target additive accuracy $\delta$ and failure probability $\eta$, then the complexity of this estimation step alone is $\mathcal{O}\left(  \frac{1}{\delta} \log \frac{1}{\eta}\right)$ (not including the gate required to build the components). By choosing an appropriate value of $\delta$, then we can infer the correct integer value of $M_{ij}$. Since we repeat the whole algorithm for $i,j=1,2,...,t$ and all $p \in P$, then the failure probability is $t^2 |P|\eta$. By an abuse of notation, if we target a total failture probability $\eta$, then we need to scale $\eta \longrightarrow \eta \frac{1}{t^2 |P|}$. So the complexity would need to contain the factor that accounts for this failure probability, which turns out to be:
\begin{widetext}
\begin{align}
           \mathcal{O}\left( |P| t^3 + \sum_{p=1}^{|P|} \left[ \Big(   \log^2(p) r^3 \log^3 (n)+ n^2 +  \sqrt{\frac{\binom{n}{r+1}}{|\Scal_r|}} rn \Big) t^2   \frac{1}{\delta}  \right]  \log \Big(  t^2 |P|\frac{1}{\eta}\Big) \right) 
    \end{align}
\end{widetext}
Note that $r \leq n$, so we can simplify the complexity above to:
\begin{widetext}
\begin{align}
     \mathcal{\tilde{O}}\left( |P| t^3 + \sum_{p=1}^{|P|} \left[\Big(  r^3 +    \sqrt{\frac{\binom{n}{r+1}}{|\Scal_r|}}  rn \Big)  t^2   \frac{1}{\delta}  \right]  \log \Big( t^2 |P| \frac{1}{\eta}\Big) \right) 
\end{align}
\end{widetext}
where $\mathcal{\tilde{O}}(.)$ hides the (poly)logarithmic factor. In fact, in the first step where we need to estimate the ratio $ \frac{|\Scal_r|}{\binom{n}{r+1}}$, then there is also a failure probability but this step only occurs once. The classical randomized algorithm for rank sketching also has failure probability, which is included in the value of $t$. 

Before moving to discussion, we recall that similar to the classical counterpart described earlier, our quantum algorithm would achieve the best performance when $t$ is small. This is possible if the rank of $\partial_r$ is small, or is guaranteed to be upper bounded by a small number.

\subsection{Discussion}

\paragraph{The importance of high precision.} In the first step of the algorithm above, the state 
$$\frac{1}{\sqrt{|\Scal_r|}} \sum_{\sigma_r \in \Scal_r ,l } \ket{\sigma_r}\ket1 $$
is realized perfectly. From then, there is no further error induced as all the operations are realized exactly, so the state $\ket{\phi_1}$ is realized without error. Similarly, the state $\ket{\phi_2}$ is also realized without error.  Eventually, we estimate $\frac{M_{ij}}{p^2r|\Scal_r|}$. If we target an $\delta$-error in the  amplitude estimation of $\braket{\phi_1,\phi_2} $, then the total error accumulated is $\delta$, which means that we are estimating $ \frac{1}{p^2 r |\Scal_r| } M_{ij}$ with an error of $\delta$. 

In the case of finite field, the value of entries of $M$ need to be precisely known, otherwise the resulting erroneous matrix $\widetilde{M}$ (where the entries $\widetilde{M}_{ij}$ is a multiplicative-error estimation of $M_{ij}$), after taking modulo $p$, will have a very different rank. There is generally no relation between the rank of $\widetilde{M}$ (modulo $p$) and of $M$ at all, and thus knowing the rank of $\widetilde{M}$ would not guarantee to infer the rank, or even the range of the rank of $M$. 

To handle this situation, we would need to be able to accurately evaluate the value of $M_{ij}$. This can be done by observing that $M_{ij}$ is an integer, so if we know its estimation up to a precision $\frac{1}{2}$, then we can round the estimation to the nearest integer, which produces the desired value. Equivalently, we need to choose:
\begin{align}
   \delta = \frac{1}{2p^2 r |\Scal_r| } 
\end{align}
Replacing this value, we have the gate complexity:
\begin{widetext}
    \begin{align}
           \mathcal{\tilde{O}}\left( |P| t^3 +  \sum_{p=1}^{|P|} \left[  \Big(     r^3  + \sqrt{\frac{\binom{n}{r+1}}{|\Scal_r|}} n^2 \Big)   t^2  r p^2  |\Scal_r|     \right] \log \Big( t^2 |P| \frac{1}{\eta}\Big)  \right) 
    \end{align}
    \label{38}
\end{widetext}
where we have used $\mathcal{\tilde{O}}(.)$ now to hide the logarithmic  factor.

\paragraph{Performance of the corresponding classical algorithm.} The corresponding classical algorithm proceeds similarly to the quantum algorithm above, except that all the computation steps are done classically. Under the same input model, where we are given an oracle that can query the simplexes, classical algorithm would need to use this oracle $O_r$ to enumerate the whole $r$-chain spaces to verify which $r$-simplexes are actually in $K$. This procedure, by default, requires as many steps as the total number of $r$-simplexes $\binom{n}{r+1}$, thus implying a lower bound $\Omega( \binom{n}{r+1}) $. The next step is to build the matrix $M$, where each $M_{ij}$ requires a matrix-vector multiplication. Since the boundary operator $\partial_r$ is $(r+1)$-column-sparse, so the classical complexity for computing $M_{ij}$ is $\mathcal{O}( r |\Scal_r|+ |\Scal_{r-1}| )$. Building $M$ takes further $t^2$ step, and at the end we need to diagonalize the matrix $M$, which takes further $\mathcal{O}(t^3)$ steps. The whole algorithm then needs to be repeated for all $p \in P$. In total, a classical algorithm has the following complexity:
\begin{align}
    \mathcal{O}\Big( |P| t^3+  \binom{n}{r+1}+ |P| t^2(  r |\Scal_r|+ |\Scal_{r-1}| )\Big) 
\end{align}
with a lower bound $\Omega( \binom{n}{r+1})$. Again, this is also probabilistic. With 
\begin{align}
    t \geq \frac{\log \big( 2 (p^{\alpha}-1) \frac{1}{\Delta} \big) }{ \log \big(  \frac{p}{1+(p-1)\epsilon} \big) } 
\end{align}
the algorithm succeeds with probability $1-\Delta$.

\paragraph{Quantum quadratic speedup.} If the value of $|\Scal_r|$, $t$ and $|P|$ grows at most polynomial in $n$, the quantum complexity in Eqn.~\ref{38} is 
$$\mathcal{O}\Big( |P| p^2_{\max} \sqrt{ \binom{n}{r+1}} \text{poly}(n) \log^3 \frac{1}{\Delta} \Big)$$
where we have defined $p_{\max} := \max\{ p\}_{p \in P}$. Compared to the classical complexity above, which is 
$$\mathcal{O}\Big(  \binom{n}{r+1} + |P| \text{poly}(n) \log^3 \frac{1}{\Delta} \Big)$$ 
there is almost a quadratic speed-up in $\binom{n}{r+1}$. 

We remark that in the algorithm above, we estimate the overlaps $\braket{\phi_1,\phi_2}$, which is equal to $ \frac{M_{ij}}{p^2 r |\Scal_r|}$ (we note that because $M_{ij}$ is positive so we do not need to worry about the sign). In fact, this is done by, e.g., combining Hadamard circuit with amplitude estimation. The value being estimated is actually the square of the overlaps, $ \sqrt{\frac{M_{ij}}{p^2 r |\Scal_r|} }$, because this is the amplitude. It is known \cite{brassard2000quantum} that an additive error $\delta$ in the estimation of $ \sqrt{\frac{M_{ij}}{p^2 r |\Scal_r|} }$ can translate into the same additive error estimation of $ \frac{M_{ij}}{p^2 r |\Scal_r|}$. However, when $M_{ij} \ll p^2 r|\Scal_r|$, then an additive error $\delta$ in estimating $ \sqrt{\frac{M_{ij}}{p^2 r |\Scal_r|} } $ can induce an additive error $\mathcal{O}(\delta^2)$ in estimating $ \frac{M_{ij}}{p^2 r |\Scal_r|} $. So, in the estimation of the overlaps $\braket{\phi_1,\phi_2}$, or more precisely, $ \sqrt{|\braket{\phi_1,\phi_2}|}$, we just need to choose an error $\sqrt{\delta} = \sqrt{ \frac{1}{6 p^2 r |\Scal_r|}}$. Using this new error we have the following improved complexity:
\begin{widetext}
 \begin{align}
           \mathcal{\tilde{O}}\left( |P| t^3 +  \sum_{p=1}^{|P|} \left[  \Big(     r^3  + \sqrt{\frac{\binom{n}{r+1}}{|\Scal_r|}} n^2 \Big)   t^2  \sqrt{r p^2  |\Scal_r| }    \right] \log \Big( t^2 |P| \frac{1}{\eta}\Big)  \right) 
    \end{align}
    \end{widetext}
Therefore, even when $|\Scal_r|$ is as large as $\binom{n}{r+1}$, which can be exponentially large in $n$ (when $r$ is high), we still have a quadratic speedup compared to the classical algorithm. We again emphasize that this is only possible in the case where all the values $M_{ij} \ll p^2 r|\Scal_r|$.

\section{Conclusion}
\label{sec: conclusion}
In this work, we have explored quantum TDA beyond Betti numbers. We specifically considered the homology group $H_r(K,\Zbb)$ of a given complex $K$ and shown that determining $p$-torsion from this group is $\rm NP$-hard. As torsion is intrinsically related to many physical problems, this implies that the computational problems regarding certain physical quantities, e.g., the finite dimensional logical sector,  are also $\rm NP$-hard. At the same time, we build a quantum algorithm, which is a one-sided torsion witness, that in appropriate scenarios, can determine if the complex $K$ contains $p$-torsion for $p$ belonging to some finite set of prime numbers. Our work has expanded the scope of current quantum TDA, which mostly focuses on the Betti numbers,  to the torsion part (where as Betti numbers belong to the free part). The results have revealed the complexity-theoretic limit to the related problems in computational TDA, showing specifically that integral homology is computationally difficult. Whether beyond quadratic speedup is possible in revealing torsion structure is left as an open question.

\section*{Acknowledgements}
Part of this work is done when N.A.N. is at Google Quantum AI. We acknowledge the use of ChatGPT, developed by OpenAI in the work. This tool was used to supply proof ideas as well as improving the presentation of the work. This research is partly funded by University of Economics Ho Chi Minh City (UEH), Vietnam. \\

\textit{Note added. } During the final stage of our project, we are aware of recent work \cite{wesolowski2026torsion}, which has a similar result to us regarding the $\rm NP$-hardness of $2$-torsion detection. Our proof of the hardness of the $p$-torsion is more direct using the Moore space.

\bibliography{ref.bib}
\bibliographystyle{unsrt}

\newpage
\appendix
\onecolumngrid

\section{Block-encoding and quantum singular value transformation}
\label{sec: summaryofnecessarytechniques}
We briefly summarize the essential quantum tools used in our algorithm. For conciseness, we highlight only the main results and omit technical details, which are thoroughly covered in~\cite{gilyen2019quantum}. An identical summary is also presented in~\cite{lee2025new}.

\begin{definition}[Block-encoding unitary, see e.g.~\cite{low2017optimal, low2019hamiltonian, gilyen2019quantum}]
\label{def: blockencode} 
Let $A$ be a Hermitian matrix of size $N \times N$ with operator norm $\norm{A} < 1$. A unitary matrix $U$ is said to be an \emph{exact block encoding} of $A$ if
\begin{align}
    U = \begin{pmatrix}
       A & * \\
       * & * \\
    \end{pmatrix},
\end{align}
where the top-left block of $U$ corresponds to $A$. Equivalently, one can write
\begin{equation}
    U = \ket{\mathbf{0}}\bra{\mathbf{0}} \otimes A + (\cdots),    
\end{equation}
where $\ket{\mathbf{0}}$ denotes an ancillary state used for block encoding, and $(\cdots)$ represents the remaining components orthogonal to $\ket{\mathbf{0}}\bra{\mathbf{0}} \otimes A$. If instead $U$ satisfies
\begin{equation}
    U = \ket{\mathbf{0}}\bra{\mathbf{0}} \otimes \tilde{A} + (\cdots),
\end{equation}
for some $\tilde{A}$ such that $\|\tilde{A} - A\| \leq \epsilon$, then $U$ is called an {$\epsilon$-approximate block encoding} of $A$. Furthermore, the action of $U$ on a state $\ket{\mathbf{0}}\ket{\phi}$ is given by
\begin{align}
    \label{eqn: action}
    U \ket{\mathbf{0}}\ket{\phi} = \ket{\mathbf{0}} A\ket{\phi} + \ket{\mathrm{garbage}},
\end{align}
where $\ket{\mathrm{garbage}}$ is a state orthogonal to $\ket{\mathbf{0}}A\ket{\phi}$. The circuit complexity (e.g., depth) of $U$ is referred to as the {complexity of block encoding $A$}.
\end{definition}

Based on~\ref{def: blockencode}, several properties, though immediate, are of particular importance and are listed below.
\begin{remark}[Properties of block-encoding unitary]
The block-encoding framework has the following immediate consequences:
\begin{enumerate}[label=(\roman*)]
    \item Any unitary $U$ is trivially an exact block encoding of itself.
    \item If $U$ is a block encoding of $A$, then so is $\Ibb_m \otimes U$ for any $m \geq 1$.
    \item The identity matrix $\Ibb_m$ can be trivially block encoded, for example, by $\sigma_z \otimes \Ibb_m$.
\end{enumerate}
\end{remark}

Given a set of block-encoded operators, various arithmetic operations can be done with them. Here, we simply introduce some key operations that are especially relevant to our algorithm, focusing on how they are implemented and their time complexity, without going into proofs. For more detailed explanations, see~\cite{gilyen2019quantum, camps2020approximate}.

\begin{lemma}[Informal, product of block-encoded operators, see e.g.~\cite{gilyen2019quantum}]
\label{lemma: product}
    Given unitary block encodings of two matrices $A_1$ and $A_2$, with respective implementation complexities $T_1$ and $T_2$, there exists an efficient procedure for constructing a unitary block encoding of the product $A_1 A_2$ with complexity $T_1 + T_2$.
\end{lemma}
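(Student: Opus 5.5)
The statement to prove is the "product of block-encoded operators" lemma. The plan is to compose the two unitaries on a shared system register while giving each its own ancilla register.

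\emph{Setup.} Suppose $U_1$ is an $(\alpha_1,a_1,\epsilon_1)$-encoding of $A_1$ and $U_2$ is an $(\alpha_2,a_2,\epsilon_2)$-encoding of $A_2$, in the sense of Definition~\ref{def: blockencode} (extended to a subnormalization $\alpha$ as in the earlier block-encoding paragraph). Introduce the fresh ancilla register $\ket{0}^{\otimes(a_1+a_2)}$, and let $U_1$ act on the first $a_1$ ancillas together with the system, and $U_2$ act on the last $a_2$ ancillas together with the system. Write $\widehat U_1=\Ibb_{a_2}\otimes U_1$ and $\widehat U_2=\Ibb_{a_1}\otimes U_2$ (with the register ordering understood), and take $U:=\widehat U_1\widehat U_2$.

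\emph{Key computation.} Insert the projector identity $\Ibb=\ket{0^{a_2}}\bra{0^{a_2}}+(\Ibb-\ket{0^{a_2}}\bra{0^{a_2}})$ on the second ancilla register between the two factors. Since $\widehat U_1$ acts trivially on that register, the complementary term is annihilated by the final projection $\bra{0^{a_2}}$. We therefore obtain
\[
(\bra{0^{a_1+a_2}}\otimes\Ibb)\,U\,(\ket{0^{a_1+a_2}}\otimes\Ibb)=\frac{\tilde A_1}{\alpha_1}\cdot\frac{\tilde A_2}{\alpha_2}.
\]
This is the crux of the argument, and it is where separate ancilla registers are essential. If the registers were shared, garbage from $U_2$ landing outside $\ket{0}$ could be rotated back into $\ket{0}$ by $U_1$, and the block would no longer be a product.

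\emph{Error and cost.} The error bound follows from the triangle inequality:
\[
\|\tilde A_1\tilde A_2-A_1A_2\|\le \|\tilde A_1\|\,\epsilon_2+\epsilon_1\|A_2\|\le \alpha_1\epsilon_2+\alpha_2\epsilon_1,
\]
using $\|\tilde A_1\|\le\alpha_1$ because a block of a unitary has norm at most one. Hence $U$ is an $(\alpha_1\alpha_2,\,a_1+a_2,\,\alpha_1\epsilon_2+\alpha_2\epsilon_1)$-encoding of $A_1A_2$. The circuit is one use of each unitary, so its complexity is $T_1+T_2$.

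The only real obstacle is the ancilla-separation bookkeeping. Once the registers are separated, the remaining steps are immediate.
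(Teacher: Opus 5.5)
Your proof is correct and is essentially the standard argument of \cite{gilyen2019quantum}; the paper states this lemma without proof and defers to that reference. You apply $U_2$ and $U_1$ on disjoint ancilla registers sharing the system register, read off the $\ket{0^{a_1+a_2}}$ block to get $\tilde A_1\tilde A_2/(\alpha_1\alpha_2)$, and count one use of each unitary for cost $T_1+T_2$. One minor slip in your error bound, which the lemma itself does not claim: the step $\epsilon_1\|A_2\|\le\alpha_2\epsilon_1$ needs $\|A_2\|\le\alpha_2$. The definition only gives $\|A_2\|\le\alpha_2+\epsilon_2$, so without that extra assumption the bound is $\alpha_1\epsilon_2+\alpha_2\epsilon_1+\epsilon_1\epsilon_2$.
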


\begin{lemma}[Informal, tensor product of block-encoded operators, see e.g.~{\cite[Theorem 1]{camps2020approximate}}]\label{lemma: tensorproduct}
    Given unitary block-encodings $\{U_i\}_{i=1}^m$ of multiple operators $\{M_i\}_{i=1}^m$ (assumed to be exact), there exists a procedure that constructs a unitary block-encoding of $\bigotimes_{i=1}^m M_i$ using a single application of each $U_i$ and $\mathcal{O}(1)$ SWAP gates.
\end{lemma}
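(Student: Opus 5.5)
The plan is to build the encoding directly as the tensor product of the given unitaries and then only relabel wires. Suppose each $U_i$ is an exact $(\alpha_i, a_i, 0)$-block-encoding of $M_i$. This means $U_i$ acts on an $a_i$-qubit ancilla register $A_i$ and a system register $S_i$, and
\begin{align}
\left[\bra{0}^{\otimes a_i}\otimes \Ibb\right] U_i \left[\ket{0}^{\otimes a_i}\otimes \Ibb\right] = \frac{M_i}{\alpha_i}.
\end{align}
I will take $W := U_1\otimes U_2\otimes\cdots\otimes U_m$, acting on $A_1S_1A_2S_2\cdots A_mS_m$. This circuit uses exactly one application of each $U_i$, and the $U_i$ can run in parallel because they act on disjoint registers.

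The only remaining issue is that the ancillas are interleaved with the system registers, while the definition of block-encoding requires a single contiguous ancilla register sitting in front of the system. I will fix this with a qubit permutation $\Pi$ that maps $A_1S_1\cdots A_mS_m$ to $(A_1\cdots A_m)(S_1\cdots S_m)$. The candidate encoding is then $\widetilde W := \Pi W \Pi^\dagger$. The permutation $\Pi$ is a product of SWAP gates. Since it is the same fixed rearrangement for every input and amounts to a relabeling of wires, I count it as $\mathcal{O}(1)$ SWAP layers, in line with the informal statement of the lemma.

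Next I will verify the top-left block. Write $a=\sum_i a_i$. Conjugating by $\Pi$ gives
\begin{align}
\left[\bra{0}^{\otimes a}\otimes \Ibb\right]\widetilde W\left[\ket{0}^{\otimes a}\otimes \Ibb\right]
= \bigotimes_{i=1}^m \left[\bra{0}^{\otimes a_i}\otimes \Ibb\right] U_i \left[\ket{0}^{\otimes a_i}\otimes \Ibb\right],
\end{align}
because $\ket{0}^{\otimes a}=\bigotimes_i \ket{0}^{\otimes a_i}$ and projections on tensor factors act factorwise. The right-hand side equals $\bigotimes_i M_i/\alpha_i$. Therefore $\widetilde W$ is an exact $\left(\prod_i\alpha_i,\ \sum_i a_i,\ 0\right)$-block-encoding of $\bigotimes_i M_i$.

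I expect no real obstacle in this argument; the main care is bookkeeping. First, the permutation $\Pi$ must act consistently on both sides, so that the projector $\ket{0}^{\otimes a}$ sits on the gathered ancilla wires. Second, the statement should record that the normalization becomes the product $\prod_i\alpha_i$ and the ancilla count becomes $\sum_i a_i$. If approximate encodings were allowed instead, I would add a telescoping bound: with $\|\tilde M_i-M_i\|\le\epsilon_i$, the error in the tensor product is at most $\sum_i\epsilon_i\prod_{j\ne i}\max\left(\|M_j\|,\|\tilde M_j\|\right)$. This extension is not needed for the exact version stated here.
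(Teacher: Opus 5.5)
Your argument is correct and matches the construction behind the cited Theorem 1 of Camps and Van Beeumen; the paper gives no proof of its own and simply quotes that result. In that construction, as in yours, you tensor the $U_i$, conjugate by a fixed wire permutation that gathers the ancillas in front, and read off the top-left block factorwise, obtaining an exact $(\prod_i\alpha_i,\sum_i a_i,0)$-encoding of $\bigotimes_i M_i$; your reading of ``$\mathcal{O}(1)$ SWAP gates'' as constant SWAP depth (or a pure relabeling of wires) is the right way to interpret the informal statement, since the number of elementary two-qubit SWAPs generally grows with $\sum_i a_i$.
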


\begin{lemma}[Informal, linear combination of block-encoded operators, see e.g.~{\cite[Theorem 52]{gilyen2019quantum}}]
    Given the unitary block encoding of multiple operators $\{A_i\}_{i=1}^m$. Then, there is a procedure that produces a unitary block encoding operator of $\sum_{i=1}^m \pm (A_i/m) $ in time complexity $\mathcal{O}(m)$, e.g., using the block encoding of each operator $A_i$ a single time. 
    \label{lemma: sumencoding}
\end{lemma}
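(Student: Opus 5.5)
The plan is the standard linear-combination-of-unitaries (LCU) construction: a PREPARE--SELECT--UNPREPARE sandwich. Suppose each $U_i$ is a block encoding of $A_i$ with $a_i$ ancilla qubits. First I will equalize the ancilla counts. Let $a=\max_i a_i$ and replace each $U_i$ by $\Ibb_{2^{a-a_i}}\otimes U_i$. By property (ii) of the Remark on block encodings, this is still a block encoding of $A_i$ with respect to $\ket{0}^{\otimes a}$. After this step all $U_i$ act on the same register of $a+\log N$ qubits. I then add an index register of $\lceil\log m\rceil$ qubits.

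The second step builds the two oracles. For PREPARE I take a unitary $W$ with $W\ket{0}=\frac{1}{\sqrt m}\sum_{i=1}^m\ket{i}$. When $m$ is a power of two this is just $H^{\otimes\log m}$. Otherwise a uniform superposition over $m$ basis states can be prepared exactly with $\mathcal{O}(\log m)$ gates, using one amplitude-amplification round or the standard exact uniform-state circuit. For SELECT I take
\begin{align}
S=\sum_{i=1}^m \ket{i}\bra{i}\otimes s_i U_i \;+\; \sum_{i>m}\ket{i}\bra{i}\otimes \Ibb, \qquad s_i\in\{\pm1\}.
\end{align}
The signs $s_i$ are implemented as a diagonal phase on the index register, costing at most one controlled-$Z$-type gate per term. $S$ is a product of $m$ index-controlled applications of the $U_i$, so each $U_i$ is used exactly once, and the control logic adds only lower-order overhead (or $\mathcal{O}(m)$ total with a unary-iteration construction).

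The third step is the verification. Set $V=(W^\dagger\otimes\Ibb)\,S\,(W\otimes\Ibb)$ and project onto $\ket{0}_{\rm idx}\ket{0}^{\otimes a}$ on both sides. Because $\ket{0}_{\rm idx}$ is mapped to the uniform superposition, the cross terms $\bra{i}\ket{j}$ with $i\neq j$ vanish, and the projection gives
\begin{align}
\big(\bra{0}_{\rm idx}\bra{0}^{\otimes a}\otimes\Ibb\big)\,V\,\big(\ket{0}_{\rm idx}\ket{0}^{\otimes a}\otimes\Ibb\big)=\frac{1}{m}\sum_{i=1}^m s_i\big(\bra{0}^{\otimes a}\otimes\Ibb\big)U_i\big(\ket{0}^{\otimes a}\otimes\Ibb\big)=\sum_{i=1}^m \frac{\pm A_i}{m}.
\end{align}
If the $U_i$ are only $\epsilon_i$-approximate encodings, the triangle inequality bounds the error by $\frac{1}{m}\sum_i\epsilon_i\le\max_i\epsilon_i$. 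The total cost is one use of each $U_i$ plus $\mathcal{O}(\log m)$ gates for $W$ and $W^\dagger$, which gives the claimed $\mathcal{O}(m)$.

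I expect the main obstacle to be bookkeeping rather than a mathematical difficulty. The first issue is making sure that the controlled versions of the $U_i$ are available at the claimed cost. The second is handling a non-power-of-two $m$, so that the normalization is exactly $1/m$ and not $1/2^{\lceil\log m\rceil}$. For the latter I would fall back on the exact uniform-state preparation. I would note that padding with zero-encodings costs only a constant factor in the subnormalization, which is harmless for the lemma's informal statement.
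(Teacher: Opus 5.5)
Your proposal is correct and follows essentially the same route as the paper, which gives no proof of its own and relies on the cited linear-combination result of \cite{gilyen2019quantum}: that result is the same PREPARE--SELECT--UNPREPARE (LCU) sandwich with a controlled-$U_i$ SELECT, and your projection computation reproduces it. The only cosmetic difference is that Gily\'en et al.\ absorb the signs into a state-preparation pair $(P_L,P_R)$, whereas you place them as phases inside SELECT; both give exactly $\frac{1}{m}\sum_i \pm A_i$.
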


\begin{lemma}[Informal, Scaling multiplication of block-encoded operators] 
\label{lemma: scale}
    Given a block encoding of some matrix $A$, as in~\ref{def: blockencode}, the block encoding of $A/p$ where $p > 1$ can be prepared with an extra $\mathcal{O}(1)$ cost.
\end{lemma}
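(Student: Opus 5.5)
The plan is to realize the factor $1/p$ as the top-left entry of a single-qubit unitary and attach it to the given block-encoding with one fresh ancilla. This is an instance of Lemma~\ref{lemma: tensorproduct}, where the second factor is the $1\times 1$ matrix $1/p$.

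\textbf{Step 1: the scalar block-encoding.} Let $U$ be the given block-encoding of $A$, so that $\left[\bra{0}^{\otimes a}\otimes \Ibb\right] U \left[\ket{0}^{\otimes a}\otimes\Ibb\right] = \tilde{A}/\alpha$ with $\|\tilde A - A\|\le \epsilon$ (for the exact case, take $\alpha=1$ and $\epsilon=0$ as in Definition~\ref{def: blockencode}). Since $p>1$, we have $0<1/p<1$, so there is an angle $\theta\in(0,\pi)$ with $\cos(\theta/2)=1/p$. I will take the rotation
\[
R_y(\theta)=\begin{pmatrix} \cos(\theta/2) & -\sin(\theta/2)\\ \sin(\theta/2) & \cos(\theta/2)\end{pmatrix}.
\]
It satisfies $\bra{0}R_y(\theta)\ket{0}=1/p$, so $R_y(\theta)$ is an exact block-encoding of the $1\times 1$ matrix $1/p$ with one ancilla.

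\textbf{Step 2: tensor the two encodings.} Next I define $U' = R_y(\theta)\otimes U$, where $R_y(\theta)$ acts on the new ancilla and $U$ acts on its own registers. Projecting all $a+1$ ancillas onto $\ket{0}$ factorizes the top-left block:
\[
\left[\bra{0}^{\otimes (a+1)}\otimes\Ibb\right] U' \left[\ket{0}^{\otimes (a+1)}\otimes\Ibb\right] = \bra{0}R_y(\theta)\ket{0}\cdot \frac{\tilde A}{\alpha} = \frac{\tilde A/p}{\alpha}.
\]
Hence $U'$ is an $(\alpha, a+1, \epsilon/p)$-block-encoding of $A/p$, because $\|\tilde A/p - A/p\| = \|\tilde A - A\|/p \le \epsilon/p$. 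Hermiticity is preserved, since $p$ is real. The action formula~(\ref{eqn: action}) follows directly: $U'\ket{\mathbf 0}\ket{\phi} = \ket{\mathbf 0}(A/p)\ket{\phi}+\ket{\mathrm{garbage}}$.

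\textbf{Step 3: cost accounting.} $U'$ uses a single call to $U$, one extra qubit and one single-qubit rotation, so the overhead is $\mathcal{O}(1)$. This matches Lemma~\ref{lemma: tensorproduct} with $m=2$.

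The only point needing care, which I expect to be the main obstacle, is whether $R_y(\theta)$ with $\cos(\theta/2)=1/p$ counts as an $\mathcal{O}(1)$-cost gate. If arbitrary single-qubit rotations are primitive, the lemma is exact. If a fixed finite gate set is required, I would synthesize $R_y(\theta)$ to accuracy $\epsilon'$ via Solovay--Kitaev or Ross--Selinger at cost $\mathcal{O}(\mathrm{polylog}(1/\epsilon'))$. That adds at most $\epsilon'\|A\|/\alpha$ to the encoding error, and the cost is still $\mathcal{O}(1)$ in the problem-size parameters.

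Alternatively, when $p$ is an integer one can avoid irrational angles. In that case I would prepare the uniform state on $\lceil \log p\rceil$ ancillas over $p$ basis states. Its $\ket{0}$-amplitude is $1/\sqrt{p}$, so using the preparation on the left and its inverse on the right gives the exact factor $(1/\sqrt{p})^2=1/p$.
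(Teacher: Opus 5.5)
The paper gives no proof of this statement. It appears as an informal lemma in Appendix~\ref{sec: summaryofnecessarytechniques}, explicitly ``without going into proofs,'' with only a pointer to \cite{gilyen2019quantum, camps2020approximate}, so there is no in-paper argument to compare against. Your Steps 1--3 are correct and are the standard construction. $R_y(\theta)$ with $\cos(\theta/2)=1/p$ is an exact one-ancilla encoding of the $1\times 1$ matrix $1/p$. Tensoring it with $U$ multiplies the top-left block by $\bra{0}R_y(\theta)\ket{0}=1/p$, at the price of one qubit and one rotation; this is Lemma~\ref{lemma: tensorproduct} with a scalar factor. It is worth knowing what the rotation actually buys. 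In the $(\alpha,a,\epsilon)$ convention of the main text the rescaling is free, since the unmodified $U$ is already an $(\alpha/p,a,\epsilon/p)$-encoding of $A/p$. Your extra qubit is needed precisely when the subnormalization must stay fixed, as in Definition~\ref{def: blockencode}, where the top-left block must literally equal $A/p$ (e.g.\ before combining it with other terms or passing it to QSVT at a fixed normalization).

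Your integer-$p$ alternative does not work as written, although it is only an aside and does not affect the main argument. Suppose $W$ prepares the uniform state on a single ancilla register, and you sandwich the circuit between $W$ and $W^\dagger$ on that register with nothing acting on it in between. Then the register only sees $W^\dagger W=\Ibb$, and the block is not rescaled at all. The factor $(1/\sqrt{p})^2$ appears only in one of two ways. Either the two preparations act on two separate registers, each contributing $\bra{0}W\ket{0}=1/\sqrt{p}$. Or you insert between $W$ and $W^\dagger$ an operation that annihilates the branches $k\neq 0$, e.g.\ flipping a flag qubit controlled on $k\neq 0$ and including the flag in the projection; this is the one-term case of Lemma~\ref{lemma: sumencoding}.

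Also, for odd $p$ this route does not avoid irrational angles. Exact preparations of a uniform superposition over $p$ states use a $p$-dependent rotation. Moreover, no Clifford+T circuit yields an amplitude of modulus $1/\sqrt{p}$ exactly, because no element of $\mathbb{Z}[1/\sqrt{2},i]$ has squared modulus $1/p$. The route is genuinely angle-free only when $p$ is a power of two, with $W=H^{\otimes \log p}$.
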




\begin{lemma}[Matrix inversion, see e.g.~\cite{gilyen2019quantum, childs2017quantum}]\label{lemma: matrixinversion}
Given a block encoding of some matrix $A$  with operator norm $||A|| \leq 1$ and block-encoding complexity $T_A$, then there is a quantum circuit producing an $\epsilon$-approximated block encoding of ${A^{-1}}/{\kappa}$ where $\kappa$ is the conditional number of $A$. The complexity of this quantum circuit is $\mathcal{O}\left( \kappa T_A \log \left({1}/{\epsilon}\right)\right)$. 
\end{lemma}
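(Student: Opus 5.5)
The plan is to realize $A^{-1}/\kappa$ as a polynomial transformation of the block-encoded $A$ using the quantum singular value transformation (QSVT) of \cite{gilyen2019quantum}. The problem then reduces to constructing a good polynomial approximation of $x\mapsto 1/(\kappa x)$. Since $A$ is Hermitian with $\norm{A}\le 1$ and condition number $\kappa$, its spectrum lies in $D_\kappa:=[-1,-1/\kappa]\cup[1/\kappa,1]$. For an odd real polynomial $P$, the singular value transformation coincides with the eigenvalue transformation $P(A)$. It therefore suffices to find an odd polynomial $P$ with three properties:
\begin{itemize}
\item $|P(x)-\tfrac{1}{2\kappa x}|\le \epsilon$ on $D_\kappa$;
\item $|P(x)|\le 1$ on $[-1,1]$;
\item $\deg P=d=\mathcal{O}(\kappa\log(\kappa/\epsilon))$.
\end{itemize}
The constant factor $1/2$ is absorbed into the subnormalization, as permitted by Lemma~\ref{lemma: scale}.

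\textbf{Step 1 (function with the right behaviour).} Following Childs--Kothari--Somma, I would first replace $1/x$ by $f_b(x)=\frac{1-(1-x^2)^b}{x}$ with $b=\lceil \kappa^2\log(\kappa/\epsilon)\rceil$. On $D_\kappa$ we have $(1-x^2)^b\le e^{-b/\kappa^2}\le \epsilon/\kappa$, so $|f_b(x)-1/x|\le \epsilon$ there. Moreover, $f_b$ is an odd polynomial of degree $2b-1$ that is bounded by $\mathcal{O}(\sqrt b)$ on all of $[-1,1]$.

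\textbf{Step 2 (degree reduction).} Expand $f_b$ in Chebyshev polynomials. Writing $(1-x^2)^b$ via the binomial expansion of $\cos^{2b}$, the coefficients form a binomial distribution. Chernoff concentration then shows that truncating at degree $d=\mathcal{O}(\sqrt{b\log(b/\epsilon)})=\mathcal{O}(\kappa\log(\kappa/\epsilon))$ costs only $\mathcal{O}(\epsilon)$ additive error uniformly on $[-1,1]$.

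\textbf{Step 3 (normalization).} The truncated polynomial can still be as large as $\mathcal{O}(\kappa)$ near $x=0$. To fix this, I would multiply by an even polynomial approximation of the rectangle function that is close to $1$ on $D_\kappa$ and close to $0$ on $[-1/(2\kappa),1/(2\kappa)]$. Such a polynomial has degree $\mathcal{O}(\kappa\log(1/\epsilon))$ and is built from an approximation of $\mathrm{erf}$. After dividing by $2\kappa$, this yields the desired $P$ with $|P|\le1$. Rescaling $\epsilon$ by a factor $\kappa$ keeps all bounds in the stated form.

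\textbf{Step 4 (QSVT and error accounting).} By the QSVT theorem, an odd polynomial of degree $d$ bounded by $1$ on $[-1,1]$ is implemented with $d$ alternating uses of $U$ and $U^\dagger$, interleaved with single-qubit phase rotations on one extra ancilla. The total cost is therefore $\mathcal{O}(d\,T_A)=\mathcal{O}(\kappa\log(\kappa/\epsilon)\,T_A)$. Since $\norm{P(A)-A^{-1}/(2\kappa)}=\max_{\lambda\in\mathrm{spec}(A)}|P(\lambda)-1/(2\kappa\lambda)|\le\epsilon$, the resulting unitary is an $\epsilon$-approximate block encoding in the sense of Definition~\ref{def: blockencode}. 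The $\log\kappa$ inside the logarithm is the usual polylog overhead and is suppressed in the stated bound. Alternatively, it can be removed with the variable-time refinements of \cite{childs2017quantum}.

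\textbf{Main obstacle.} I expect the hardest part to be Steps 2--3: producing a single polynomial of degree linear in $\kappa$ that simultaneously approximates $1/x$ on $D_\kappa$ and stays bounded by $1$ on the whole interval, since a naive approximation blows up near the origin. The remaining steps are direct invocations of the QSVT theorem and a spectral-norm error argument.
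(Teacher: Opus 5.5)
Your construction (the Childs--Kothari--Somma polynomial $f_b$, its Chebyshev truncation, an even rectangle-type filter, then QSVT) is the standard one behind the works the paper cites for this lemma; the paper gives no proof of its own. Steps~1--4 are essentially sound. What they establish, however, is a cost of $\mathcal{O}(\kappa\log(\kappa/\epsilon)\,T_A)$, and neither of your reasons for discarding the extra $\log\kappa$ holds. The stated bound is an $\mathcal{O}(\cdot)$ bound, not $\tilde{\mathcal{O}}(\cdot)$, so for fixed $\epsilon$ your $\kappa\log\kappa$ is not $\mathcal{O}(\kappa)$. Variable-time amplitude amplification is beside the point. In \cite{childs2017quantum} it lowers the cost of amplifying the normalized solution state $A^{-1}\ket{b}/\norm{A^{-1}\ket{b}}$ from roughly quadratic to roughly linear in $\kappa$, up to polylogarithmic factors. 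It produces no block encoding of $A^{-1}/\kappa$ and does not shorten your polynomial.

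The $\log\kappa$ comes from a misnormalized error budget.
\begin{itemize}
\item The lemma asks for accuracy $\epsilon$ on $A^{-1}/\kappa$, so on $D_\kappa$ you only need $1/x$ to accuracy $\mathcal{O}(\epsilon\kappa)$, not $\epsilon$. Your ``rescaling $\epsilon$ by a factor $\kappa$'' goes the wrong way.
\item Take $b=\lceil\kappa^2\ln(2/\epsilon)\rceil$. Then $|f_b(x)-1/x|\le\kappa e^{-b/\kappa^2}\le\epsilon\kappa/2$ on $D_\kappa$.
\item Replace the crude tail estimate behind $d=\mathcal{O}(\sqrt{b\log(b/\epsilon)})$ by an integral bound. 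The coefficient of $T_{2j+1}$ has magnitude $4\Pr[X\ge b+j+1]\le 4e^{-(j+1)^2/b}$ for $X\sim\mathrm{Bin}(2b,1/2)$. Hence the tail beyond $j_0$ is at most $4\int_{j_0}^{\infty}e^{-u^2/b}\,du\le (2b/j_0)\,e^{-j_0^2/b}=2\sqrt{b/L}\,e^{-L}$ for $j_0=\sqrt{bL}$.
\item Since $\sqrt{b}\le\kappa\sqrt{2\ln(2/\epsilon)}$ already carries the factor $\kappa$, taking $L=\ln(6/\epsilon)$ makes this tail at most $\epsilon\kappa/2$. The degree becomes $2j_0+1=\mathcal{O}(\kappa\log(1/\epsilon))$.
\item The filter in Step~3 likewise needs accuracy only $\epsilon$, because its error is multiplied by $|1/x|\le\kappa$ and then divided by $\Theta(\kappa)$. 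Its degree is therefore also $\mathcal{O}(\kappa\log(1/\epsilon))$.
\item Near the origin the truncated polynomial is bounded by $\sqrt{b}+\epsilon\kappa=\mathcal{O}(\kappa\sqrt{\log(1/\epsilon)})$, not $\mathcal{O}(\kappa)$ as you wrote. Suppression by $\epsilon$ still suffices, because $\epsilon\sqrt{2\ln(2/\epsilon)}<1$ for $\epsilon\le 1/2$.
\end{itemize}
A last minor point: Lemma~\ref{lemma: scale} only divides a block-encoded operator by $p>1$. You therefore obtain $A^{-1}/(c\kappa)$ for some constant $c>1$, and this constant cannot be absorbed to give $A^{-1}/\kappa$ for free.
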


\begin{lemma}\label{lemma: amp_amp}[\cite{gilyen2019quantum} Theorem 30]
\label{lemma: amplification}
Let $U$, $\Pi$, $\widetilde{\Pi} \in {\rm End}(\mathcal{H}_U)$ be linear operators on $\mathcal{H}_U$ such that $U$ is a unitary, and $\Pi$, $\widetilde{\Pi}$ are orthogonal projectors. 
Let $\gamma>1$ and $\delta,\epsilon \in (0,\frac{1}{2})$. 
Suppose that $\widetilde{\Pi}U\Pi=W \Sigma V^\dagger=\sum_{i}\varsigma_i\ket{w_i}\bra{v_i}$ is a singular value decomposition. 
Then there is an $m= \mathcal{O} \Big(\frac{\gamma}{\delta}
\log \left(\frac{\gamma}{\epsilon} \right)\Big)$ and an efficiently computable $\Phi\in\mathbb{R}^m$ such that
\begin{equation}
\left(\bra{+}\otimes\widetilde{\Pi}_{\leq\frac{1-\delta}{\gamma}}\right)U_\Phi \left(\ket{+}\otimes\Pi_{\leq\frac{1-\delta}{\gamma}}\right)=\sum_{i\colon\varsigma_i\leq \frac{1-\delta}{\gamma} }\tilde{\varsigma}_i\ket{w_i}\bra{v_i} , \text{ where } \Big|\!\Big|\frac{\tilde{\varsigma}_i}{\gamma\varsigma_i}-1 \Big|\!\Big|\leq \epsilon.
\end{equation}
Moreover, $U_\Phi$ can be implemented using a single ancilla qubit with $m$ uses of $U$ and $U^\dagger$, $m$ uses of C$_\Pi$NOT and $m$ uses of C$_{\widetilde{\Pi}}$NOT gates and $m$ single qubit gates.
Here,
\begin{itemize}
\item C$_\Pi$NOT$:=X \otimes \Pi + I \otimes (I - \Pi)$ and a similar definition for C$_{\widetilde{\Pi}}$NOT; see Definition 2 in \cite{gilyen2019quantum},
\item $U_\Phi$: alternating phase modulation sequence; see Definition 15 in \cite{gilyen2019quantum},
\item $\Pi_{\leq \delta}$, $\widetilde{\Pi}_{\leq \delta}$: singular value threshold projectors; see Definition 24 in \cite{gilyen2019quantum}.
\end{itemize}
\end{lemma}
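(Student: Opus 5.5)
The approach is the standard quantum singular value transformation (QSVT) argument of \cite{gilyen2019quantum}. First I would construct a real odd polynomial $P$ of degree $m=\mathcal{O}\big(\frac{\gamma}{\delta}\log\frac{\gamma}{\epsilon}\big)$. On the inner region $|x|\le \frac{1-\delta}{\gamma}$ it should act as a uniform relative-error amplifier by $\gamma$, and it should satisfy $|P(x)|\le 1$ on all of $[-1,1]$. Then I would apply the QSVT theorem for odd real polynomials to $\widetilde{\Pi}U\Pi$. That gives a phase sequence $\Phi\in\mathbb{R}^m$ with
\[
\big(\bra{+}\otimes\widetilde{\Pi}\big)U_\Phi\big(\ket{+}\otimes\Pi\big)=P^{(SV)}(\widetilde{\Pi}U\Pi)=\sum_i P(\varsigma_i)\ket{w_i}\bra{v_i}.
\]
Here the $\ket{+}$ ancilla implements the average of the sequences $\Phi$ and $-\Phi$, which extracts the real part of the complex QSVT polynomial. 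Restricting to the threshold projectors keeps exactly the terms with $\varsigma_i\le\frac{1-\delta}{\gamma}$, and we set $\tilde\varsigma_i:=P(\varsigma_i)$. The gate counts ($m$ uses of $U,U^\dagger$, C$_\Pi$NOT, C$_{\widetilde\Pi}$NOT, single-qubit rotations, one ancilla) then follow directly from the alternating phase modulation structure.

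The polynomial would take the form $P(x)=\gamma x\,Q(x)$, with $Q$ an even polynomial. The relative-error condition $\big|\frac{P(x)}{\gamma x}-1\big|\le\epsilon$ then reduces to $|Q(x)-1|\le\epsilon$ on $|x|\le\frac{1-\delta}{\gamma}$. I would take $Q$ to be a polynomial approximation of a smoothed rectangle function. It should equal $1$ up to error $\epsilon'$ on $\big[-\frac{1-\delta}{\gamma},\frac{1-\delta}{\gamma}\big]$ and be at most $\epsilon'/\gamma$ in absolute value for $|x|\ge\frac{1-\delta/2}{\gamma}$. It should also be bounded by $1+\epsilon'$ everywhere on $[-1,1]$. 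I would build it as a sum of two shifted polynomial approximations of the sign function, each obtained by Chebyshev-truncating an $\mathrm{erf}(kx)$ with $k\sim\frac{\gamma}{\delta}\sqrt{\log(\gamma/\epsilon)}$ (the approximation lemmas preceding Theorem 30 in \cite{gilyen2019quantum}). The transition window has width $\Theta(\delta/\gamma)$ and the target accuracy is $\epsilon/\gamma$, which gives degree $\mathcal{O}\big(\frac{\gamma}{\delta}\log\frac{\gamma}{\epsilon}\big)$.

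Next I would check $|P|\le1$ region by region. For $|x|\ge\frac{1-\delta/2}{\gamma}$ we have $|P(x)|\le\gamma|x|\cdot\epsilon'/\gamma\le\epsilon'$. In the transition window, $\gamma|x|\le 1-\delta/2$ and $|Q|\le1+\epsilon'$, so $|P|\le(1-\delta/2)(1+\epsilon')\le1$ once $\epsilon'\le\delta/4$. On the inner region the bound is immediate. Choosing $\epsilon'=\min(\epsilon,\delta/4)$ affects only logarithmic factors.

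The main obstacle is this polynomial construction. It must meet three requirements at once: uniform relative accuracy on the inner interval, the global bound $|P|\le 1$ (especially in the transition window near $1/\gamma$), and the $\log(\gamma/\epsilon)$ rather than $\log(1/\epsilon)$ dependence. That last point comes from needing absolute error $\epsilon/\gamma$ in $Q$ far from the origin, where it gets multiplied by $\gamma|x|\le\gamma$. Once $P$ is in hand, the rest is a direct citation of the QSVT and real-part lemmas.
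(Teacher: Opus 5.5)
The paper does not prove this lemma; it quotes it from \cite{gilyen2019quantum}. Your route is the one used there: $P(x)=\gamma x\,P'(x)$, where $P'$ is an even rectangle-function polynomial with a transition window of width $\Theta(\delta/\gamma)$ just below $1/\gamma$ and accuracy $\epsilon/\gamma$. This is followed by the real-polynomial QSVT corollary, which averages $\pm\Phi$ via the $|+\rangle$ qubit, and then restriction to the threshold projectors.

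One step, however, does not deliver the stated degree. You only require $|Q|\le 1+\epsilon'$ on $[-1,1]$, so you must take $\epsilon'\le\delta/4$ to keep $|P|\le 1$ in the transition window. The degree then becomes $O\bigl(\frac{\gamma}{\delta}\log\frac{\gamma}{\min(\epsilon,\delta)}\bigr)$, not $O\bigl(\frac{\gamma}{\delta}\log\frac{\gamma}{\epsilon}\bigr)$. For example, with $\gamma=2$, $\epsilon=1/4$ fixed and $\delta\to 0$, the ratio of the two grows like $\log(1/\delta)$. The logarithm's argument is exactly what the claimed bound tracks, so saying the choice ``affects only logarithmic factors'' does not recover the lemma as stated.

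The fix is to normalize, $Q\mapsto Q/(1+\epsilon')$, so that $|Q|\le 1$ on all of $[-1,1]$; this is how the rectangle-function approximation in \cite{gilyen2019quantum} is stated. Then $|P(x)|\le\gamma|x|\le 1-\delta/2$ for $|x|\le\frac{1-\delta/2}{\gamma}$, and $|P(x)|\le\epsilon'$ for larger $|x|$. No coupling between $\epsilon'$ and $\delta$ is needed. On the inner interval, $|Q-1|\le\frac{2\epsilon'}{1+\epsilon'}\le 2\epsilon'$, so $\epsilon'=\epsilon/2$ suffices and the degree is $O\bigl(\frac{\gamma}{\delta}\log\frac{\gamma}{\epsilon}\bigr)$ as claimed. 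The rest of your argument goes through as written: the region-by-region bounds, the $\pm\Phi$ averaging, the single-ancilla gate count, and $\tilde\varsigma_i=P(\varsigma_i)$ after applying the threshold projectors.
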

\begin{lemma}
\label{lemma: qsvt}[\cite{gilyen2019quantum} Theorem 56]
\label{lemma: theorem56}  
Suppose that $U$ is an
$(\alpha, a, \epsilon)$-encoding of a Hermitian matrix $A$. (See Definition 43 of~\cite{gilyen2019quantum} for the definition.)
If $P \in \mathbb{R}[x]$ is a degree-$d$ polynomial satisfying that
\begin{itemize}
\item for all $x \in [-1,1]$: $|P(x)| \leq \frac{1}{2}$,
\end{itemize}
then, there is a quantum circuit $\tilde{U}$, which is an $(1,a+2,4d \sqrt{\frac{\epsilon}{\alpha}})$-encoding of $P(A/\alpha)$ and
consists of $d$ applications of $U$ and $U^\dagger$ gates, a single application of controlled-$U$ and $\mathcal{O}((a+1)d)$
other one- and two-qubit gates.
\end{lemma}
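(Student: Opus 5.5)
The plan is to reduce to the singular value transformation machinery of \cite{gilyen2019quantum} in three stages. First we treat the exact case $\epsilon=0$ for a polynomial of definite parity. Then we remove the parity assumption by a linear combination of unitaries. Finally we account for the approximation error $\epsilon$ through a perturbation bound for bounded polynomials.

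\emph{Exact case with definite parity.} Suppose $U$ exactly block-encodes $A/\alpha$. Since $A$ is Hermitian, its eigendecomposition coincides, up to signs absorbed into the parity, with the singular value decomposition of $\widetilde{\Pi}U\Pi$, where $\Pi=\widetilde{\Pi}=\ket{0}\bra{0}^{\otimes a}\otimes\Ibb$.

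Write $P=P_{\rm even}+P_{\rm odd}$ with $P_{\rm even}(x)=\tfrac{1}{2}(P(x)+P(-x))$. Since $|P|\le\tfrac12$ on $[-1,1]$, each part satisfies $|P_{\rm even}|,|P_{\rm odd}|\le\tfrac12\le 1$ there. For a real polynomial $Q$ of definite parity bounded by $1$, quantum signal processing supplies phases $\Phi\in\Rbb^{d}$ such that the alternating phase sequence $U_\Phi$ block-encodes a complex polynomial $Q+i\,Q'$ of $A/\alpha$, with real part $Q$ (complementary polynomials exist). The sequence uses one ancilla for the C$_\Pi$NOT gates and $d$ alternating applications of $U$ and $U^\dagger$.

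To isolate the real part $Q(A/\alpha)$, we average $U_\Phi$ and $U_{-\Phi}$ using a Hadamard-controlled choice of phase sign. This can share the same control qubit, since only the rotation angles change.

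\emph{Removing the parity assumption.} We add one more ancilla qubit in $\ket{+}$ that selects between the even and odd sequences, and apply the two branches controlled on it. The odd branch has one more application of $U$ than the even one, and this is the source of the single controlled-$U$ in the statement. The resulting $\tfrac12$-weighted combination block-encodes $\tfrac12\big(2P_{\rm even}+2P_{\rm odd}\big)(A/\alpha)\cdot\tfrac{1}{1}$, after rescaling the targets to $2P_{\rm even}$ and $2P_{\rm odd}$, each still bounded by $1$. This yields $P(A/\alpha)$ with normalization $1$.

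Counting resources, we use $a$ original ancillas, one QSP/real-part ancilla and one LCU ancilla, for $a+2$ in total. There are $d$ uses of $U$ and $U^\dagger$, one controlled-$U$, and $\mathcal{O}((a+1)d)$ extra gates coming from the multi-controlled projector NOTs.

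\emph{Error analysis.} The main obstacle is the error bound. In general $U$ encodes $\tilde A/\alpha$ with $\|\tilde A/\alpha-A/\alpha\|\le\epsilon/\alpha$. The construction above then exactly encodes $P(\tilde A/\alpha)$, so it remains to bound $\|P(\tilde A/\alpha)-P(A/\alpha)\|$.

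A naive Lipschitz estimate via Markov's inequality would give a factor $d^2$. Instead I would follow the robustness lemma of \cite{gilyen2019quantum}, which works at the level of the QSP unitaries. The key fact is that $\sqrt{1-x^2}$ is $\tfrac12$-Hölder, so each of the $d$ signal unitaries moves by at most $\mathcal{O}\big(\sqrt{\epsilon/\alpha}\big)$ in operator norm. A telescoping sum over the $d$ factors then gives $\|P(\tilde A/\alpha)-P(A/\alpha)\|\le 4d\sqrt{\epsilon/\alpha}$, which is the claimed precision.
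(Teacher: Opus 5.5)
Your construction is the standard argument behind Theorem~56 of \cite{gilyen2019quantum}; the paper gives no proof of its own and only cites that result. The ingredients match: split $P$ by parity, realize $2P_{\rm even}$ and $2P_{\rm odd}$ as real parts of QSVT sequences, combine them with one LCU qubit, and account for $a+2$ ancillas with normalization $1$. The real parts come from averaging $U_\Phi$ and $U_{-\Phi}$ on the single C$_\Pi$NOT ancilla. Preparing that ancilla in $\ket{1}$ instead of $\ket{0}$ flips the sign of every phase, so $\ket{+}$ in and $\bra{+}$ out gives the average.

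Two small corrections to the bookkeeping:
\begin{itemize}
\item Which branch carries the extra factor depends on the parity of $d$; for even $d$ it is the even part, not the odd one.
\item A single controlled application suffices because both branches are alternating sequences $U,U^\dagger,U,\dots$ that share their first $d-1$ factors and differ only in phases. The LCU qubit selects those phases through controlled rotations on the QSVT ancilla.
\end{itemize}

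\textbf{The step that fails is the error analysis.} The $(\alpha,a,\epsilon)$-encoding definition only gives $\|A-\tilde A\|\le\epsilon$ with $\|\tilde A\|\le\alpha$. So $\|A/\alpha\|$ can be as large as $1+\epsilon/\alpha$, whereas the robustness lemma you invoke needs both arguments to be contractions. Your telescoping needs this too. At the matrix level, $A$ and $\tilde A$ have different singular vectors, so scalar H\"older continuity of $\sqrt{1-x^2}$ is not enough. One instead compares the sequences built from the canonical unitary dilations $W(B)$, with blocks $B$, $\sqrt{\Ibb-BB^\dagger}$, $\sqrt{\Ibb-B^\dagger B}$, $-B^\dagger$, for $B=\tilde A/\alpha$ and $B=A/\alpha$. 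The key inequality is $\|\sqrt{X}-\sqrt{Y}\|\le\sqrt{\|X-Y\|}$ for positive semidefinite $X,Y$. But $W(A/\alpha)$ does not exist when $\|A/\alpha\|>1$.

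This is not cosmetic: without an extra hypothesis the statement is false. Take $U=\Ibb$, $\alpha=1$, $A=(1+\eta)\Ibb$, which is a valid $(1,a,\eta)$-encoding. Take $P=\tfrac12T_d$ with $T_d$ the Chebyshev polynomial. Then $\|P(A)\|=\tfrac12\cosh\big(d\,\mathrm{arccosh}(1+\eta)\big)$, while every block of a unitary has norm at most $1$. For $d=100$, $\eta=1/800$ the error is therefore at least about $36$, although $4d\sqrt{\eta}\approx14$.

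You therefore need to add the hypothesis $\|A\|\le\alpha$. With it your argument closes:
\begin{itemize}
\item Per factor, $\|W(B)-W(C)\|\le\|B-C\|+\sqrt{2\|B-C\|}\le2\sqrt2\,\|B-C\|^{1/2}$.
\item After telescoping and the LCU averaging, the total error is at most $2\sqrt2\,d\sqrt{\epsilon/\alpha}\le4d\sqrt{\epsilon/\alpha}$.
\end{itemize}
Also, $\tilde A$ is in general not Hermitian. What the circuit realizes is $P^{\rm (SV)}_{\rm even}(\tilde A/\alpha)+P^{\rm (SV)}_{\rm odd}(\tilde A/\alpha)$, not ``$P(\tilde A/\alpha)$'', so the comparison must be phrased for singular value transforms.
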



\section{Abstract Algebra}
\label{sec: reviewalgebra}
In this section, we provide an overview of algebra, showing essential concepts to understand the meaning of torsion. As a consequence, we aim to comprehend the role of coefficients and how it can affect the Betti numbers -- a central problem within TDA, as mentioned in the main text. A more detailed introduction to abstract algebra can be found in standard textbook, e.g., \cite{gallian2021contemporary,grillet2007abstract}.

\begin{definition}[Group]
A group $\mathcal{G}$ is a set equipped with an operation, denoted as $*$, with the following so-called group axioms:
\begin{enumerate}
    \item \textbf{Closure:} if $a,b \in \mathcal{G}$ then $a * b \in X$. 
    \item \textbf{Associativity:} $a*(b*c) = (a*b) *c$.
    \item \textbf{Identity: } There is a namely identity element $e$ such that $a*e = e*a = a$. 
    \item \textbf{Inverse: } Every element $a$ has an inverse element $a^{-1}$ s.t. $a a^{-1} = a^{-1} a = e$. 
\end{enumerate}
\end{definition}

A group is said to be Abelian if the group operation between two elements commute, i.e., for any $a,b \in \mathcal{G}$, we have that $a * b = b*a$. Abelian group is very common in many contexts, e.g., any vector space is an Abelian group. 

\begin{definition}[Torsion]
 Given an Abelian group $\mathcal{G}$, an element $ x \in \mathcal{G}$ is a \textbf{torsion element} if there is some positive integer $n$ that makes $x*x*x*\cdots * x $ ($n$ times) an identity element $e$. If there is no such integer $n$ exists, then $x$ is called \textbf{free}. 
\end{definition}
\noindent
\textbf{Examples:} 
\begin{itemize}
    \item In $\mathbb{Z}_6$ (integers mod 6), which we conveniently denote as $\{ 0, 1,2,3,4,5 \}$, then $2$ is torsion element. 
    \item In $\mathbb{Z}$, there is no torsion except 0 -- the identity element. 
\end{itemize}

Given multiple groups, it is possible to ``build'' a larger group from these smaller groups while keeping their respective structure. The procedure is called \textit{direct sum}. 
\begin{definition}[Direct Sum]
    If $A$ and $B$ are two groups, then their direct sum is:
    \begin{align}
        A \oplus B = \{  (a,b) \ |\  a \in A, b \in B \}
    \end{align}
    with the component-wise operation is defined as:
    \begin{align}
        (a_1,b_1) + (a_2, b_2) = (a_1+ a_2, b_1 + b_2)
    \end{align}
    where $a_1,a_2 \in A, b_1,b_2 \in B$. 
\end{definition}
It can be seen that if $A,B$ is Abelian, then $A \oplus B$ is Abelian. The above construction holds for more composing groups, that is, we can build ``larger'' groups, e.g., $ A_1\oplus A_2\oplus A_3 \oplus \cdots \oplus A_n$.  \\

Given a group $\mathcal{G}$, assumed to be Abelian for simplicity, with group operation $*$. Let $x \in \mathcal{G}$, $n \in \mathbb{Z}$ be some integers, and define $nx \equiv x*x*x*\cdots *x$ ($n$ times). Then $\mathcal{G}$ is said to be finitely generated if every element of $\mathcal{G}$ can be expressed as $ \sum n_i x_i$ where $n_i \in \mathbb{Z}$ and $x_i \in \mathcal{G}$. In this case, $\{x_i\}$ is said to be the generator of $\mathcal{G}$. We point out the following examples to illustrate this definition of generator:
\begin{itemize}
    \item The Abelian group $\mathbb{Z}^2$ is generated by $(1,0)$ and $(0,1)$. 
    \item The group $\mathbb{Z}_6 \equiv \{ 0,1,2,3,4,5\}$ is generated by $1$.  
\end{itemize}

At the heart of abstract algebra is the following theorem, popularly known as the structure theorem: 
\begin{theorem}[structure theorem for Finitely Generated Abelian Groups]
\label{thm: structuredtheorem}
    Every finitely generated Abelian group $\mathcal{G}$ can be decomposed as:
    \begin{align}
        \mathcal{G} \cong \Zbb^r \oplus \Zbb_{d_1} \oplus \Zbb_{d_2} \oplus \cdots \oplus \Zbb_{d_m}
    \end{align}
    where:
    \begin{enumerate}
        \item $r$ = rank, which is the number of independent ``infinite'' directions. 
        \item Each $\Zbb_{d_i} $ is a finite cyclic group of size $d_i$. 
        \item It holds that $d_1 | d_2 | \cdots | d_m$ where $d_i | d_{i+1}$ denotes the divisibility. 
    \end{enumerate}
     The first part $\Zbb^r$ is called \textit{free} part and $\Zbb_{d_1} \oplus \Zbb_{d_2} \oplus \cdots \oplus \Zbb_{d_m} $ is called \textit{torsion} part.
\end{theorem}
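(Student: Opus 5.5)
This is the structure theorem for finitely generated Abelian groups. I would prove it via the Smith normal form of an integer presentation matrix, which is also the viewpoint the paper uses later (invariant factors of $\partial_{r+1}$).

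\emph{Step 1: a presentation.} Choose generators $x_1,\dots,x_k$ of $\mathcal{G}$. This gives a surjective homomorphism $\phi:\Zbb^k\to\mathcal{G}$ sending the $i$-th basis vector to $x_i$, so $\mathcal{G}\cong \Zbb^k/\ker\phi$.

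\emph{Step 2: the relation lattice is free of finite rank.} I would show that every subgroup $L\subseteq\Zbb^k$ is free of rank at most $k$. The argument is induction on $k$. Project $L$ onto the last coordinate; the image is a subgroup of $\Zbb$, hence of the form $d\Zbb$. If $d\neq 0$, pick an element of $L$ mapping to $d$; this splits off a copy of $\Zbb$, and the kernel of the projection is a subgroup of $\Zbb^{k-1}$, to which the inductive hypothesis applies. Consequently $\ker\phi$ is the column span of some integer matrix $A$ of size $k\times \ell$, and $\mathcal{G}\cong\operatorname{coker}A$.

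\emph{Step 3: Smith normal form (the main step).} I would show there are $S\in GL_k(\Zbb)$ and $T\in GL_\ell(\Zbb)$ such that $SAT$ is diagonal with entries $d_1|d_2|\cdots|d_s$, followed by zeros. The procedure is as follows.
\begin{enumerate}
\item Among all matrices equivalent to $A$ under integer row and column operations, pick one whose smallest nonzero entry in absolute value is minimal, and move that entry to position $(1,1)$.
\item Use division with remainder to clear the rest of the first row and column. A nonzero remainder would contradict minimality, so the clearing succeeds.
\item If $a_{11}$ fails to divide some entry $a_{ij}$, add row $i$ to row $1$ and repeat. This again produces a smaller remainder, contradicting minimality, so $a_{11}$ divides every entry.
\item Recurse on the lower-right block. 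Divisibility is preserved because all later entries remain integer combinations of multiples of $d_1$.
\end{enumerate}
Termination of this procedure (well-ordering of the absolute values) and the divisibility bookkeeping are the points I expect to require the most care.

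\emph{Step 4: read off the decomposition.} Unimodular changes of basis do not change the cokernel up to isomorphism, so
\[
\mathcal{G}\cong \Zbb^{k-s}\oplus\bigoplus_{i=1}^{s}\Zbb_{d_i}.
\]
Discarding the factors with $d_i=1$ (since $\Zbb_1=0$) gives the stated form with $r=k-s$ and $d_1|\cdots|d_m$. Finally, I would identify the torsion part $\bigoplus_i\Zbb_{d_i}$ with the set of torsion elements, and observe that $\mathcal{G}/\mathrm{Tor}\cong\Zbb^r$. Tensoring with $\mathbb{Q}$ then shows that $r$ is an invariant of $\mathcal{G}$.
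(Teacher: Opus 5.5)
The paper gives no proof of this statement. It is quoted as standard background, with pointers to algebra textbooks, and used as a black box, so there is no argument of the paper's to compare yours against. Your Smith-normal-form proof is correct and is the standard existence proof.

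A few remarks:
\begin{itemize}
\item Because you pick the matrix that minimizes the smallest nonzero entry over the whole equivalence class, each clearing step in Step~3 is a proof by contradiction. There is therefore no termination issue to worry about; the only induction in Step~3 is on the size of the matrix.
\item The divisibility $d_1\mid d_2$ holds because every matrix equivalent to the lower-right block has all its entries divisible by $d_1$. Operations on that block do not disturb the first row and column, since those are zero off the $(1,1)$ entry.
\item Step~2 proves more than you use. You only need $\ker\phi$ to be finitely generated, so that $\mathcal{G}$ is the cokernel of a finite integer matrix.
\item You should normalize signs, by multiplying rows by $-1$, so that each $d_i>0$.
\item Your route produces the invariant-factor form $d_1\mid\cdots\mid d_m$ directly. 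It also matches the Smith normal form viewpoint the paper later invokes for $\partial_{r+1}$.
\end{itemize}

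The statement as written asserts only existence, together with the meaning of $r$, which your $\otimes\,\mathbb{Q}$ argument makes well defined. You therefore do not need uniqueness of the $d_i$. However, the paper's later quantity $t_r(p)$, the number of $d_i$ divisible by $p$, must be independent of the chosen decomposition. Your decomposition gives this immediately: $\dim_{\mathbb{F}_p}(\mathcal{G}/p\mathcal{G}) = r + \#\{i : p \mid d_i\}$, and both $\mathcal{G}/p\mathcal{G}$ and $r$ are intrinsic to $\mathcal{G}$.
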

Aside from group, there are other algebraic objects, including ring and field. 
\begin{definition}[Ring]
    A ring $R$ is a set equipped with two operations: 
    \begin{enumerate}
        \item Addition (+), making $(R,+)$ an Abelian group. 
        \item Multiplication $(*)$, making $(R, *)$ a semingroup with an associativity $(a*b)*c = a*(b*c)$. 
        \item Distributivity: for all $a,b,c \in R$, it holds that:
        \begin{align}
            a*(b+c) = a*b + a*c \\
            (a+b)*c = a*c + b*c
        \end{align}
    \end{enumerate}
    A ring might not have a multiplicative identity. If it does, denote the identity as $1_R$ (which is not the identity element of the Abelian group under (+) operation), then we call $R$ a ring with unity. If the multiplicative operation commutes, for example, $a*b = b*a$, then the ring $R$ is called a commutative ring (possibly with unity). \\

    \noindent
    \textbf{Example:} $\Zbb$ (the integers) is a commutative ring with unity.
\end{definition}

\begin{definition}[Field]
    A field is a commutative ring $R$ with unity, where it holds that every element has a multiplicative inverse, e.g., for any $a \in R$, there is another $b$ such that $a*b = 1_R$. \\
    
    \noindent
    \textbf{Examples:} $\mathbb{Q}, \mathbb{R},\mathbb{C}$ are infinite fields. $\mathbb{Z}_p$ with $p$ prime is a field. 
\end{definition}

\begin{definition}[Characteristic of a field]
A characteristic of a field is the smallest positive number of times you add the multiplicative identity 1 to itself to reach the additive identity 0.
\end{definition}
If adding $1$ repeatedly never results in $0$, then the field is said to have characteristic zero. For any field, the characteristic of a field is always either $0$ or a prime number. \\

In the above, we have defined what it means to be a group and an Abelian group. It can be seen that a vector space is an Abelian group, with vector addition $(+)$ as a group operation. A vector space is usually defined over some field $\mathbb{F}$, and a vector space is also equipped with a scalar multiplication (with a few axioms associated). A generalization of vector space is called a \textit{module}, which is defined over a ring instead of a field. 
\begin{definition}[Module]
    Let $R$ be some ring. An $R$-module $M$ is:
    \begin{enumerate}
        \item An Abelian group $(M,+)$.
        \item Equipped with scalar multiplication $R \times M \longrightarrow M$, written $(r,m) \longrightarrow rm$ satisfying: 
        \begin{itemize}
            \item $r(m_1+ m_2) = r m_1 + rm_2$. 
            \item $(r_1+r_2) m = r_1  m + r_2 m  $. 
            \item $(r_1 r_2) m = r_1( r_2 m)$. 
            \item $1_R m = m$ (where $1_R$ denotes the identity element in the ring $R$). 
        \end{itemize}
    \end{enumerate}
\end{definition}
We recall a basis of a vector space in which an arbitrary element of some space can be written as a linear combination of the elements in the basis. For a set $S$, the free module $F(S)$ is made of (formal) linear combination of elements of $S$ with coefficients from a ring $R$. In linear algebra, given two vector spaces, one can construct the tensor product of them. In the setting of module, two modules can be tensor producted. Yet, the rule is different.
\begin{definition}[Tensor product of module]
\label{def: tensorproductofmodule}
    Given two $R$-modules $M,N$, the tensor product $M \otimes_R N$ is defined as:
    \begin{align}
        M \otimes_R N = F / \text{(bilinear relation)}
    \end{align}
    where $F$ is the free module generated by the pair $(m,n)$ and the bilinear relation is: 
    \begin{itemize}
        \item $(m_1+ m_2, n) = (m_1,n) + (m_2,n)$. 
        \item $(m,n_1+ n_2) = (m, n_1) + (m,n_2)$. 
        \item $(rm,n) = r (m,n) =(m, rn)$. 
    \end{itemize}
   The following properties are imposed within tensor product: 
    \begin{enumerate}
        \item $(m_1+ m_2) \otimes n = m_1 \otimes n + m_2 \otimes n$. 
        \item $ m \otimes (n_1 + n_2) = m \otimes n_1 + m \otimes n_2$. 
        \item $(rm)\otimes n = m \otimes (rn)$. 
    \end{enumerate}
\end{definition}
It can be seen that any Abelian group $\mathcal{G}$, with group operation $(*)$, is a $\Zbb$-module, by defining the action $n x = x * x * x *\cdots *x $ (n times) for $n \in \Zbb$ and $x \in \mathcal{G}$. We point out a few properties regarding computing tensor product between $\Zbb$-modules and other module:
\begin{lemma}[Tensor product of $\Zbb$-modules]
\label{lemma: propertytensorproductmodule}
Let $\Zbb$ be the set of integers. Then it holds that:
    \begin{itemize}
    \item $\Zbb^r \otimes_{\Zbb} N \cong N^r$ for arbitrary $\Zbb$-module $N$.
    \item $\Zbb_m \otimes_{\Zbb} \Zbb_n \cong \Zbb_{\rm gcd (m,n)} $.
\item $\Zbb_m \otimes_{\Zbb} Q = 0$ for $Q$ being a field of characteristic zero. 
\item $\Zbb^r \otimes_{\Zbb} Q \cong Q^r $
\end{itemize}
\end{lemma}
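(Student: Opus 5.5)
The four items of Lemma~\ref{lemma: propertytensorproductmodule} all follow from two standard facts about $\otimes_{\Zbb}$. I will first record these facts and then derive the items in order.

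\emph{Fact 1: additivity.} $\otimes_{\Zbb}$ distributes over finite direct sums. Concretely, $(A\oplus B)\otimes_{\Zbb} N\cong (A\otimes_{\Zbb} N)\oplus(B\otimes_{\Zbb} N)$. I will prove this by writing down the two maps $(a,b)\otimes n\mapsto (a\otimes n, b\otimes n)$ and $(a\otimes n, b\otimes n')\mapsto (a,0)\otimes n+(0,b)\otimes n'$. Both are well defined because they respect the bilinear relations of Definition~\ref{def: tensorproductofmodule}, and they are mutually inverse.

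\emph{Fact 2: unit and cyclic modules.} $\Zbb\otimes_{\Zbb} N\cong N$ via $k\otimes n\mapsto kn$, with inverse $n\mapsto 1\otimes n$. Moreover, $\Zbb_m\otimes_{\Zbb} N\cong N/mN$. For the second claim I will use right exactness. Tensoring the presentation $\Zbb\xrightarrow{\times m}\Zbb\to\Zbb_m\to 0$ with $N$ gives $N\xrightarrow{\times m}N\to \Zbb_m\otimes N\to 0$. If one prefers to avoid right exactness, the isomorphism can instead be checked directly: the map $\bar k\otimes n\mapsto \overline{kn}\in N/mN$ is well defined because $m\otimes n = 1\otimes mn$, and its inverse is $\bar n\mapsto \bar 1\otimes n$.

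\emph{Deriving the items.}
\begin{itemize}
\item Item 1 follows from Facts 1 and 2 by induction on $r$: $\Zbb^r\otimes N\cong (\Zbb\otimes N)^r\cong N^r$.
\item Item 4 is the special case $N=Q$ of Item 1.
\item Item 2: Fact 2 gives $\Zbb_m\otimes\Zbb_n\cong \Zbb_n/m\Zbb_n$. The subgroup $m\Zbb_n$ is generated by $\gcd(m,n)$ modulo $n$, by B\'ezout. Hence the quotient is cyclic of order $\gcd(m,n)$.
\item Item 3: Fact 2 gives $\Zbb_m\otimes Q\cong Q/mQ$. Since $Q$ has characteristic zero, the element $m\cdot 1_Q$ is invertible in $Q$, so $mQ=Q$ and the quotient vanishes. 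For a direct argument, for any $x\in Q$ we have $\bar k\otimes x=\bar k\otimes m(x/m)=\overline{mk}\otimes (x/m)=0$.
\end{itemize}

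\emph{Main obstacle.} The only genuinely delicate point is the well-definedness of the maps out of $M\otimes_{\Zbb} N$ in Facts 1 and 2. Each such map must be checked to kill the bilinear relations, equivalently verified via the universal property of the tensor product. I will state that universal property explicitly once and reuse it for every map. After that, all remaining steps are routine computations.
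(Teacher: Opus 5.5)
Your proof is correct. The paper does not actually prove this lemma: it lists the four items as standard properties, points to algebra textbooks, and moves straight to worked examples. Your argument is therefore a self-contained route rather than a variant of something in the paper.

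Reducing everything to additivity of $\otimes_{\Zbb}$ over finite direct sums and to $\Zbb_m\otimes_{\Zbb}N\cong N/mN$ buys two things:
\begin{itemize}
\item It proves a more general statement, from which Items 2 and 3 follow as the cases $N=\Zbb_n$ and $N=Q$.
\item It makes explicit the distributivity over $\oplus$ that the paper uses without stating it. This happens in Example 1 and when expanding $H_r(\mathbb{F}_p)$ via the structure theorem.
\end{itemize}
That explicitness matters. Your Fact 1 together with Items 3 and 4 gives $(\Zbb^2\oplus\Zbb_4)\otimes_{\Zbb}Q\cong Q^2$, not $0$ as the paper's Example 2 asserts.

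One small slip in Fact 2: your justification is attached to the wrong map. The identity $\bar 1\otimes mn=\bar m\otimes n=0$ is what makes the \emph{inverse} $\bar n\mapsto\bar 1\otimes n$ well defined on $N/mN$. The forward map $\bar k\otimes n\mapsto\overline{kn}$ is well defined for a different reason. Replacing $k$ by $k+\ell m$ changes $kn$ by $\ell m n\in mN$, and the map is bilinear.
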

For illustration, we provide the following examples. \\

\noindent
\textbf{Example 1.} Let $M = \Zbb^2 \oplus \Zbb_4$, and $N = \Zbb_6$ be $\Zbb$-modules. Then we have the following. 
\begin{align}
    M \otimes_{\Zbb} N &= (\Zbb^2 \oplus \Zbb_4) \otimes_{\Zbb} \Zbb_6 \\
    &= ( \Zbb^2 \otimes_{\Zbb} \Zbb_6) \oplus (\Zbb_4  \otimes_{\Zbb} \Zbb_6 )\\
    &= \Zbb_6^2 \oplus \Zbb_2
\end{align}

\noindent
\textbf{Example 2.} Let $M = \Zbb^2 \oplus \Zbb_4$ and $N = Q$. Then we have: 
\begin{align}
    M \otimes_{\Zbb} N &= ( \Zbb^2 \oplus \Zbb_4 ) \otimes_{\Zbb} Q \\
    &= ( \Zbb^2 \otimes_{\Zbb} Q) \oplus (\Zbb_4 \otimes_{\Zbb} Q) \\
    &= 0 
\end{align}
From the above example, we see that whenever a $\Zbb$-module is tensor producted with a field (of characteristic zero), then the resulting module is zero. Subsequently, we will point out that this accounts for the fact that homology built on real coefficients miss the torsion part. 

\section{Algebraic Topology}
\label{sec: reviewofalgebraictopology}

This appendix provides a concise overview of the fundamental concepts in algebraic topology relevant to our work. We primarily follow the exposition of Nakahara~\cite{nakahara2018geometry, hatcher2005algebraic}, to which we refer interested readers for comprehensive treatment of the subject.

\begin{definition}[Simplex]
\label{def:simplex}
Let $p_0, p_1, \ldots, p_r \in \mathbb{R}^m$ be geometrically independent points where $m \geq r$. The $r$-simplex $\sigma_r = [p_0, p_1, \ldots, p_r]$ is defined as:
\begin{equation}
\sigma_r = \left\{ x \in \mathbb{R}^m : x = \sum_{i=0}^r c_i p_i, \quad c_i \geq 0, \quad \sum_{i=0}^r c_i = 1 \right\},
\end{equation}
where the coefficients $(c_0, c_1, \ldots, c_r)$ are called the barycentric coordinates of $x$.
\end{definition}

Geometrically, a 0-simplex $[p_0]$ represents a point, a 1-simplex $[p_0, p_1]$ represents a line segment, a 2-simplex $[p_0, p_1, p_2]$ represents a triangle, and higher-dimensional simplices generalize this pattern to higher dimensions.

\begin{center}
\begin{tikzpicture}[scale = 1.7, every node/.style={font=\small}]
    \filldraw (-5,-1) circle (1pt);
    \node[above left] at  (-5,-1) {$v_0$}; 
    \node[below right] at  (-5.5,-1) {$0$-simplex}; 
    
    \draw (-3.5,-1) -- (-2.5,-1);
    \filldraw (-3.5,-1) circle (1pt);
    \filldraw (-2.5,-1) circle (1pt);
    \node[above] at (-3.5,-1) {$v_0$};
    \node[above] at (-2.5,-1) {$v_1$}; 
    \node[below] at (-3.0,-1) {$1$-simplex}; 
    
    \coordinate (v0) at (-1,-0.5); 
    \coordinate (v1) at (-1.6, -1.5); 
    \coordinate (v2) at (-0.4, -1.5); 
    \draw[fill=blue!10] (v0)--(v1)--(v2)--cycle; 
    \filldraw (v0) circle (1pt);
    \filldraw (v1) circle (1pt);
    \filldraw (v2) circle (1pt);
    \node[above] at (v0) {$v_0$};
    \node[below left] at (v1) {$v_1$};
    \node[below right] at (v2) {$v_2$};
    \node[below] at (-1.0, -1.8) {$2$-simplex}; 
    
    \coordinate (P0) at (1.5,-0.4);
    \coordinate (P1) at (0.7,-1.4);
    \coordinate (P2) at (1.9,-1.6);
    \coordinate (P3) at (1.9,-1.0);
    \draw[fill=blue!10] (P0) -- (P1) -- (P2) -- cycle; 
    \draw[thick] (P0) -- (P1) -- (P2) -- (P0);
    \draw[fill=blue!10,thick] (P0) -- (P3) -- (P2);
    \draw[dashed] (P1) -- (P3);
    \filldraw (P0) circle (1pt);
    \filldraw (P1) circle (1pt);
    \filldraw (P2) circle (1pt);
    \filldraw (P3) circle (1pt);
    \node[above] at (P0) {$v_0$};
    \node[below left] at (P1) {$v_1$};
    \node[below right] at (P2) {$v_2$};
    \node[right] at (P3) {$v_3$}; 
    \node[below] at (1.3, -1.8) {$3$-simplex}; 
\end{tikzpicture}
\end{center}

An $r$-simplex can be assigned an orientation. For instance, the 1-simplex $[p_0, p_1]$ has orientation $p_0 \to p_1$, which differs from $[p_1, p_0]$. Throughout this work, we adopt the convention that for an $r$-simplex $[p_0, p_1, \ldots, p_r]$, the indices are ordered from low to high, indicating the canonical orientation.

\begin{definition}[Face and simplicial complex]
For an $r$-simplex $[p_0, p_1, \ldots, p_r]$, any $(s+1)$-subset of its vertices defines an $s$-face $\sigma_s$ where $s \leq r$. A simplicial complex $K$ is a finite collection of simplices satisfying:
\begin{enumerate}
\item Every face of a simplex in $K$ is also in $K$
\item The intersection of any two simplices in $K$ is either empty or a common face of both simplices
\end{enumerate}
The dimension of $K$ is $\dim(K) = \max\{r : \sigma_r \in K\}$.
\end{definition}

\begin{definition}[Chain group over real field]
\label{def: chaingroup}
Let $K$ be an $n$-dimensional simplicial complex. The $r$-th chain group $C_r^K$ is the free abelian group generated by the oriented $r$-simplices of $K$. For $r > \dim(K)$, we define $C_r^K = 0$. Formally, let $S_r^K = \{\sigma_{r,1}, \sigma_{r,2}, \ldots, \sigma_{r,|S_r^K|}\}$ denote the set of $r$-simplices in $K$. An $r$-chain is an element of the form:
\begin{equation}
c_r = \sum_{i=1}^{|S_r^K|} c_i \sigma_{r,i}
\end{equation}
where $c_i \in \mathbb{R}$ are real coefficients. The group operation is defined by:
\begin{equation}
c_r^{(1)} + c_r^{(2)} = \sum_{i=1}^{|S_r^K|} \left( c_i^{(1)} + c_i^{(2)} \right) \sigma_{r,i},
\end{equation}
making $C_r^K$ a free abelian group of rank $|S_r^K|$.
\end{definition}
In fact, since the coefficients $\{ c_i \}$ belong to a field $\Rbb$, the group $C_r^K$ is also a vector space. In the following, we use the Abelian group/vector space interchangeably. The boundary operator $\partial_r : C_r^K \to C_{r-1}^K$ is fundamental to homological algebra.

\begin{definition}[Boundary operator]
For an $r$-simplex $[p_0, p_1, \ldots, p_r]$, the boundary operator is defined as:
\begin{equation}
\partial_r [p_0, p_1, \ldots, p_r] = \sum_{i=0}^r (-1)^i [p_0, p_1, \ldots, \hat{p_i}, \ldots, p_r],
\end{equation}
where $\hat{p_i}$ indicates that vertex $p_i$ is omitted, yielding an $(r{-}1)$-simplex.

For an $r$-chain $c_r = \sum_{i=1}^{|S_r^K|} c_i \sigma_{r,i}$, we extend linearly:
\begin{equation}
\partial_r c_r = \sum_{i=1}^{|S_r^K|} c_i \partial_r \sigma_{r,i}.
\end{equation}
\end{definition}

The boundary operators form a chain complex:
\begin{equation}
0 \xrightarrow{} C_n^K \xrightarrow{\partial_n} C_{n-1}^K \xrightarrow{\partial_{n-1}} \cdots \xrightarrow{\partial_1} C_0^K \xrightarrow{\partial_0} 0.
\end{equation}

As an illustration, we consider $\partial_2$ and its action on a $2$-simplex:
\begin{center}
    \begin{tikzpicture}[scale = 1.7]
    \coordinate (v0) at (-1,-0.5); 
    \coordinate (v1) at (-1.6, -1.5); 
    \coordinate (v2) at (-0.4, -1.5); 
    \draw[fill=blue!10] (v0)--(v1)--(v2)--cycle; 
    \filldraw (v0) circle (1pt);
    \filldraw (v1) circle (1pt);
    \filldraw (v2) circle (1pt);
    \node[above] at (v0) {$v_0$};
    \node[below left] at (v1) {$v_1$};
    \node[below right] at (v2) {$v_2$};
    \node at (-2.8, -1) {$\partial_2 $};
    \draw[->] (-2.6,-1) -- (-1.6,-1);
    \node at (-2.2, -0.8)  {action}; 
    \node at (0.5, -1) {=};
    \filldraw (2, -0.5) circle (1pt); 
    \node[above] at (2,-0.5) {$v_0$};
    \filldraw (1.4, -1.5) circle( 1pt);
    \node[below left] at (1.4,-1.5) {$v_1$};
    \draw (2,-0.5) -- (1.4,-1.5);
    \filldraw (3, -1.5) circle (1pt);
    \node[below left] at (3,-1.5) {$v_1$};
    \filldraw (4.2, -1.5) circle (1pt);
    \node[below right] at (4.2,-1.5) {$v_2$};
    \draw (3,-1.5) -- (4.2, -1.5); 
    \filldraw (5.2, -0.5) circle (1pt);
    \node[above] at (5.2,-0.5) {$v_0$};
    \filldraw (5.8, -1.5) circle (1pt);
    \node[below right] at (5.8,-1.5) {$v_2$};
    \draw (5.2,-0.5) -- (5.8, -1.5);
    \node at (2.5, -1) { $+$};
    \node at (4.5, -1) {$-$};
\end{tikzpicture} 
\end{center}
\begin{definition}[Cycles, boundaries, and homology]
An $r$-chain $c_r$ is called an $r$-cycle if $\partial_r c_r = 0$. The collection of all $r$-cycles forms the $r$-cycle group $Z_r := \ker(\partial_r)$. Conversely, an $r$-chain $c_r$ is called an $r$-boundary if there exists an $(r{+}1)$-chain $d_{r+1}$ such that $\partial_{r+1} d_{r+1} = c_r$. The set of all $r$-boundaries forms the $r$-boundary group $B_r := \textnormal{im}(\partial_{r+1})$.
\end{definition}
A fundamental property of boundary operators is that $\partial_r \circ \partial_{r+1} = 0$ for any $r =0,1,2,...,n-1$, i.e., $\partial^2 = 0$, which ensures that every boundary is also a cycle  $B_r \subseteq Z_r$. This inclusion allows us to define the $r$-th homology group/space as the quotient group/space 
\begin{align}
    H_r(K, \mathbb{R}) = Z_r / B_r
\end{align}
which captures the notion of cycles that are not boundaries.

\begin{definition}[Betti numbers]
The $r$-th Betti number of the simplicial complex $K$ is defined as:
\begin{equation}
\beta_r(K) = \textnormal{dim}(  H_r(K, \mathbb{R})) = \textnormal{dim}(Z_r) - \textnormal{dim}(B_r).
\end{equation}
\end{definition}
In the group-theoretic language, the above dimension, e.g., $\textnormal{dim}(H_r^K) $, is replaced by $\rm rank (H_r^K)$. As we mentioned, we use the notion of group/space interchangeably. For computational purposes, we can utilize the combinatorial Laplacian:
\begin{equation}
\Delta_r = \partial_{r+1} \partial_{r+1}^\dagger + \partial_r^\dagger \partial_r,
\end{equation}
where $\partial_r^\dagger$ denotes the adjoint of $\partial_r$. A fundamental result in algebraic topology establishes that:
\begin{equation}
H_r(K, \mathbb{R}) \cong \ker(\Delta_r),
\end{equation}
providing a direct method for computing Betti numbers via kernel dimension.

A central theorem in algebraic topology states that homology groups constitute topological invariants~\cite{hatcher2005algebraic}:

\begin{theorem}[Topological invariance of homology]
If two topological spaces $X$ and $Y$ are homeomorphic, then their homology groups are isomorphic: $H_r(X, \mathbb{R}) \cong H_r(Y, \mathbb{R})$ for all $r \geq 0$. Consequently, their Betti numbers are equal: $\beta_r(X) = \beta_r(Y)$.
\end{theorem}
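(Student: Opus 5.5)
The plan is to deduce the statement from a homology theory that is \emph{functorial in continuous maps}, namely singular homology, and then to identify it with the simplicial homology $H_r(K,\Rbb)=Z_r/B_r$ defined above. For a space $X$, let $S_r(X)$ be the real vector space spanned by continuous maps $\Delta^r\to X$ from the standard $r$-simplex. Its boundary is defined by the same alternating face formula as $\partial_r$ in the definition of the boundary operator, so $\partial^2=0$ and we obtain singular homology $H^{\rm sing}_r(X,\Rbb)$.

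The first step is functoriality. A continuous map $f:X\to Y$ induces $f_\#:S_r(X)\to S_r(Y)$ by post-composition, $\sigma\mapsto f\circ\sigma$. This map commutes with the face maps, hence with $\partial$. It therefore sends cycles to cycles and boundaries to boundaries, and descends to a linear map $f_*$ on homology. The construction visibly satisfies $(g\circ f)_*=g_*\circ f_*$ and $(\mathrm{id}_X)_*=\mathrm{id}$.

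Now let $f:X\to Y$ be a homeomorphism with inverse $g$. Then $g_*f_*=(gf)_*=\mathrm{id}$ and $f_*g_*=\mathrm{id}$, so $f_*$ is an isomorphism in every degree $r$. This step is purely formal and needs no homotopy invariance or excision.

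The second step, which I expect to be the main obstacle, is to show that for a simplicial complex $K$ with geometric realization $|K|$, the natural chain map $C_r^K\to S_r(|K|)$ induces an isomorphism $H_r(K,\Rbb)\cong H^{\rm sing}_r(|K|,\Rbb)$. This map sends an ordered simplex $[p_0,\dots,p_r]$ to its characteristic affine map $\Delta^r\to|K|$. I would prove the isomorphism by induction on the number of simplices, or on skeleta $K^{(k)}$, in the following order:
\begin{enumerate}
\item Write down the long exact sequences of the pair $(K^{(k)},K^{(k-1)})$ in both theories and check that they are compatible with the natural map.
\item Compute the relative groups directly. The simplicial relative chains are free on the $k$-simplices and concentrated in degree $k$. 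On the singular side, excision together with homotopy invariance identifies the relative homology with $\bigoplus \tilde H_*(\Delta^k,\partial\Delta^k)$, which is $\Rbb$ per $k$-simplex in degree $k$ and zero otherwise.
\item Apply the five lemma to pass from $K^{(k-1)}$ to $K^{(k)}$. Since $K$ is finite, this finishes the induction.
\end{enumerate}
Homotopy invariance and excision for singular homology are the genuinely nontrivial inputs; excision requires the barycentric subdivision argument. I would cite these as standard facts, e.g.\ from \cite{hatcher2002algebraic}, rather than reprove them.

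Combining the two steps gives the result for triangulated spaces. If $X\cong|K|$ and $Y\cong|L|$ are homeomorphic, then
$$H_r(K,\Rbb)\cong H^{\rm sing}_r(X,\Rbb)\cong H^{\rm sing}_r(Y,\Rbb)\cong H_r(L,\Rbb).$$
For general spaces we simply define $H_r$ to be singular homology, and the first step alone applies. Finally, the $\beta_r$ are the dimensions of these real vector spaces, and isomorphic vector spaces have equal dimension, so $\beta_r(X)=\beta_r(Y)$. The same argument works verbatim with $\Zbb$ or $\mathbb{F}_p$ coefficients, which is the form of invariance used implicitly elsewhere in the paper.
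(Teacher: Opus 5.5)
Your argument is correct and is the standard one: the paper gives no proof of this statement and only cites Hatcher, where invariance is obtained exactly as you outline, from the formal functoriality of singular homology together with the simplicial--singular comparison theorem (Thm.~2.27 there), which is proved by induction on skeleta using the long exact sequences of pairs, excision, and the five lemma. The one step you should make explicit is that the five lemma needs the comparison map on the relative groups to be an isomorphism, not merely that both sides are abstractly $\Rbb^{\#\{k\text{-simplices}\}}$ in degree $k$; this holds because the identity map $\Delta^k\to\Delta^k$ represents a generator of $H_k(\Delta^k,\partial\Delta^k;\Rbb)$, a fact itself proved by induction on $k$ using the triple $(\Delta^k,\partial\Delta^k,\Lambda)$, with $\Lambda$ the union of all but one face.
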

This invariance property makes Betti numbers powerful tools for topological classification and forms the mathematical foundation for their applications in topological data analysis.

\section{Role of Coefficients}

\subsection{ Theory }
\label{sec: roleofcoefficients}
In the above, we have built the chain group/space by taking (formal) linear combination of simplexes, with coefficients coming from a field. More concretely, we have defined the $r$-chain as: 
\begin{equation}
c_r = \sum_{i=1}^{|S_r^K|} c_i \sigma_{r,i}
\end{equation}
where $\{c_i\} \in \Rbb$. Instead, if we choose $\{c_i\}$ from the set of integers $\Zbb$, then the resulting set $C_r^K \equiv \{c_r = \sum_{i=1}^{|S_r^K|} c_i \sigma_{r,i}  \} $ is not a vector space, but just an Abelian group. We emphasize the subtlety that a vector space is inherently an Abelian group, but an Abelian group is not necessarily a vector space (as the multiplication by a scalar might not be properly defined, and also the scalar does not necessarily come from a field).  We note that in what follows, we implicitly understand that we are considering the simplicial complex $K$, and thus we would ignore $K$ in the notation. In other words, $H_r(\mathbb{R})$ appear below are exactly $ H_r(K,\mathbb{R})$ that we defined in the previous appendix. 

Changing the coefficients results in the change in the algebraic structure of homology groups. Suppose that we begin with the coefficients $\{c_i\} \in \Zbb$, then the $k$-th homology group $H_r(\Zbb)$ is an Abelian group (where we specify $\Zbb$ to explicitly imply the choice of coefficients from $\Zbb$). According to Theorem \ref{thm: structuredtheorem}, $H_r(\Zbb)$ admits the following decomposition:
\begin{align}
    H_r(\Zbb) \cong  \rm Free \big( H_r(\Zbb) \big) \oplus \rm Torsion\big( H_r(\Zbb)  \big)
\end{align}
where the free part $\rm Free$ generally has the form $\Zbb^r$ in which $r$ is called the rank, or the Betti numbers. The torsion part $\rm Tor$ is generally of the form $\Zbb_{d_1} \oplus \Zbb_{d_2} \oplus \cdots $ with the divisibility relation $d_i | d_{i+1}$. We can see that the $\rm Tor$ part is the main difference between homology with integer coefficients and real coefficients. To be more specific, if we change the choice of coefficients $\{c_i\}$ from $\Zbb $ to $\Rbb$, then there is a so-called universal coefficient theorem, which relates the change in the structure of homology group (under corresponding coefficients' type):
\begin{align}
    H_r(\Rbb) \cong H_r(\Zbb) \otimes_{\Zbb} \Rbb  \oplus \rm Tor \left( H_{r-1} (\Zbb), \Rbb   \right)
\end{align}
where the last term $\rm Tor \left( H_{r-1} (\Zbb), \Rbb   \right) $ is the \textit{torsion product functor}, which is defined as:
\begin{definition}
    [Torsion product functor]
    Let $\Zbb$ be the set of integers, and $m$ be some integers. Then for a cyclic group $\Zbb_m$ and another $\Zbb$-module $R$, $\rm Tor (\Zbb_m,  R) \cong [ r \in R | \  m r = 0 ] $. In other words, it is the $m$-torsion subgroup of $R$. 
\end{definition}
We remark that this $\rm Tor$ is different from the $\rm Torsion$ that we used earlier. For convenience, we point out the following useful properties associated with $\rm Tor$, and refer the readers to standard textbook for derivation. 
\begin{lemma}[$\rm Tor$ of $\Zbb$-modules]
\label{lemma: torsionfunctor}
Let $\Zbb$ be the set of integers, $p$ be a prime number and $R$ be some $\Zbb$-module.
\begin{itemize}
\item $\rm Tor (\Zbb, R) = 0$
    \item  $  \rm Tor \left( \bigoplus_i \Zbb_{m_i} , R \right)  = \bigoplus_i \rm Tor \left( \Zbb_{m_i}, R\right)$.
    \item Let $\mathbb{F}_p \equiv \Zbb/p\Zbb$ be the finite field of size $p$. Then:
    \begin{align}
    \rm Tor \left( \Zbb_m, \mathbb{F}_p \right)  \cong \begin{cases}
        \mathbb{F}_p \text{\ if p $\mid$ n} \\
        0  \text{\ if p $\nmid$ n} \\
    \end{cases}
\end{align}
\item For arbitrary $m$ (not necessarily a prime), it holds that:
\begin{align}
    \rm Tor (\Zbb_n, \Zbb_m) \cong \Zbb_{\rm gcd(m,n)}
\end{align}
\item For $R = \mathbb{Q}$ -- a divisible group, $ \rm Tor \left( \Zbb_{m}, R\right) = 0$ for any finite $m$. 
\end{itemize}
\end{lemma}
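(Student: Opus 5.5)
The plan is to work from the definition of $\rm Tor$ via a free resolution, and then reduce each bullet of Lemma~\ref{lemma: torsionfunctor} to computing the kernel of a multiplication map. Concretely, for the cyclic group $\Zbb_m$ I would use the short free resolution
\begin{align}
0 \longrightarrow \Zbb \xrightarrow{\times m} \Zbb \longrightarrow \Zbb_m \longrightarrow 0 .
\end{align}
Tensoring with a $\Zbb$-module $R$ and using $\Zbb \otimes_\Zbb R \cong R$ (first bullet of Lemma~\ref{lemma: propertytensorproductmodule} with exponent $1$), the left end of the resulting sequence becomes $R \xrightarrow{\times m} R$. Its kernel $\{x\in R : mx = 0\}$ is by definition $\rm Tor(\Zbb_m,R)$. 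This agrees with the definition of the torsion product functor given just before the lemma, so the two viewpoints are consistent.

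With this in hand, the bullets follow in order.

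\textbf{$\rm Tor(\Zbb,R)$.} Since $\Zbb$ is free, its resolution is $0\to\Zbb\xrightarrow{\mathrm{id}}\Zbb$, with no first syzygy. Hence the higher derived term vanishes and $\rm Tor(\Zbb,R)=0$.

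\textbf{Additivity.} The direct sum of the resolutions of the $\Zbb_{m_i}$ is a resolution of $\bigoplus_i \Zbb_{m_i}$. Tensor product commutes with finite direct sums, so the kernel splits componentwise: the kernel of $\bigoplus_i(\times m_i)$ on $R^{\oplus}$ is $\bigoplus_i \ker(\times m_i)$.

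\textbf{$\rm Tor(\Zbb_m,\mathbb{F}_p)$.} Here I would compute $\ker(\times m)$ on $\mathbb{F}_p$. If $p\mid m$, multiplication by $m$ is the zero map, so the kernel is all of $\mathbb{F}_p$. If $p\nmid m$, then $m$ is a unit in the field $\mathbb{F}_p$, so the kernel is $0$. The statement's ``$p\mid n$'' should be read as $p\mid m$.

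\textbf{$\rm Tor(\Zbb_n,\Zbb_m)$.} Writing $g=\gcd(m,n)$, I need the elements $x\in\Zbb_m$ with $nx\equiv 0 \pmod m$. This condition is equivalent to $(m/g)\mid x$. So the kernel is the subgroup generated by $m/g$, which is cyclic of order $g$, i.e.\ isomorphic to $\Zbb_{g}$.

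\textbf{$\rm Tor(\Zbb_m,\mathbb{Q})$.} $\mathbb{Q}$ is torsion-free: $mx=0$ with $m\neq 0$ forces $x=0$. Hence the kernel is zero. Divisibility of $\mathbb{Q}$ is not even needed; torsion-freeness suffices. The same argument covers $\Rbb$ and any field of characteristic zero, which is what the universal coefficient discussion uses.

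The only genuinely nontrivial ingredient is that $\rm Tor$ is independent of the chosen free resolution. This is what lets us identify $\rm Tor_1(\Zbb_m,R)$ with $\ker(\times m)$ and use direct sums of resolutions in the additivity step. I would cite this standard fact (balancing and independence of $\rm Tor$, as in \cite{hatcher2002algebraic}) rather than reprove it; everything else is the elementary kernel computations above.
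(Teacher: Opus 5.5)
Your proposal is correct and is exactly the standard free-resolution derivation the paper defers to. The paper gives no proof, only the definition $\mathrm{Tor}(\Zbb_m,R)=\{r\in R : mr=0\}$ and a pointer to textbooks, while your resolution $0\to\Zbb\xrightarrow{\times m}\Zbb\to\Zbb_m\to 0$ recovers that definition and reduces every bullet to a kernel computation. Your side remarks are also right and worth keeping: the case condition should read $p\mid m$; treating $\Zbb$ via its trivial resolution, rather than as $\Zbb_0$ in the $m$-torsion formula (which would wrongly give $R$), is what makes the first bullet consistent; and the vanishing for $\mathbb{Q}$, and for $\Rbb$ as used in the universal coefficient step, comes from torsion-freeness rather than divisibility, since $\mathbb{Q}/\Zbb$ is divisible yet $\mathrm{Tor}(\Zbb_m,\mathbb{Q}/\Zbb)\cong\Zbb_m$.
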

\noindent
Thus, a direct use of the above lemma leads us to:
\begin{align}
     H_r(\Rbb) \cong H_r(\Zbb) \otimes_{\Zbb} \Rbb 
\end{align}
To proceed, we consider the term $H_r(\Zbb) \otimes_{\Zbb} \Rbb $, which is:
\begin{align}
    H_r(\Zbb) \otimes_{\Zbb} \Rbb &\cong \left( \rm Free \big( H_r(\Zbb) \big) \oplus \rm Torsion\big( H_r(\Zbb)  \big) \right) \otimes_{\Zbb} \Rbb \\
    &= \left(  \rm Free \big( H_r(\Zbb) \big) \otimes_{\Zbb} \Rbb \right) \oplus \left(\rm Torsion\big( H_r(\Zbb)  \big) \otimes_{\Zbb} \Rbb  \right) 
\end{align}
We recall the properties from Lemma \ref{lemma: propertytensorproductmodule} that, for any integers $m$, $\Zbb_m \otimes_{\Zbb} Q = 0$ for any field $Q$ of characteristic zero. At the same time, the torsion part $ \rm Torsion\big( H_r(\Zbb)  \big)$ is $\cong \Zbb_{d_1} \oplus \Zbb_{d_2} \oplus \cdots$, which results in 
\begin{align}
    \rm Torsion\big( H_r(\Zbb)  \big) \otimes_{\Zbb} \Rbb    = 0
\end{align}
For the first part $  \rm Free \big( H_r(\Zbb) \big) \otimes_{\Zbb} \Rbb $, we can use the first property of Lemma \ref{lemma: propertytensorproductmodule}, and also $\rm Free \big( H_r(\Zbb) \big) \cong \Zbb^r $, which results in:
\begin{align}
      \rm Free \big( H_r(\Zbb) \big) \otimes_{\Zbb} \Rbb  \cong \Rbb^r
\end{align}
which implies that 
\begin{align}
    H_r(\Rbb) \cong \Rbb^r
\end{align}
suggesting that $H_r(\Rbb)$ behaves like a vector space. Thus, the above deduction reveals how the choice of coefficients influences the algebraic structure of homology, and that the typical choice of coefficients from $\Rbb$ or $\Zbb_2$ may hide the torsion part.

\subsection{A few examples}
\label{sec: afewexamples}
To make the role of torsion clearer, we provide a few concrete examples showing how the torsion is intrinsic to topological space. We particularly provide specific topological spaces with their corresponding homology groups. We organize these examples into two categories; one contains torsion-free topological space, while the other contains spaces with torsion.\\

\noindent
\textbf{Torsion-free:}
\begin{itemize}
    \item Sphere $S^n$: $H_0(\Zbb) \cong \Zbb$, $H_n(\Zbb) \cong \Zbb$. For any $0\leq m < n$, $H_m(\Zbb) \cong 0$. 
    \item Torus $T^n$: $H_r(T^n, \Zbb) \cong \Zbb^{ \binom{n}{k}}$. 
    \item Closed orientable surfaces (with genus $\geq 1$): $H_0(\Zbb) \cong \Zbb, H_1 (\Zbb)\cong \Zbb{2g}$, $H_2(\Zbb) \cong \Zbb$. 
    \item Complex projective space $\mathbb{CP}^n$: $H_{2k}(\Zbb) \cong \Zbb$ for $0 \leq k <n$; $H_{2k+1}(\Zbb) \cong 0$.
    \item Quaternionic projective space $\mathbb{HP}^n$: $H_{4k}(\Zbb) \cong \Zbb$ for $0 \leq k < n$. Otherwise 0. 
\end{itemize}

\noindent
\textbf{Torsion: } 
\begin{itemize}
    \item Real projective space $\mathbb{RP}^n$: $H_0(\Zbb) \cong \Zbb$. For $1 \leq i \leq n-1$, $H_i(\Zbb) \cong \Zbb_2$ if even $i$, and $H_i (\Zbb) \cong 0$ for odd $i$. $H_n(\Zbb) \cong \Zbb$ if $n$ is odd, and $H_n(\Zbb) \cong 0$ if $n$ is even. 
    \item Lens space $L(p,q)$ (3-manifolds): $ H_0(\Zbb) \cong \Zbb$, $H_1(\Zbb) \cong \Zbb_p$,  $H_2(\Zbb) \cong 0, H_3(\Zbb) \cong \Zbb$. 
    \item Klein bottle $K$: $H_0(\Zbb) \cong \Zbb, H_1(\Zbb) \cong \Zbb \oplus \Zbb_2, H_2(\Zbb) 0$. 
    \item Non-orientable close surface $N_g$ (connected sum of $g$ copies of $\mathbb{RP}^2$):
    $H_0(\Zbb) \cong 0, H_1(\Zbb) \cong \Zbb^{g-1} \oplus \Zbb_2, H_2(\Zbb) = 0$. 
    \item Moorse space $M(\Zbb_m, n)$: $H_n(\Zbb) \cong \Zbb_m$. 
\end{itemize}

\section{Key insight}
\label{sec: keyinsight}
In the following, we describe the insight that underlies our main algorithm to detect the emergence of torsion of a given simplicial complex, denoted $K$. Throughout the following, we omit this simplex notation $K$ and naturally impose this setting. We recall from the previous section the universal coefficient theorem:
\begin{align}
    H_r(\Rbb) \cong H_r(\Zbb) \otimes_{\Zbb} \Rbb  \oplus \rm Tor \left( H_{r-1} (\Zbb) , \Rbb  \right)
\end{align}
which reflects the change in algebraic structure upon a change in the coefficients from $\Zbb $ to $\Rbb$. If instead of $\Rbb$, we choose another ring/field $Q$, then the formula still holds:
\begin{align}
    H_r(Q) \cong H_r(\Zbb) \otimes_{\Zbb} Q  \oplus \rm Tor \left( H_{r-1} (\Zbb), Q  \right)
\end{align}
According to Theorem \ref{thm: structuredtheorem}, $H_r(\Zbb)$ can be decomposed as:
\begin{align}
    H_r(\Zbb) \cong \Zbb^{\beta_r} \oplus \left( \bigoplus_i Z_{r_i} \right)
\end{align}
Then we have that the above formula is:
\begin{align}
    H_r(Q ) \cong  \left( \Zbb^{\beta_r} \otimes_{\Zbb} Q  \right) \oplus \left( \bigoplus_i  Z_{r_i} \otimes_{ \Zbb} Q \right) \oplus  \rm Tor \left( H_{r-1} (\Zbb), Q \right)
\end{align}
To make the above group become a vector space (since linear algebra is a more natural language in quantum computation), we choose $Q$ to be some field. For a reason that will be clear later, we choose $Q = \mathbb{F}_p \equiv \Zbb/p\Zbb$ for $p$ being a prime. Then for $Q= \mathbb{F}_p$,  the above equation can be written as:
\begin{align}
    H_r(\mathbb{F}_p) \cong  \left( \Zbb^{\beta_r} \otimes_{\Zbb} \mathbb{F}_p  \right) \oplus \left( \bigoplus_i  \Zbb_{r_i} \otimes_{ \Zbb} \mathbb{F}_p \right) \oplus  \rm Tor \left( H_{r-1} (\Zbb), \mathbb{F}_p  \right)
\end{align}
Again, by Theorem \ref{thm: structuredtheorem}, we have:
\begin{align}
    H_{r-1} (\Zbb) \cong  \Zbb^{\beta_q}  \oplus \left( \bigoplus_i Z_{q_i} \right)
\end{align}
Via Lemma \ref{lemma: torsionfunctor}, we have that:
\begin{align}
    \rm Tor\left( H_{r-1} (\Zbb) , \mathbb{F}_p \right) & \cong \rm Tor \left(\Zbb^{\beta_q}  \oplus \left( \bigoplus_i Z_{q_i} \right), \mathbb{F}_p \right)  \\
    & \cong \rm Tor \left( \left( \bigoplus_i Z_{q_i} \right), \mathbb{F}_p \right)  \\
    & \cong \bigoplus_{i, q_i \mid p} \mathbb{F}_p
\end{align} 
By Lemma \ref{lemma: tensorproduct}, and also the fact that $\mathbb{F}_p \equiv \Zbb_p$, we have that:
\begin{align}
     \left( \bigoplus_i  \Zbb_{k_i} \otimes_{ \Zbb} \mathbb{F}_p \right) &\cong \bigoplus_{i, k_i \mid p} \mathbb{F}_p \\
      \left( \Zbb^{\beta_r} \otimes_{\Zbb} \mathbb{F}_p  \right)  &\cong \mathbb{F}_p^{\beta_r}
\end{align}
Gathering everything, we have:
\begin{align}
    H_r(\mathbb{F}_p) \cong  \mathbb{F}_p^{\beta_r} \oplus \left(\bigoplus_{i, r_i \mid p} \mathbb{F}_p \right) \oplus \left(\bigoplus_{i, q_i \mid p} \mathbb{F}_p \right) 
\end{align}
We recall from the previous section that if we choose the field $Q \equiv \Rbb$, then:
\begin{align}
    H_r(\Rbb) \cong \Rbb^{\beta_r}
\end{align}
It means that, the rank of the $k$-th homology space over $\mathbb{F}_p$ field contains additional factors from the torsion part, i.e.,:
\begin{align}
\rm dim \left( H_r(\mathbb{F}_p) \right) = \beta_r + t_r(p) + t_{r-1}(p) 
\end{align}
where $t_r(p), t_{r-1}(p)$ is the sum of the cyclic summands (of $k$-th homology group/space and $(k-1)$-th homology group/space, respectively) of order divisible by a prime $p$. We recall from the previous section \ref{sec: roleofcoefficients} that: 
\begin{align}
    H_r(\Rbb) \cong \Rbb^{\beta_r} 
\end{align}
which implies 
\begin{align}
    \dim \left(  H_r(\Rbb)\right)  = \beta_r
\end{align}
Our strategy is built on this, as we proceed to estimate the dimension of $k$-th homology group over two different fields $\mathbb{F}_p$'s, and then compare them. If their dimensions are equal, then it indicates that there is no torsion. Otherwise, there is torsion. The problem then is, what value of $p$ should we choose ? We recall that $p$ is a prime, and it is of a known fact that arbitrary integers can be decomposed as products of different primes. Thus, in general, without knowing the order of summands in advance (which is apparent, because if we know the order already, then the torsion is already known), the only strategy is to try as many primes as possible, e.g., $p=2,3,5 ..., 7$, etc. 

\section{Some operations involving topological spaces }
\label{sec: tspaceoperation}

\paragraph{Suspension of a topological space $\Sigma$.} The suspension of $X$ is obtained by roughly taking the cylinder $X \times [0,1]$ and collapse the entire bottom $X \times \{0\}$ to one point and the top $X \times \{1\}$ to another point:
\begin{align}
    \Sigma X = \frac{X \times [0,1] }{ X \times \{0\} \sim N, X \times \{1\} \sim S}
\end{align}
For example: 
\begin{align}
    \Sigma S^0 \cong S^1 , \Sigma S^1 \cong S^2
\end{align}
Generally: 
\begin{align}
    \Sigma S^{n} \cong S^{n+1}
\end{align}

\paragraph{Wedge product $X \vee Y$.} Choose a basepoint $x_0 \in X, y_0 \in Y$, Then $X \vee Y$ is obtained by identifying just two points:
\begin{align}
    X \vee Y = \frac{X \cup P}{x_0 \sim p_0}
\end{align}
Essentially, $X \vee Y$ means $X,Y$ glued together at some point. 

\paragraph{Smash product $X \wedge Y$.} This is defined as follows:
\begin{align}
    X \wedge Y = \frac{X \times Y}{ X \vee Y}
\end{align}

\paragraph{Join of topological spaces $X$ and $Y$.} The join is defined as follows:
\begin{align}
    X*Y = \frac{X \times P \times [0,1]}{ (x,p_1,0) \sim (x,p_2,0) , (x_1,p,1) \sim (x_2,p,1) } 
\end{align}
 
\paragraph{Useful identities.} Take $X = S^m, Y = S^n$, then the smash product is:
\begin{align}
    X \wedge Y = S^m \wedge S^n \cong S^{m+n}
\end{align}
We also have:
\begin{align}
    \Sigma (X \wedge Y) = \Sigma (S^m \wedge S^n) \cong \Sigma ( S^{m+n} ) \cong S^{m+n+1} \\
    X *Y = S^m * S^n = S^{m+n+1}
\end{align}
Generally, we have the following relation, which can be found in standard text \cite{hatcher2002algebraic}:
\begin{align}
    X*Y \cong \Sigma (X \wedge Y)
\end{align}

\section{Kunneth formula for topology}
\label{sec: kunneth}
The general Kunneth formula for homology can be found in any standard literature, e.g., \cite{hatcher2002algebraic}. Here we directly quote from Theorem 3B.6 in Section 3B of \cite{hatcher2002algebraic} as follows:
\begin{theorem}[Kunneth formula for topology]
    Let $X$ and $Y$ be CW complexes and let $R$ be some principal ideal domain. Then, for every $n \geq 0$, there is a natural short exact sequence:
    \begin{align}
        0 \longrightarrow \bigoplus_{p+q = n} H_p(X,R) \otimes_R H_q(Y,R) \longrightarrow H_n(X\times Y ,R) \longrightarrow \bigoplus_{p+q = n-1} \text{Tor}_1^R \big( H_p(X,R), H_q(Y,R)  \big) \longrightarrow 0
    \end{align}
    In particular, this short exact sequence splits. 
\end{theorem}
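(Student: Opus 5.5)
The plan is to reduce the statement to a purely algebraic Künneth theorem for chain complexes of free $R$-modules, and then prove that algebraic version with the standard cycles/boundaries argument.

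\textbf{Step 1: reduction to cellular chains.} Give $X\times Y$ the product CW structure, with cells $e^p_\alpha\times e^q_\beta$. For the finite simplicial complexes used in the paper the product topology causes no trouble; in general one works with the compactly generated product. I will then check that the cellular chain complex satisfies
\begin{align}
C_*(X\times Y;R)\cong C_*(X;R)\otimes_R C_*(Y;R),
\end{align}
with differential $\partial(a\otimes b)=\partial a\otimes b+(-1)^{p}a\otimes\partial b$ for $a$ of degree $p$. This follows from the cellular boundary formula, because the characteristic map of a product cell is the product of characteristic maps and $D^p\times D^q\cong D^{p+q}$. The sign comes from the orientation of the boundary $\partial(D^p\times D^q)$. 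Alternatively one can use Eilenberg--Zilber on singular chains. Since cellular homology computes $H_*$, it then suffices to prove the following: for chain complexes $C,D$ of free $R$-modules over a PID, there is a natural short exact sequence
\begin{align}
0\to\bigoplus_{p+q=n}H_p(C)\otimes H_q(D)\to H_n(C\otimes D)\to\bigoplus_{p+q=n-1}\mathrm{Tor}_1^R(H_p(C),H_q(D))\to 0,
\end{align}
and this sequence splits.

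\textbf{Step 2: algebraic Künneth.} Let $Z_p=\ker\partial_p$ and $B_{p}=\mathrm{im}\,\partial_{p+1}$. Both are submodules of free modules over a PID, hence free. Regard $Z$ and $B$ as complexes with zero differential. The short exact sequence $0\to Z_p\to C_p\xrightarrow{\partial}B_{p-1}\to 0$ splits because $B_{p-1}$ is free, so it stays exact after applying $-\otimes D$. This gives a short exact sequence of complexes
\begin{align}
0\to Z\otimes D\to C\otimes D\to \bar B\otimes D\to 0,
\end{align}
where $\bar B$ denotes $B$ shifted down by one degree. Because $Z_p$ and $B_p$ are free and carry zero differential, we have $H_n(Z\otimes D)=\bigoplus_{p+q=n} Z_p\otimes H_q(D)$, and similarly for $\bar B\otimes D$. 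Taking the long exact sequence in homology, I will identify the connecting map as $i\otimes 1:B_p\otimes H_q(D)\to Z_p\otimes H_q(D)$, where $i:B_p\to Z_p$ is the inclusion. This gives
\begin{align}
0\to \mathrm{coker}(i\otimes 1)_n\to H_n(C\otimes D)\to \ker(i\otimes1)_{n-1}\to 0.
\end{align}
Now $0\to B_p\xrightarrow{i} Z_p\to H_p(C)\to 0$ is a free resolution of $H_p(C)$. Tensoring it with $H_q(D)$ shows that the cokernel is $H_p(C)\otimes H_q(D)$ and the kernel is $\mathrm{Tor}_1^R(H_p(C),H_q(D))$. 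Naturality follows because every map in the construction is natural in $C$ and $D$.

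\textbf{Step 3: splitting.} Choose retractions $C_p\to Z_p$ and $D_q\to Z_q(D)$, which exist because $B_{p-1}$ and $B_{q-1}(D)$ are free. These give chain maps $C\to H(C)$ and $D\to H(D)$, where the targets carry zero differential. Composing, we get a map $H_n(C\otimes D)\to H_n(H(C)\otimes H(D))=\bigoplus H_p(C)\otimes H_q(D)$, and I will check that it restricts to the identity on the left-hand term of the sequence. This splitting is not natural.

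\textbf{Main obstacle.} The delicate point is identifying the connecting homomorphism in Step 2 with $i\otimes 1$, together with the sign bookkeeping in the product differential. I would handle it by chasing an explicit representative $z\otimes y$, with $z\in C_p$ and $y$ a cycle of $D$: lift it to $C\otimes D$, apply $\partial$, and read off $\partial z\otimes y$.
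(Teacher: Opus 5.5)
The paper gives no proof of its own; it quotes Theorem 3B.6 of Hatcher. Your proposal is correct and is essentially the proof given in that source: you identify the cellular chains of $X\times Y$ with $C_*(X;R)\otimes_R C_*(Y;R)$, run the algebraic K\"unneth argument through $0\to Z\otimes D\to C\otimes D\to \bar B\otimes D\to 0$ with connecting map $i\otimes 1$ on the free resolution $0\to B_p\to Z_p\to H_p(C)\to 0$, and split the sequence using retractions $C_p\to Z_p$. The one ingredient you only sketch, the product-cell boundary formula $d(e^p\times e^q)=de^p\times e^q+(-1)^p e^p\times de^q$, is proved separately in that reference, and your Eilenberg--Zilber alternative would avoid both it and the product-topology caveat.
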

As a corollary of the theorem above, since the sequence splits, there exists an isomorphism of $R$-modules:
\begin{align}
    H_n(X \times Y, R) \cong \left[ \bigoplus_{p+q = n} H_p(X,R) \otimes_R H_q(Y,R)  \right] \oplus \left[ \bigoplus_{p+q = n-1}\text{Tor}_1^R \big(  H_p(X,R), H_q(Y,R)  \big)  \right]
\end{align}
In the same text \cite{hatcher2002algebraic}, it was proved that the Kunnet formula above admits the following reduced homology form:
\begin{theorem}[Reduced Kunneth formula for topology]
    Let $X$ and $Y$ be based CW complexes, with the base-points taken to be $0$-cells, and let $R$ be a principal ideal domain. Then, for every $n$, there is a split short exact sequence:
    \begin{align}
        0 \longrightarrow \bigoplus_{p+q=n} \widetilde{H}_p(X,R) \otimes_R \widetilde{H}_q(Y,R) \longrightarrow \widetilde{H}_n(X \wedge Y, R) \longrightarrow \bigoplus_{p+q = n-1} \text{Tor}_1^R \big( \widetilde{H}_p(X,R), \widetilde{H}_q(Y,R)  \big) \longrightarrow 0
    \end{align}
\end{theorem}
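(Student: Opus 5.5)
We will deduce the reduced formula from the algebraic Künneth theorem applied to relative cellular chains, rather than trying to manipulate the unreduced formula directly.

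\textbf{Step 1: reduce to a relative product.} With base points $x_0\in X$ and $y_0\in Y$ taken to be $0$-cells, the wedge $X\vee Y = X\times\{y_0\}\cup\{x_0\}\times Y$ is a subcomplex of the CW complex $X\times Y$. For the complexes used in the paper (finite simplicial complexes) the product topology agrees with the CW topology, so no compactly-generated subtleties arise. Since $(X\times Y, X\vee Y)$ is a good pair, we have
\begin{align}
\widetilde{H}_n(X\wedge Y,R)\cong H_n(X\times Y, X\vee Y;R).
\end{align}
Likewise $\widetilde{H}_p(X,R)\cong H_p(X,x_0;R)$ and $\widetilde{H}_q(Y,R)\cong H_q(Y,y_0;R)$. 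The statement therefore becomes a Künneth formula for the product of pairs $(X,x_0)\times(Y,y_0)=(X\times Y, X\vee Y)$.

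\textbf{Step 2: cellular chains.} The cells of $X\times Y$ are the products $e^p_\alpha\times e^q_\beta$. The cellular boundary satisfies the Leibniz rule
\begin{align}
d(e^p\times e^q)=de^p\times e^q+(-1)^p e^p\times de^q.
\end{align}
Hence $C_*^{\rm cell}(X\times Y;R)\cong C_*^{\rm cell}(X;R)\otimes_R C_*^{\rm cell}(Y;R)$ as chain complexes. The cells of $X\vee Y$ are exactly the products in which one factor is the base $0$-cell. Passing to quotients gives
\begin{align}
C_*(X\times Y,X\vee Y;R)\cong C_*(X,x_0;R)\otimes_R C_*(Y,y_0;R).
\end{align}
Both factors on the right are free $R$-modules in each degree, since they are spanned by the non-base cells.

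\textbf{Step 3: algebraic Künneth.} For chain complexes $C,D$ of free modules over a PID $R$, there is a natural short exact sequence
\begin{align}
0\to \bigoplus_{p+q=n}H_p(C)\otimes_R H_q(D)\to H_n(C\otimes_R D)\to \bigoplus_{p+q=n-1}\mathrm{Tor}_1^R(H_p(C),H_q(D))\to 0,
\end{align}
and it splits, though not naturally. The proof runs as follows. Use the short exact sequence $0\to Z\to C\to B_{-1}\to 0$; it splits because submodules of free modules over a PID are free. Tensor with $D$ and take the long exact sequence. Its connecting map is identified with $(i\otimes 1)$ on $B\otimes H(D)\to Z\otimes H(D)$, whose kernel and cokernel are computed from the free resolution $0\to B\to Z\to H(C)\to 0$.

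Applying this with $C=C_*(X,x_0;R)$ and $D=C_*(Y,y_0;R)$, and substituting Step 1, yields exactly the stated split sequence.

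\textbf{Main obstacle.} The delicate point is Step 2: justifying the cellular chain isomorphism with the correct signs, including degree-checking of the product characteristic maps, and checking that it is compatible with the passage to the quotient by $X\vee Y$. To do this, I would identify $H_{p+q}(X^p\times Y^q\cup X^{p-1}\times Y^q\cup\cdots)$ via the cross product $H_p(D^p,S^{p-1})\otimes H_q(D^q,S^{q-1})\to H_{p+q}(D^{p+q},S^{p+q-1})$, which is an isomorphism. The boundary formula then follows from naturality of the cross product together with the sign $(-1)^p$ in $\partial(a\times b)$. Everything else is formal homological algebra.
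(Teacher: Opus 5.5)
Your proposal is correct and takes the same route as the argument the paper relies on. The paper gives no proof of its own and quotes Hatcher (Section 3B), where the reduced formula is obtained exactly as you do: identify $\widetilde{H}_n(X\wedge Y;R)$ with $H_n(X\times Y,X\vee Y;R)$, use $C_*(X\times Y,X\vee Y;R)\cong C_*(X,x_0;R)\otimes_R C_*(Y,y_0;R)$, and apply the algebraic K\"unneth theorem for free chain complexes over a PID.

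The one adjustment is in Step 1: the statement covers arbitrary based CW complexes, so instead of restricting to finite complexes you should give $X\times Y$ (and hence $X\wedge Y$) the CW topology, as Hatcher does. This does not change the singular homology of $X\times Y$, because the two topologies have the same compact subsets.
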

For integral homology, we take $R = \Zbb$, so we have:
\begin{align}
    0 \longrightarrow \bigoplus_{p+q=n} \widetilde{H}_p(X,\Zbb) \otimes_\Zbb \widetilde{H}_q(Y, \Zbb) \longrightarrow \widetilde{H}_n(X \wedge Y, \Zbb) \longrightarrow \bigoplus_{p+q = n-1} \text{Tor}_1^\Zbb \big( \widetilde{H}_p(X,\Zbb), \widetilde{H}_q(Y,  \Zbb)  \big) \longrightarrow 0
\end{align}
Then we have:
\begin{align}
    \widetilde{H}_n(X \wedge Y, R)  \cong \left[ \bigoplus_{p+q=n} \widetilde{H}_p(X,\Zbb) \otimes_\Zbb \widetilde{H}_q(Y, \Zbb)  \right] \oplus \left[ \bigoplus_{p+q = n-1} \text{Tor}_1^\Zbb \big( \widetilde{H}_p(X,R), \widetilde{H}_q(Y,R)  \big)  \right]
\end{align}
In the previous appendix, we have pointed out that, as proved in \cite{hatcher2002algebraic, davis2012geometry}: 
\begin{align}
    X*Y \cong \Sigma (X \wedge Y)
\end{align}
So for reduced homology:
\begin{align}
    \widetilde{H}_n( X *Y) \cong  \widetilde{H}_{n} ( \Sigma (X \wedge Y)) \cong \widetilde{H}_{n-1} (X \wedge Y)
\end{align}
where the second isomorphism is also a well-known result in the field \cite{hatcher2002algebraic, davis2012geometry}. So, we have the following:
\begin{align}
     0 \longrightarrow \bigoplus_{p+q=n-1} \widetilde{H}_p(X,\Zbb) \otimes_\Zbb \widetilde{H}_q(Y, \Zbb) \longrightarrow \widetilde{H}_{n} (X *Y, \Zbb) \longrightarrow \bigoplus_{p+q = n-2} \text{Tor}_1^\Zbb \big( \widetilde{H}_p(X,\Zbb), \widetilde{H}_q(Y,  \Zbb)  \big) \longrightarrow 0
\end{align}
This sequence implies:
\begin{align}
    \widetilde{H}_n(X*Y, \Zbb) \cong \left[ \bigoplus_{p+q=n-1} \widetilde{H}_p(X,\Zbb) \otimes_\Zbb \widetilde{H}_q(Y, \Zbb)  \right] \oplus \left[ \bigoplus_{p+q = n-2} \text{Tor}_1^\Zbb \big( \widetilde{H}_p(X,R), \widetilde{H}_q(Y,R)  \big)  \right]
\end{align}
In our context, $X$ is exactly $K = \text{Cl }(G)$, and $Y$ is $P$, which is the flag complex triangulation of $\mathbb{R}\mathbb{P}^2$. By choosing $q=1, n = r+2, p = r$, we have:
\begin{align}
    \widetilde{H}_{r+2}(K*P, \Zbb) \cong \left[  \widetilde{H}_r(K,\Zbb) \otimes_\Zbb \widetilde{H}_1(P, \Zbb)  \right] \oplus \left[  \text{Tor}_1^\Zbb \big( \widetilde{H}_{r-1}(K ,R), \widetilde{H}_1(P,R)  \big)  \right]
\end{align}
Because $K$ is the wedge product of spheres, it is torsion-free, and thus, as a consequence of Lemma \ref{lemma: torsionfunctor}:
\begin{align}
     \text{Tor}_1^\Zbb \big( \widetilde{H}_r(K ,R), \widetilde{H}_1(P,R)  \big)  = 0
\end{align}
So we arrive at what we mentioned in Eqn.~\ref{eqn: kunneth}:
\begin{align}
    \widetilde{H}_{r+2} (K*P,\Zbb) \cong \widetilde{H}_r(K,\Zbb)  \otimes_\Zbb \widetilde{H}_1(P,\Zbb)
\end{align}

\section{Classical Randomized Algorithm for Rank Estimation Over finite field}
\label{sec: randomizedrankestimation}
In this appendix, we outline and discuss the probabilistic method introduced in \cite{kaltofen1991wiedemann, eberly2017black} to find the rank of a matrix over a finite field. We then analyze a variant of this algorithm where we use pseudorandomness instead of randomness.

\subsection{Algorithm for low-rank estimation}
The randomized algorithm \cite{kaltofen1991wiedemann, eberly2017black} for finding the rank of a matrix over finite field $\mathbb{F}_p$ proceeds as follows. 

\begin{method}[Randomized Rank Estimation Over $ \mathbb{F}_p$ \cite{kaltofen1991wiedemann, eberly2017black} ]
\label{algo: rankestimatingfinitefield}
    Let $A \in \mathbb{F}_p^{m \times n}$ be a matrix of size $m \times n$ over the field $\mathbb{F}_p$. Let $R$ denotes the rank of $A$, which is assumed to be smaller than a known threshold, $R \leq \alpha$.
\end{method}
\begin{enumerate}
    \item Draw $t$ random vectors $V_1,V_2,...,V_t \in \mathbb{F}_p^n$. Define $V$ be a matrix of size $n \times t$, with $[V_1,V_2,...,V_t] $ as columns.
    \item Draw $s$ random vectors $U_1,U_2,..., U_s \in \mathbb{F}_p^m$. Define $U$ be a matrix of size $s \times m$, with $ [U_1^T,U_2^T,..., U_s^T]$ as rows.
    \item Define a matrix $M$ of size $s \times t$ with entries:
    \begin{align}
        M_{ij} = U_i^T A V_j  \text{ \ (mod p)}
    \end{align}
    Or equivalently
    \begin{align}
        M = U A V
    \end{align}
    \item Find rank $\hat{R}$ of $M$ (via Gaussian elimination mod p). 
    
    \noindent
    \textbf{Output:} $\hat{R}$ is a one-sided estimator for the rank $R$ of $A$. 
    
    \noindent
    \textbf{Guarantee: } For $s,t  \geq \alpha + \lceil \log_p \frac{2}{\Delta}\rceil$ (in fact, $s$ can be chosen to be equal to $t$ for simplicity) and the entries of $V,U$ are random with i.i.d. uniform over $\mathbb{F}_p$, the above estimation of $\hat{r}$ succeeds with probability:
    \begin{align}
        \rm Prob [ \hat{R} = R] \geq  1-\Delta
    \end{align}
\end{enumerate}
\noindent
In what follows, we give a detailed explanation for the algorithm above, thus justifying its correctness. 

\subsection{ Proof of performance guarantee of Algorithm \ref{algo: rankestimatingfinitefield}}
Let $R$ be the rank of $A$, then we write the factorization $A = BC$ where $ B \in \mathbb{F}_p^{m \times R}, C \in \mathbb{F}^{R \times n}$, and $\rm rank(B) = \rm rank (C) = R$. So we have $UAV = (UB) (CV)$. So, if $\text{rank } (UB), \text{rank } (CV)$ is equal to $R$, then it implies that $ \text{rank} (UB \cdot CV) = \text{rank} (UAV) = R$. Now we aim to estimate with what probability the rank of $UB, CV$ is equal to $R$. 

Since $U$ is a matrix of size $s \times m$ with entries drawn i.i.d. from $\mathbb{F}_p$, the matrix $UB$ also behaves exactly like a uniformly random $s \times R$ matrix over $\mathbb{F}_p$. According to \cite{salmond2014rank, blomer1997rank}, the probability that this $s \times R$ matrix has full column rank is:
\begin{align}
    \text{Prob [rank($UB$) = R])}  = \prod_{ i=0}^{R-1} (1- p^{i-s})
\end{align}
Similarly, the probability of $CV$ having full row rank is:
\begin{align}
     \text{Prob [rank($CV$) = R])}  = \prod_{ i=0}^{R-1} (1- p^{i-t})
\end{align}
If $U,V$ are independent, then the probability that both have full rank is:
\begin{align}
  \text{Prob [rank($UB$) =rank($CV$) =  R])}  = \prod_{ i=0}^{R-1} (1- p^{i-s})  \prod_{ i=0}^{R-1} (1- p^{i-t}) 
\end{align}
Therefore, the probability that $(UB)\cdot (CV)  = UAV$ has rank $R$ is the same. 

In an equivalent manner, we can analyze as follows. A simple union bound gives:
\begin{align}
   \text{Prob [rank($UB$) $<$ R])} \leq (p^R-1)p^{-s}  < p^{R-s}
\end{align}
and similarly:
\begin{align}
     \text{Prob [rank($CV$) $<$ R])} < p^{R-t}
\end{align}
Therefore, the probability that $\text{Prob [rank($UB$) = R])} , \text{Prob [rank($CV$) = R])}   $ are greater than:
\begin{align}
    1- p^{R-s} - p^{R-t}
\end{align}
which mans that: 
\begin{align}
  \text{Prob [rank( $UAV$)=  R])}   \geq 1- p^{R-s} - p^{R-t}
  \end{align}
For simplicity, since $R \leq \alpha$, choosing $s=t = \alpha + l$, then we have:
\begin{align}
     \text{Prob [rank( $UAV$)=  R])}   \geq 1-2p^{-l} 
\end{align}
If we desire this success probability to be $\geq 1-\Delta$, then we have $2p^{-l} = \Delta$, which leads to:
\begin{align}
    s = t = \alpha + l \geq \alpha + \lceil \log_p \frac{2}{\Delta}\rceil
\end{align}

\subsection{Algorithm \ref{algo: rankestimatingfinitefield} in the presence of pseudorandomness }
In Algorithm \ref{algo: rankestimatingfinitefield} above, the entries of $U,V$ are i.i.d. drawn from the finite field $\mathbb{F}_p$. Here we examine the performance guaranty of the algorithm in an alternate setting, where we do not have i.i.d. entries drawn from unifrom distribution, but rather from a $\epsilon$-biased distribution (we will define it later).

Consider the matrix $UB \in \mathbb{F}_p^{t \times r}$. $UB$ would lose its rank if there is some nonzero $z \in \mathbb{F}_p^r$ such that $UB z=  0$. Because $B$ has full column rank, then for $z \neq \bf{0}$, $Bz \neq \bf{0}$. Defining $y=Bz$, then the bad event become:
\begin{align}
    Uy = 0
\end{align}
Since the rows of $U$ are $u_1^T, ..., u_t^T$, the equation above implies $ u_1^T y = \cdots = u_t^T y =0$. For truly uniform, the value of $u_i^T y $ is uniformly distributed over $\mathbb{F}_p$. So we have:
\begin{align}
    \text{Pr} [ u_i^T y = 0] = \frac{1}{p}
\end{align}
If the rows of $U$ are independent, then:
\begin{align}
    \text{Pr} [Uy = 0] = \frac{1}{p^t}
\end{align}
Now consider $u \in \mathbb{F}_p^n$ but the entries of $u$ are not i.i.d. drawn from $\mathbb{F}_p^n$. Instead, it is drawn from the $\epsilon$-biased distribution $\mathscr{D}$. More formally:
\begin{definition}[$\epsilon$-biased distribution]
\label{eqn: epsilonbiaseddistribution}
A distribution $\mathscr{D}$ is called $\epsilon$-biased distribution if elements $u_1,u_2,...,u_n \in \mathbb{F}_p$ drawn from $\mathscr{D}$ satisfy the following:
    \begin{align}
    \Big|  \text{Pr}_{u \sim \mathscr{D}} [ u^T y = 0 ]  - \frac{1}{p} \Big| \leq (1- \frac{1}{p})\epsilon
\end{align}
where $u= (u_1,u_2,...,u_n)^T$ and $y \neq \textbf{0}$ with its entries $\in \mathbb{F}_p$. Equivalently:
\begin{align}
    \text{Pr}_{u \sim \mathscr{D}} [ u^T y = 0] \leq \frac{1}{p} + (1- \frac{1}{p})\epsilon
\end{align}
\end{definition}
This distribution behaves almost like a uniformly random one. Next, suppose that $u_1, u_2, ..., u_t$ are independently drawn from this $\epsilon$-biased distribution. Then:
\begin{align}
    \text{Pr} [Uy =0 ] = \text{Pr} [u_1^T y = \cdots = u_t^T y ] = 0 \leq \Big( \frac{1 + (p-1)\epsilon}{p} \Big)^t 
\end{align}
Recall that $UB$ loses rank if $UBz = 0$ for $z \neq \textbf{0}$. There are $p^R-1$ nonzero vectors $z$ in $\mathbb{F}_p^r$. For each one, $y = Bz \neq \textbf{0}$. Using a union bound, we have:
\begin{align}
    \text{Pr} [\text{rank} (UB)  < R ] \leq (p^R-1)  \Big( \frac{1 + (p-1)\epsilon}{p} \Big)^s
\end{align}
In a similar manner, we can show that:
\begin{align}
    \text{Pr} [\text{rank} (CV)  < R ] \leq (p^R-1)  \Big( \frac{1 + (p-1)\epsilon}{p} \Big)^t 
\end{align}
Using the union bound again, we have the following:
\begin{align}
      \text{Pr} [\text{rank} (UB CV)  < R] \leq (p^R-1)  \Big( \frac{1 + (p-1)\epsilon}{p} \Big)^t  + (p^R-1)  \Big( \frac{1 + (p-1)\epsilon}{p} \Big)^s
\end{align}
Note that $BC =A$, then we have:
\begin{align}
     \text{Pr} [\text{rank} (UAV)  = R ] \geq 1- (p^R-1)  \Big( \frac{1 + (p-1)\epsilon}{p} \Big)^t - (p^R -1)  \Big( \frac{1 + (p-1)\epsilon}{p} \Big)^s
\end{align}
For simplicity, $s$ can be chosen to be equal to $t$. To obtain the failure probability $\Delta$, we set:
\begin{align}
    2 (p^R-1)  \Big( \frac{1 + (p-1)\epsilon}{p} \Big)^t = \Delta
\end{align}
which implies that: 
\begin{align}
    s = t = \frac{\log \big( 2 (p^R-1) \frac{1}{\Delta} \big) }{ \log \big(  \frac{p}{1+(p-1)\epsilon} \big) } 
\end{align}

\section{Some notes on finite field $\mathbb{F}_p$ and finite field-extension $\mathbb{F}_q$ for $q= p^m$}
\label{sec: somenoteonfieldextension}
To understand how the field $\mathbb{F}_q$ is constructed from $\mathbb{F}_p$, we need the following recipes, which can be found in any standard text, e.g., \cite{lidl1997finite}.

\paragraph{Polynomial ring $\mathbb{F}_p[x]$.} Let $\mathbb{F}_p$ be the finite field of order $p$ (we recall that a field is a commutative ring with unity, and each element has a multiplicative inverse which also belongs to the same ring). Let $\mathbb{F}_p[x]$ be the set of polynomials with coefficients in $\mathbb{F}_p$. Then, under the multiplication rule of the elements of $\mathbb{F}_p$ and polynomials, this set can be shown to be a ring. For example, when $p=2$, then $\mathbb{F}_p[x]$ contains:
\begin{align}
    0, 1, x, x+1, x^2 +x +1, x^3 +x ,...
\end{align}

\paragraph{Irreducible polynomials.} Let $f(x) \in \mathbb{F}_p[x]$. Then $f(x)$ is called \textit{irreducible} if $f(x)$ cannot be factorized as $f(x) = h(x) g(x)$. For example, consider $p=2$, so $\mathbb{F}_p = (0,1)$. Then
\begin{align}
    x^2 + x +1 
\end{align}
is irreducible over $\mathbb{F}_2$. At the same time, $x^2 + x = x(x+1)$ and thus is not irreducible.

\paragraph{Quotient $\mathbb{F}_p[x]/f(x)$.} Let $f(x) \in \mathbb{F}_p[x]$ be some irreducible polynomial. The quotient $ \mathbb{F}_p[x]/f(x)$ means that polynomials are considered equivalent modulo $f(x)$, or equivalently, declaring $f(x) = 0$. For example, take $p=2$, and $f(x) = x^2 + x +1$ is an irreducible polynomial. Declaring
\begin{align}
    x^2 + x + 1 = 0 
\end{align}
we have $x^2 = -x - 1 = x+1$. So we now consider polynomials in $\mathbb{F}_2[x]$:
\begin{align}
    \begin{split}
        x^3 = x(x^2) = x(x+1) = x^2 + x = x+x+1 = 1\\
        x^4 = x^2 x^2 = (x+1)(x+1) = x^2 + 1 = x 
    \end{split}
\end{align}
Therefore, even though $\mathbb{F}_2[x]$ contains infinitely many polynomials, $\mathbb{F}_2[x]/f(x)$, after quotienting by $x^2 + x +1$, there are only four different equivalent classes:
\begin{align}
    \{ 0,1,x,x+1 \}
\end{align}
where we have abused a bit of notation, as $x$ here means that the equivalent class of $x$ (modulo $x^2+x+1$). It can be seen that $(0,1,x,x+1)$ actually forms a field of order $4$.  More generally, we have the following well-known property:
\begin{proposition}
    Let $f(x) \in \mathbb{F}_p[x]$ be some irreducible polynomial of degree $m$. Define $q= p^m$, then $\mathbb{F}_{q} := \mathbb{F}_p[x]/f(x)$ is a field of order $p^m$.
\end{proposition}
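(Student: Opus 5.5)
The plan is to verify the field axioms directly for the quotient ring $\mathbb{F}_p[x]/f(x)$ and then count its elements via canonical representatives. Throughout, the key tool is polynomial division with remainder in $\mathbb{F}_p[x]$. It is available because $\mathbb{F}_p$ is a field, so the leading coefficient of any nonzero divisor is invertible. This gives, for every $g \in \mathbb{F}_p[x]$, a unique pair $(h, \rho)$ with $g = h f + \rho$ and $\deg \rho < m$ (or $\rho = 0$).

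\textbf{Ring structure.} First, the quotient is a commutative ring with unity. Congruence modulo $f$ (meaning $f \mid g_1 - g_2$) is compatible with addition and multiplication, since $(f)$ is an ideal. Hence the ring operations descend to equivalence classes, and commutativity, associativity and distributivity are inherited from $\mathbb{F}_p[x]$. We also need $1 \neq 0$ in the quotient. This holds because $\deg f = m \geq 1$: an irreducible polynomial is non-constant by convention, so $f \nmid 1$.

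\textbf{Inverses.} Next, every nonzero class has an inverse. Let $g$ be a polynomial with $f \nmid g$. Since $f$ is irreducible, its only monic divisors are $1$ and (the monic associate of) $f$. Therefore $\gcd(f, g) = 1$. Running the Euclidean algorithm, which is iterated division with remainder, then yields Bézout coefficients $a, b \in \mathbb{F}_p[x]$ with
\begin{align}
a g + b f = 1.
\end{align}
Reducing modulo $f$ gives $[a][g] = [1]$, so $[a]$ is the inverse of $[g]$. This is the step I expect to need the most care. One must justify that the gcd exists and is expressible as a combination of $f$ and $g$, which is the statement that $\mathbb{F}_p[x]$ is a Euclidean domain and hence a PID. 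One must also check that irreducibility rules out any common factor of positive degree smaller than $m$. I would prove Bézout directly by strong induction on $\deg g$ using the division step $f = q g + \rho$, rather than citing PID theory.

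\textbf{Counting.} Finally, the order is $p^m$. By division with remainder, each class contains a representative $\rho$ of degree $< m$. This representative is unique: if two such $\rho, \rho'$ lie in the same class, then $f \mid \rho - \rho'$ while $\deg(\rho - \rho') < m$, which forces $\rho = \rho'$. Hence the classes are in bijection with the polynomials $c_0 + c_1 x + \cdots + c_{m-1}x^{m-1}$ with $c_i \in \mathbb{F}_p$. There are exactly $p^m$ such tuples, so $|\mathbb{F}_q| = q = p^m$. This matches the $p=2$, $m=2$ example above, where the classes are $\{0, 1, x, x+1\}$.
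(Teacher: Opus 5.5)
Your proof is correct. The paper does not actually prove this proposition. It calls it a well-known property, defers to a standard finite-field text, and only remarks afterwards that every element has the form $a_0 + a_1 x + \cdots + a_{m-1}x^{m-1}$ with $a_i \in \mathbb{F}_p$, hence there are $p^m$ of them.

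Your counting step is that same remark, made rigorous by the uniqueness argument, which the paper omits: $f \mid \rho - \rho'$ with $\deg(\rho - \rho') < m$ forces $\rho = \rho'$. Your ring-structure check and Bézout argument supply the field property that the paper leaves to the literature. Irreducibility gives $\gcd(f,g) = 1$ whenever $f \nmid g$, and then $ag + bf = 1$ yields $[a] = [g]^{-1}$.

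So your version follows the paper's outline but is fully self-contained, whereas the paper's only advantage is brevity. Proving Bézout directly by induction on $\deg g$ via division with remainder is the natural choice here. The alternative of "finite integral domain implies field" would still need $f$ to be prime, which in $\mathbb{F}_p[x]$ is itself usually derived from Bézout.
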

More generally, if $f(x)$ is an irreducible polynomial of degree $m$, then an element of $\mathbb{F}_q$ has the form:
\begin{align}
    a_0 + a_1 x + a_2 x^2 + \cdots + a_{m-1}x^{m-1}
\end{align}
where the coefficients $a_0, a_1,...., a_{m-1} \in \mathbb{F}_p$, which means $\mathbb{F}_q$ has $p^m$ elements.

\section{Sampling from $\epsilon$-biased distribution}
\label{sec: samplingfrombiaseddistribution}

Here we discuss how to efficiently generate $N$ elements $v_0,v_1,...,v_{N-1}$ from the $\epsilon$-biased distribution $\mathscr{D}$ as described above. Define $q= p^m$ and let $\mathbb{F}_q \equiv \mathbb{F}_{p^m}  $ be the finite field-extension of order $q$, as described in the previous appendix. The value of $m$ is chosen so that:
\begin{align}
    \frac{N-1}{p^m -1} \leq \epsilon
\end{align}
which implies that $m = \mathcal{O}\big(  \log_p  \frac{N}{\epsilon}  \big)$. 

Next, we choose two uniformly random elements $\alpha, \beta$ from $\mathbb{F}_q$ and define:
\begin{align}
    v_j = \Tr \big(  \beta \alpha^j \big), j = 0,1,2,...,N-1
\end{align}
\paragraph{Finite-field trace $\Tr(.)$ properties.} Here, the finite-field trace is defined as follows:
\begin{align}
    \Tr(z) =  z + z^p + z^{p^2} + \cdots + z^{p^{m-1}}
\end{align}
We have the following key properties.
\begin{proposition}
    $\Tr(z)$ is an element of $\mathbb{F}_p$. 
\end{proposition}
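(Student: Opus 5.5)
We have to prove that $\Tr(z)=z+z^p+\cdots+z^{p^{m-1}}$ lies in $\mathbb{F}_p$ for every $z\in\mathbb{F}_q$, $q=p^m$. The approach is to characterize the prime subfield $\mathbb{F}_p\subseteq\mathbb{F}_q$ as the fixed set of the Frobenius map $\phi(z)=z^p$, and then show that $\Tr(z)$ is fixed by $\phi$.

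The first step is to record that $\phi$ is additive on $\mathbb{F}_q$. By the binomial theorem,
\begin{align}
(a+b)^p=\sum_{k=0}^{p}\binom{p}{k}a^kb^{p-k}.
\end{align}
For $0<k<p$ the integer $\binom{p}{k}$ is divisible by $p$, so these middle terms vanish because $\mathbb{F}_q$ has characteristic $p$. Hence $(a+b)^p=a^p+b^p$. Iterating gives $(a+b)^{p^j}=a^{p^j}+b^{p^j}$ for all $j$, and clearly $\phi$ is also multiplicative.

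The second step is the identity $z^{q}=z$ for all $z\in\mathbb{F}_q$. This holds trivially for $z=0$. For $z\neq0$, the multiplicative group $\mathbb{F}_q^\times$ has order $q-1$, so Lagrange's theorem gives $z^{q-1}=1$.

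With these, I compute
\begin{align}
\Tr(z)^p=\sum_{j=0}^{m-1}\bigl(z^{p^j}\bigr)^p=\sum_{j=1}^{m}z^{p^j}=\sum_{j=1}^{m-1}z^{p^j}+z^{p^m}.
\end{align}
Since $z^{p^m}=z^q=z$, the last term supplies the missing $j=0$ term, and so $\Tr(z)^p=\Tr(z)$.

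The third step, which I expect to be the only real content, is to show that the elements $w\in\mathbb{F}_q$ with $w^p=w$ are exactly the elements of $\mathbb{F}_p$. Every element of the prime subfield, namely the constant polynomials $a_0$ in the description $\mathbb{F}_p[x]/f(x)$, satisfies $a_0^p=a_0$ by Fermat's little theorem; this is the $q=p$ case of the second step. That gives $p$ distinct roots of $X^p-X$. Because $\mathbb{F}_q$ is a field, a polynomial of degree $p$ has at most $p$ roots, so there are no other roots. Therefore the roots of $X^p-X$ in $\mathbb{F}_q$ are precisely the elements of $\mathbb{F}_p$, and since $\Tr(z)$ is such a root, $\Tr(z)\in\mathbb{F}_p$.

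For the later use of the trace in this appendix, I will also note that $\Tr$ is $\mathbb{F}_p$-linear. This follows directly from additivity of $\phi$ together with $c^{p^j}=c$ for $c\in\mathbb{F}_p$.
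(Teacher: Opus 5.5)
Your proof is correct and follows essentially the same route as the paper: additivity of $z\mapsto z^p$ in characteristic $p$, then $z^{p^m}=z$ to get $\Tr(z)^p=\Tr(z)$, then identifying the roots of $X^p-X$ with $\mathbb{F}_p$. Your degree-count step (a degree-$p$ polynomial has at most $p$ roots, so there are no roots outside $\mathbb{F}_p$) supplies a justification the paper leaves implicit when it concludes from the elements of $\mathbb{F}_p$ satisfying the equation that they are the only roots.
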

\textbf{Proof.}  This is a standard theorem in finite-field theory, which can be found in standard texts \cite{yuan2014further, lubeck2023standard, lidl1997finite}. Here, for completeness, we recapitulate the proof from \cite{yuan2014further}. Let:
\begin{align}
    T = \Tr(z) =  z + z^p + z^{p^2} + \cdots + z^{p^{m-1}}
\end{align}
Because the field $\mathbb{F}_q \equiv \mathbb{F}_{p^m}$ has characteristic $p$, which means:
\begin{align}
    (a+b)^p = a^p + b^p
\end{align}
Therefore:
\begin{align}
    T^p &= (  z + z^p + z^{p^2} + \cdots + z^{p^{m-1}})^p \\
    &= z^p + z^{p^2} + \cdots + z^{p^m}
\end{align}
There is a fundamental property of the finite field $\mathbb{F}_{p^m}$, that is $z^{p^m} = z$ for all $ z \in \mathbb{F}_{p^m}$. So:
\begin{align}
    T^p = z^p + z^{p^2} + \cdots + z^{p^{m-1}} + z 
\end{align}
which is exactly $T$. So we have:
\begin{align}
    T^p = T
\end{align}
The elements of $\mathbb{F}_{p^m}$ that satisfy the above equation is the roots of:
\begin{align}
    x^p - x = 0
\end{align}
We observe that all the elements of $\mathbb{F}_p =\{ 0,1,2, ..., p-1\}$ satisfy it. Thus, the root is $\mathbb{F}_p$, which implies that $T  \in \mathbb{F}_p$. Hence, for any $z \in \mathbb{F}_{p^m}$, it holds that $\Tr(z) \in \mathbb{F}_p$.  $\blacksquare$ \\

\paragraph{Examples.} The result above shows that $v_j$ belongs to the finite field $\mathbb{F}_p$. In the following, we give some concrete examples. Take
\begin{align}
    p=2, m= 2
\end{align}
which means that we consider the binary field $\mathbb{F}_2 = (0,1)$. Since the irreducible polynomial $f(x)$ has degree $m$, then the extension field $\mathbb{F}_4 := \mathbb{F}_2[x]/f(x)$ has $p^m = 2^2 = 4 $ elements. For simplicity, we choose $f(x) = x^2 + x +1$. As discussed in the previous appendix, this choice leads to four different equivalent classes:
\begin{align}
    \mathbb{F}_4 = \{  0,1,x, x+1 \}
\end{align}
where again we abuse the notation of $x$. Suppose that we choose:
\begin{align}
    \alpha = x, \beta = x + 1
\end{align}
The pseudorandom entries are:
\begin{align}
    v_j = \Tr ( \beta \alpha^j )
\end{align}
As $m=2$, we have that for any $z \in \mathbb{F}_4$, $\Tr(z) = z + z^2$. Now we calculate the trace of all the elements of $\mathbb{F}_4$:
\begin{align}
    \begin{split}
        \Tr(0) &= 0 \\
        \Tr(1) &= 1 + 1^2 = 0 \\
        \Tr(x) &= x + x^2 = x+ x + 1 = 1\\
        \Tr(x+1) &= (x+1) + (x+1)^2 = x+1 + x = 1
    \end{split}
\end{align}
Now we calculate $v_j$'s:
\begin{align}
    v_0 &= \Tr\big( \beta \alpha^0  \big) = \Tr\big( (x+1) 1  \big) = \Tr( x+1) = 1 \\
    v_1 &= \Tr\big( \beta \alpha^1   \big) = \Tr\big( (x+1)x  \big) = \Tr (1) = 0\\
    v_2 &= \Tr\big(  \beta \alpha^2   \big) = \Tr\big(  (x+1)x^2 \big) =\Tr(x) = 1 \\
    v_3 &= \Tr\big(  \beta \alpha^3 \big) = \Tr\big(  (x+1)x^3 \big) = \Tr(x+1) = 1
\end{align}
Therefore, all the values of $v_j$'s belong to $\mathbb{F}_2$. In addition, the trace function is also linear over the base field:
\begin{proposition}
    Let $z_1,z_2 \in \mathbb{F}_q$, then $\Tr(z_1+z_2) = \Tr(z_1) + \Tr(z_2)$.
\end{proposition}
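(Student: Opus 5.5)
The plan is to prove additivity directly from the definition $\Tr(z) = z + z^p + \cdots + z^{p^{m-1}}$, using the fact that $\mathbb{F}_q$ has characteristic $p$. The key ingredient is the Frobenius identity $(a+b)^p = a^p + b^p$ for all $a,b\in\mathbb{F}_q$. This identity was already used in the proof of the previous proposition.

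I will justify it briefly. Expand $(a+b)^p$ by the binomial theorem, which is valid because $\mathbb{F}_q$ is a commutative ring. For $0<k<p$, the coefficient $\binom{p}{k} = \frac{p!}{k!(p-k)!}$ is divisible by $p$, since $p$ is prime and divides the numerator but not the denominator. So these middle terms vanish in characteristic $p$. Iterating the identity by induction on $i$ gives $(a+b)^{p^i} = \big((a+b)^{p^{i-1}}\big)^p = \big(a^{p^{i-1}} + b^{p^{i-1}}\big)^p = a^{p^i} + b^{p^i}$ for every $i\ge 0$.

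With this in hand, the statement follows by summing over $i=0,\dots,m-1$. We have
\begin{align}
\Tr(z_1+z_2) = \sum_{i=0}^{m-1}(z_1+z_2)^{p^i} = \sum_{i=0}^{m-1}\big(z_1^{p^i}+z_2^{p^i}\big) = \Tr(z_1)+\Tr(z_2),
\end{align}
using commutativity and associativity of addition to regroup.

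I will also note $\mathbb{F}_p$-linearity, since this is what the subsequent $\epsilon$-bias argument needs for sums like $\sum_j y_j v_j = \Tr\big(\beta\sum_j y_j\alpha^j\big)$. For $c\in\mathbb{F}_p$ we have $c^{p^i}=c$, because $c^p=c$ by Fermat's little theorem. Hence $\Tr(cz) = \sum_i c^{p^i} z^{p^i} = c\,\Tr(z)$.

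The only step requiring care is the vanishing of the binomial coefficients modulo $p$, together with the induction to higher Frobenius powers. Both are elementary, so I expect no real obstacle.
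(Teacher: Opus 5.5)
Your proof is correct and matches the standard argument: the paper gives no proof of its own and cites Lidl--Niederreiter, where additivity of the trace is derived exactly as you do it. That argument iterates the Frobenius identity $(a+b)^p=a^p+b^p$ (the same identity the paper uses to show $\Tr(z)\in\mathbb{F}_p$) to get $(a+b)^{p^i}=a^{p^i}+b^{p^i}$, then sums over $i$; your added remark that $c^{p^i}=c$ for $c\in\mathbb{F}_p$ also correctly proves the paper's next proposition, $\Tr(cz)=c\,\Tr(z)$.
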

A further scalar-linearity property is as follows:
\begin{proposition}
    Let $a \in \mathbb{F}_p, z \in \mathbb{F}_q$, then $\Tr(az) = a\Tr(z)$.
\end{proposition}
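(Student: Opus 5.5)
The plan is to expand the definition of the finite-field trace and pull the scalar out of each Frobenius power. By definition,
\begin{align}
    \Tr(az) = \sum_{i=0}^{m-1} (az)^{p^i} .
\end{align}
Since $\mathbb{F}_q$ is a field, and in particular a commutative ring, powers distribute over products. Hence $(az)^{p^i} = a^{p^i} z^{p^i}$ for every $i$. This step needs nothing about the characteristic; it is only commutativity of multiplication.

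The key step is to show that $a^{p^i} = a$ for every $a \in \mathbb{F}_p$ and every $i \geq 0$. I would get this from Fermat's little theorem in the form $a^p = a$ for all $a \in \mathbb{F}_p$. The case $a=0$ is trivial. For $a \neq 0$, the nonzero elements form a multiplicative group of order $p-1$, so $a^{p-1}=1$. Equivalently, one can reuse the observation already made in the proof that $\Tr(z)\in\mathbb{F}_p$: every element of $\mathbb{F}_p$ is a root of $x^p - x$. A short induction on $i$ then gives $a^{p^{i+1}} = (a^{p^i})^p = a^p = a$.

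One point needs checking. Here $\mathbb{F}_p$ sits inside $\mathbb{F}_q = \mathbb{F}_p[x]/f(x)$ as the constant classes. So the identity $a^p = a$ computed in $\mathbb{F}_p$ is the same identity in $\mathbb{F}_q$, because the inclusion is a ring homomorphism. Substituting, we get
\begin{align}
    \Tr(az) = \sum_{i=0}^{m-1} a\, z^{p^i} = a \sum_{i=0}^{m-1} z^{p^i} = a \Tr(z),
\end{align}
using distributivity in $\mathbb{F}_q$.

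I expect no real obstacle; the only substantive ingredient is $a^{p^i}=a$ on the prime subfield, handled as above. Combined with the additivity proposition stated just before, this shows that $\Tr$ is an $\mathbb{F}_p$-linear map $\mathbb{F}_q \to \mathbb{F}_p$. That linearity is presumably what the subsequent bias analysis of $v_j = \Tr(\beta\alpha^j)$ will use: it lets one write $u^T y = \Tr\big(\beta \sum_j y_j \alpha^j\big)$.
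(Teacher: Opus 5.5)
Your proof is correct: expanding the trace, using $(az)^{p^i}=a^{p^i}z^{p^i}$, and applying Fermat's $a^{p^i}=a$ on the prime subfield is exactly the standard argument. The paper gives no proof of its own here and defers to Lidl--Niederreiter \cite{lidl1997finite}, whose proof is this same argument.
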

The proofs of these properties can be found in \cite{lidl1997finite}.

\paragraph{$v_j$'s form $\epsilon$-biased distribution.} 
The following proposition shows that all the $v_j$'s (for $j=0,1,...,N-1$) are drawn from the $\epsilon$-biased distribution.
\begin{proposition}
    $v_0,v_1,...,v_{N-1}$ belong to the $\epsilon$-biased distribution
\end{proposition}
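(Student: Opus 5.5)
Fix a nonzero $y=(y_0,\dots,y_{N-1})\in\mathbb{F}_p^N$ and view $v=(v_0,\dots,v_{N-1})$ as the random vector determined by the uniform, independent choice of $\alpha,\beta\in\mathbb{F}_q$. The plan is to rewrite $y^Tv$ as a single trace and split the probability according to whether the inner polynomial vanishes. By the additivity and $\mathbb{F}_p$-linearity of $\Tr$ (the two propositions above),
\begin{align}
y^T v=\sum_{j=0}^{N-1} y_j \Tr(\beta\alpha^j)=\Tr\big(\beta\, f_y(\alpha)\big),\qquad f_y(x):=\sum_{j=0}^{N-1} y_j x^j\in\mathbb{F}_p[x]\subseteq\mathbb{F}_q[x].
\end{align}
Since $y\neq \mathbf 0$, $f_y$ is a nonzero polynomial of degree at most $N-1$.

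\textbf{Case 1: $f_y(\alpha)=0$.} Then $y^Tv=0$ with certainty. A nonzero polynomial of degree at most $N-1$ has at most $N-1$ roots in $\mathbb{F}_q$, so this event has probability at most $(N-1)/q$.

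\textbf{Case 2: $f_y(\alpha)=c\neq 0$.} Then $\beta c$ is uniform on $\mathbb{F}_q$, because multiplication by $c$ is a bijection. The key fact needed is that $\Tr:\mathbb{F}_q\to\mathbb{F}_p$ is surjective, with every fibre of size $q/p=p^{m-1}$. I would prove this as follows. $\Tr$ is $\mathbb{F}_p$-linear and lands in $\mathbb{F}_p$. It is not identically zero, since as a polynomial of degree $p^{m-1}<q$ it cannot vanish on all $q$ elements. Hence it is onto a one-dimensional space with kernel of size $p^{m-1}$. Consequently $\Pr[y^Tv=0\mid \text{Case 2}]=1/p$ exactly.

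Combining the two cases, with $\theta:=\Pr[f_y(\alpha)=0]\le (N-1)/q$,
\begin{align}
\Pr[y^Tv=0]=\theta+(1-\theta)\tfrac1p=\tfrac1p+\big(1-\tfrac1p\big)\theta,
\end{align}
so
\begin{align}
\Big|\Pr[y^Tv=0]-\tfrac1p\Big|\le \big(1-\tfrac1p\big)\frac{N-1}{p^m}\le\big(1-\tfrac1p\big)\frac{N-1}{p^m-1}\le \big(1-\tfrac1p\big)\epsilon
\end{align}
by the choice of $m$. This is exactly the definition of an $\epsilon$-biased distribution (Definition~\ref{eqn: epsilonbiaseddistribution}).

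The main obstacle is the surjectivity and equidistribution of the trace. The root-count and linearity steps are routine, but the nonvanishing of $\Tr$ needs the degree argument above and should be stated explicitly. It also matters that the $y_j\in\mathbb{F}_p$ commute with the trace, which is precisely the scalar-linearity proposition.
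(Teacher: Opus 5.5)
Your proof is correct and follows the same route as the paper's. You rewrite $y^Tv$ as $\Tr(\beta f_y(\alpha))$ by linearity of the trace, bound $\Pr[f_y(\alpha)=0]$ by root counting, and otherwise use that $\beta f_y(\alpha)$ is uniform and the trace is balanced. Your version is slightly sharper and more self-contained: you prove the balancedness of $\Tr$ rather than citing it, and your exact identity $\Pr[y^Tv=0]=\tfrac1p+(1-\tfrac1p)\theta$ yields the $(1-\tfrac1p)\epsilon$ bound (in absolute value) for the original $\epsilon$, whereas the paper only obtains $\tfrac1p+\epsilon$ and then rescales $\epsilon$.
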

\textbf{Proof.} This has been proven in \cite{alon1992simple} (in fact, in this work they consider binary field, $p=2$, but the generalization to arbitrary prime $p$ is straightforward), for which we recapitulate as follows. If we take any nonzero $y = (y_0,y_1,...,y_{N-1}) \in \mathbb{F}_p^N$ and consider the inner product (again we implicitly understand that we are working modulo $p$):
\begin{align}
    v^T y = \sum_{j=0}^{N-1} v_j y_j = \sum_{j=0}^{N-1} y_j \Tr \big(  \beta \alpha^j \big) = \Tr  \big(  \beta\sum_{j=0}^{N-1}  \alpha^j y_j \big)
\end{align}
Defining $P_y(\alpha) = \sum_{j=0}^{N-1} y_j \alpha^j$, which is a polynomial of degree at most $N-1$. If $P_y(\alpha) \neq 0$, then the multiplication by $P_y(\alpha)$ is a bijection of $\mathbb{F}_p$, so the uniformity of $\beta$ over $\mathbb{F}_q$ implies that $\beta P_y(\alpha)$ is uniform over $\mathbb{F}_q$. The trace map $\Tr: \mathbb{F}_q \longrightarrow \mathbb{F}_p$ is balanced, so:
\begin{align}
    \text{Pr} [ v^T y  =  0 | P_y(\alpha)  \neq 0] = \frac{1}{p}
\end{align}
When $P_y(\alpha) = 0$, then apparently $v^T y = 0$. Since $P_y(\alpha)$ has at most $N-1$ roots, we have:
\begin{align}
    \text{Pr} [ P_y(\alpha) =0] \leq \frac{N-1}{q-1} \leq \epsilon
\end{align}
Thus, for every nonzero direction:
\begin{align}
    \text{Pr} [ v^T y =0 ] \leq \frac{1}{p} + \epsilon
\end{align}
By a trivial rescaling $\epsilon = (1- \frac{1}{p}) \epsilon'$, then we have:
\begin{align}
     \text{Pr} [ v^T y =0 ] \leq \frac{1}{p} + \big(1- \frac{1}{p}\big) \epsilon'
\end{align}
which means that entries of $v$, $v_0,v_1,...,v_{N-1}$ are drawn from $\epsilon'$-biased distribution. $\blacksquare$

\section{Reversibly computing $v_j$}
\label{sec: reversiblycomputing}
In this section, we discuss how to implement an efficiently and reversibly classical circuit that computes $v_j := \Tr(\beta \alpha^j)$. The computation relies on a few recipes.

\paragraph{Multiplication of a field element by a field element is essentially a matrix multiplication.} Let $q=p^m$ as above and $c,z \in \mathbb{F}_{q}$ are two field elements. Then the multiplication $cz $ is, in fact, an $\mathbb{F}_p$-linear transformation. This can be seen by observing that an element of $\mathbb{F}_q$ has the form:
\begin{align}
      z_0 + z_1 x + z_2 x^2 + \cdots + z_{m-1}x^{m-1}
\end{align}
with $z_0, z_1, z_2, ..., z_{m-1} \in \mathbb{F}_p$. Then a multiplication by another field element $c$ would produce another field element of the same form, but the coefficients change. So if we treat each field element by a $m$-dimensional vector, then the resulting field element is another $m$-dimensional vector, and thus the transformation is essentially a multiplication by a $m \times m$ matrix (of course, eventually we need to take modulo $p$). In \cite{beauregard2003quantum}, the authors construct an explicitly reversible quantum circuit that performs the following transformation:
\begin{align}
    \ket{z} \longrightarrow \ket{cz}
\end{align}
where $\ket{z} := \ket{z_0, z_1, ..., z_{m-1}} \equiv \ket{z_0}\ket{z_1}\cdots\ket
z_{m-1}$ is the string of $m \log p$-bits corresponding to $z$, $c$ is classically known and $\ket{cz}$ is another string of $m \log p$-bits corresponding to $cz$. According to \cite{beauregard2003quantum}, the complexity of this circuit is $\mathcal{O}( m^2 \log^2 p)$.

\paragraph{Reversibly compute $\alpha^j$.} It can be seen that if $\alpha \in \mathbb{F}_{q}$ is provided (which means we know its coefficients), then by keeping using the method of \cite{beauregard2003quantum}, we can construct $\ket{\alpha^j}$. However, this procedure will take $j$ steps and therefore inefficient when $j$ is high, e.g., when $j = N$. In the following, we give a procedure that takes roughly $n=\log N$ steps. 

First, we note that $j$ is within the range $(0,N-1)$, so it can be represented by $\log N$ bits. We can write it as:
\begin{align}
    j = j_0 + 2^1 j_1 + 2^2 j_2 + \cdots 2^{n-1} j_{n-1}
\end{align}
where $j_0,j_1,...,j_{n-1} \in \{0,1\}$. Next, we have:
\begin{align}
    \alpha^j &= \alpha^{j_0 + 2^1 j_1 + 2^2 j_2 + \cdots 2^{n-1} j_{n-1} } \\
    &= \prod_{k=0}^{n-1} (\alpha^{2^k})^{j_k}
\end{align}
Since $\alpha$ is known, we can classically precompute all coefficients of $\alpha^{2^0}, \alpha^{2^1}, ..., \alpha^{2^{n-1}}$. This procedure will takes $n = \log N$ steps. We consider the following register:
\begin{align}
    \ket{j} \ket{1} = \ket{j_0 j_1 .... j_{n-1}} \ket{1}
\end{align}
where again $\ket{\alpha}$ contains the $m \log p$ bits corresponding to the coefficients of $\alpha$. Given $\alpha$ is known, this register can be efficiently prepared. Then starting from $j_0$, we use $\ket{j_0}$ as a controlled bit and use the method of \cite{beauregard2003quantum} to obtain the following transformation:
\begin{align}
    \ket{j_0 j_1 .... j_{n-1}} \ket{1} \longrightarrow \ket{j_0 j_1 .... j_{n-1}} \ket{ (\alpha^{2^0})^{j_0} 1} =  \ket{j_0 j_1 .... j_{n-1}} \ket{ (\alpha^{2^0})^{j_0}} 
\end{align}
Next, we use $\ket{j_1}$ as controlled bit and perform the transformation:
\begin{align}
    \ket{j_0 j_1 .... j_{n-1}} \ket{ (\alpha^{2^0})^{j_0}} \longrightarrow \ket{j_0 j_1 .... j_{n-1}} \ket{ (\alpha^{2^1})^{j_1}(\alpha^{2^0})^{j_0}} 
\end{align}
Continuing this process, we obtain the following:
\begin{align}
    \ket{j_0 j_1 .... j_{n-1}} \ket{(\alpha^{2^{j-1}})^{j_{n-1}} \cdots  (\alpha^{2^1})^{j_1}(\alpha^{2^0})^{j_0}} =  \ket{j_0 j_1 .... j_{n-1}}  \ket{\alpha^j} = \ket{j} \ket{\alpha^j}
\end{align}
Since each multiplication step using the method \cite{beauregard2003quantum} once, which has circuit complexity $\mathcal{O}(m^2 \log^2 p)$, so the transformation:
\begin{align}
    \ket{j} \ket{1} \longrightarrow \ket{j} \ket{\alpha^j}
\end{align}
has circuit complexity $\mathcal{O}(n m^2 \log^2 p)$. 

\paragraph{Reversible compute $\Tr( \beta \alpha^j)$.} Since $\beta \in \mathbb{F}_q$ is also known, we can use the same method of \cite{beauregard2003quantum} to construct a circuit of complexity $\mathcal{O}(m^2)$ that computes:
\begin{align}
    \ket{j} \ket{\alpha^j}  \longrightarrow \ket{j} \ket{\beta \alpha^j}
\end{align}
To compute $\Tr(\beta \alpha^j)$, we do the following steps. First, we express $\beta \alpha^j$ as:
\begin{align}
    a_0 + a_1 x + a_2 x^2 + \cdots + a_{m-1} x^{m-1}
\end{align}
Then we classically precompute $\Tr(x), \Tr(x^2), ..., \Tr(x^{m-1})$, for which the trace of $\beta \alpha^j$ can be expressed as:
\begin{align}
    \Tr(\beta\alpha^j) = a_0 + a_1 \Tr(x) + \cdots + a_{m-1} \Tr(x^{m-1}) 
\end{align}
modulo $p$. We remind that the register $\ket{\beta \alpha^j}$ contains $\ket{a_0, a_1...., a_{m-1}}$, so we can use the method of \cite{beauregard2003quantum} to obtain a quantum circuit that first implement:
\begin{align}
    \ket{j} \ket{\beta \alpha^j} \ket{0} :=  \ket{j} \ket{a_0, a_1, ..., a_{m-1}} \ket{0} \longrightarrow \ket{j} \ket{a_0, a_1\Tr(x), a_2\Tr(x^2), ...,  a_{m-1}\Tr(x^{m-1})} \ket{0}
\end{align}
In \cite{beauregard2003quantum}, the authors also construct a reversible quantum circuit that implements the following:
\begin{align}
    \ket{A} \ket{z} \longrightarrow \ket{A + z (\rm mod \ p)}\ket{z}
\end{align}
Using this circuit sequentially to our register, we obtain the sequential summation modulo $p$:
\begin{align}
    &\ket{j} \ket{a_0, a_1, ..., a_{m-1}} \ket{0} \\ 
    & \longrightarrow \ket{j} \ket{a_0, a_1\Tr(x), a_2\Tr(x^2), ...,  a_{m-1}\Tr(x^{m-1})} \ket{a_0} \\ 
    &\longrightarrow \ket{j} \ket{a_0, a_1\Tr(x), a_2\Tr(x^2), ...,  a_{m-1}\Tr(x^{m-1})} \ket{a_0 + a_1 \Tr(x) \text{ mod p}} \\
    & \longrightarrow \ket{j} \ket{a_0, a_1\Tr(x), a_2\Tr(x^2), ...,  a_{m-1}\Tr(x^{m-1})} \ket{a_0 + a_1 \Tr(x) + a_2 \Tr(x^2) \text{ mod p}} \\
    & \vdots \\
    & \longrightarrow \ket{j} \ket{a_0, a_1\Tr(x), a_2\Tr(x^2), ...,  a_{m-1}\Tr(x^{m-1})} \ket{a_0 + a_1 \Tr(x) + a_2 \Tr(x^2) + \cdots + a_{m-1} \Tr(x^{m-1}) \text{ mod p}}
\end{align}
The last register is exactly $\ket{\Tr (\beta \alpha^j)}$. The complexity of this procedure is thus $\mathcal{O}(m \log^2 p)$. Eventually, we uncompute everything, keeping only the final register that holds the value of $ \Tr (\beta \alpha^j)$.

To sum up, we have shown that the reversible quantum circuit that implements the following:
\begin{align}
    \ket{j} \ket{1}\longrightarrow \ket{j}\ket{\Tr (\beta \alpha^j)} \equiv \ket{j}\ket{v_j} 
\end{align}
can be efficiently constructed. 
The total complexity of this reversible circuit is $\mathcal{O}(nm^2 \log^2p + m \log^2 p) = \mathcal{O}(n m^2 \log^2 p)$. 

We remark that the procedure above is also the Proof of Lemma \ref{lemma: preparingstateepsilondistribution} by simply replacing $ n = \log N$. For a superposition, we can use this reversible circuit and it will enact on the superposition of states.


\section{Some graph-theoretic concepts and definitions}
\label{sec: graphtheory}
In this appendix, we summarize key concepts, notations, and definitions from graph theory that we have used in the main text. Let $G = (V,E)$ be the standard notation for the graph with the set of vertices $V$ and edges $E$ that connect a pair of vertices. 

\begin{definition}
    A graph $G$ is called \textit{chordal} if every cycle of length at least $4$ has a chord. A chord is an edge connecting two non-consecutive vertices of the cycle. 
\end{definition}
\begin{definition}
    The complement of a graph $G$, $\bar{G}$, has the same vertices $V$, but those edges in $\bar{G}$ connect vertices that are not connected in $G$. More formally:
    $$ E( \bar{G})  = \{  (u,v): u\neq v, (u,v) \neq E(G)  \} $$
\end{definition}

\begin{definition}
    A graph $G$ is co-chordal if its complement $\bar{G}$ is chordal. 
\end{definition}

\begin{definition}
    The independence complex of a graph $G$ is the simplicial complex whose simplexes are exactly the independent set of vertices of $G$.
\end{definition}
A very useful identity that can illuminate the definition above is:
$$ \text{Ind } (G) \equiv \text{Cl } (\bar{G})$$

\end{document}